\newcommand{\E}{{\mathbb E}}
\newcommand{\F}{{\mathbb F}}
\renewcommand{\P}{{\mathbb P}}
\newcommand{\C}{{\mathbb C}}
\newcommand{\R}{{\mathbb R}}
\newcommand{\Acal}{{\mathcal A}}
\newcommand{\Bcal}{{\mathcal B}}
\newcommand{\Ccal}{{\mathcal C}}
\newcommand{\Dcal}{{\mathcal D}}
\newcommand{\Fcal}{{\mathcal F}}
\newcommand{\Gcal}{{\mathcal G}}
\newcommand{\Hcal}{{\mathcal H}}
\newcommand{\Kcal}{{\mathcal K}}
\newcommand{\Lcal}{{\mathcal L}}
\DeclareMathOperator{\argmin}{arg \, min}
\DeclareMathOperator{\esssup}{ess\,sup}
\newtheorem{theorem}{Theorem}
\newtheorem{assumption}[theorem]{Assumption}
\newtheorem{definition}[theorem]{Definition}
\newtheorem{lemma}[theorem]{Lemma}
\newtheorem{proposition}[theorem]{Proposition}
\newtheorem{remark}[theorem]{Remark}
\theoremstyle{definition}
\newtheorem{example}[theorem]{Example}
\numberwithin{equation}{section}
\numberwithin{theorem}{section}
\definecolor{darkgreen}{rgb}{0,0.7,0}
\newcommand{\iii}{{\vert\kern-0.25ex\vert\kern-0.25ex\vert}}
\newcommand{\dd}[1]{\operatorname{d}\!#1}
\def \eps{\varepsilon}
\title{American options in the Volterra Heston model\footnote{The authors would like to thank Eduardo Abi Jaber for useful comments and fruitful discussions. The research of Sergio Pulido benefited from the financial support of the chairs ``Deep finance \& Statistics" and ``Machine Learning \& systematic methods in finance" of \'Ecole Polytechnique. Sergio Pulido acknowledges support by the Europlace Institute of Finance (EIF) and the Labex Louis Bachelier, research project: ``The impact of information on financial markets". The work of Elizabeth Z\'u\~niga was supported by grants from the Fondation Math\'ematique Jacques Hadamard (FMJH).}}
\author{Etienne Chevalier\thanks{Universit\'e Paris-Saclay, CNRS, Univ Evry, Laboratoire de Math\'ematiques et Mod\'elisation d'Evry (LaMME), etienne.chevalier@univ-evry.fr} \and Sergio Pulido\thanks{Universit\'e Paris-Saclay, CNRS, ENSIIE, Univ \'Evry, Laboratoire de Math\'ematiques et Mod\'elisation d'Evry (LaMME), sergio.pulidonino@ensiie.fr} \and Elizabeth Z\'u\~niga\thanks{Universit\'e Paris-Saclay, CNRS, Univ Evry, Laboratoire de Math\'ematiques et Mod\'elisation d'Evry (LaMME), elizabeth.zunigariofrio@univ-evry.fr} }
\begin{document}

\maketitle

\begin{abstract}
We price American options using kernel-based approximations of the Volterra Heston model. We choose these approximations because they allow simulation-based techniques for pricing. We prove the convergence of American option prices in the approximating sequence of models towards the prices in the Volterra Heston model. A crucial step in the proof is to exploit the affine structure of the model in order to establish explicit formulas and convergence results for the conditional Fourier--Laplace transform of the log price and an adjusted version of the forward variance. We illustrate with numerical examples our convergence result and the behavior of American option prices with respect to certain parameters of the model.

\end{abstract}


\section{Introduction}

Stochastic volatility models whose trajectories are continuous but less regular than Brownian motion, also known as rough volatility models, seem well adapted to capture stylized features of the time series of realized volatility and of the implied volatility surface. Indeed, recent statistical studies in \cite{gatheral2018volatility, fukasawa2019volatility,bennedsen2016decoupling} demonstrate that---under multiple time scales and across many markets---the time series of realized volatility oscillates more rapidly than Brownian motion. In addition, the observed implied volatility smile for short maturities is steeper than the one obtained with classical low-dimensional diffusion models. As maturity decreases, the slope at the money of the implied volatility smile obeys a power law that explodes at zero. This power law can be reproduced by rough volatility models with power kernels in the spirit of fractional Brownian motion; cf. \cite{alos2007short,bayer2016pricing,fukasawa2017short}. Furthermore, these empirical discoveries are supported by microstructural considerations because, as explained, for instance, in  \cite{jaisson2016rough,el2018microstructural}, rough volatility models appear naturally as scaling limits of microstructural pricing models with self-exciting features driven by Hawkes processes.

The aforementioned findings have motivated the study of various rough volatility models in the literature. Among these are the rough fractional stochastic volatility model \cite{gatheral2018volatility}, the rough Bergomi model \cite{bayer2016pricing}, and the fractional and rough Heston models \cite{comte2012affine,guennoun2018asymptotic,el2019characteristic}. In these models, due to the absence of the semimartingale and Markov properties, even simple tasks such as pricing European options have proven challenging. Consequently, the theory of stochastic control for rough volatility models is at an early stage. Under the rough volatility paradigm, classical control problems such as linear quadratic and optimal investment problems have only been analyzed recently, for example, in \cite{jaber2019linear} and \cite{fouque2019optimal, han2020mean, abijaberMarkowitz2020, han2021merton}, respectively. 

In this paper we tackle an optimal stopping problem, namely the problem of pricing American options in the Volterra Heston model introduced in \cite{abi2019markovian,abi2019affine}\footnote{This is an important problem as the majority of equity options are of American type. This can be confirmed, for instance, for stock options traded on Euronext, by visiting \url{https://live.euronext.com/en/products/stock-options/list}.}.  This path-dependent problem is difficult because it requires a good understanding of the conditional laws in a model where, in general, the semimartingale and Markov properties do not hold. Even though we could extend parts of the analysis to more general frameworks, we concentrate on the Volterra Heston model because in this setup---as we will explain below---we can prove the necessary convergence results.

The Volterra Heston model is a generalization of the widely known Heston model \cite{heston1993closed}. The dynamics of the spot variance in the Volterra Heston model are described by a stochastic Volterra equation of convolution type. More specifically, the spot variance process is a Volterra square root or CIR process. When the kernel appearing in the convolution is of power-type, one obtains the now well-known rough Heston model \cite{el2018perfect,el2019characteristic}. The $\Lcal^2$-regularity of the kernel in the Volterra Heston model controls the H\"older regularity of the trajectories and the steepness of the implied volatility smile for short maturities. Tractability in the Volterra Heston model is a result of a semiexplicit formula for the Fourier--Laplace transform, which resembles the formula in the classical Heston model. More precisely, the Fourier--Laplace transform can be expressed in terms of the solution to a deterministic system of convolution equations of Riccati-type. This phenomenon is a particular instance of a more general law governing the structure of the Fourier--Laplace transform of what is known as Affine Volterra Processes \cite{keller2018affine, abi2019affine, gatheral2019affine, cuchiero2020generalized}. The knowledge of the Fourier--Laplace transform in the Volterra Heston model facilitates the application of Fourier-based methods in order to price European options. This circumvents the difficulties encountered in the implementation of other popular rough volatility models, such as the rough Bergomi model, where Monte-Carlo techniques \cite{bayer2016pricing, bennedsen2017hybrid, bayer2020hierarchical} or Donsker-type theorems \cite{horvath2017functional} are employed to compute prices of European options. 

The numerical resolution of the Riccati convolution equations appearing in the expression of the Fourier--Laplace transform in the Volterra Heston model is, however, cumbersome due to the possibly exploding character of the associated kernel. In order to alleviate these numerical difficulties, the authors in \cite{callegaro2021fast} developed a fast hybrid scheme using power series expansions around zero combined with a Richardson--Romberg extrapolation method beyond the convergence interval of the power series.  Alternatively, for the rough Heston model, the author in \cite{abi2019lifting} proposed a kernel-based approximation  with a diffusion---high dimensional but parsimonious---model, named the Lifted Heston model. Despite being a semimartingale model, the Lifted Heston model is able to mimic the rough character of the trajectories and to reproduce steep volatility smiles for short maturities. The approximation of the rough Heston model with the Lifted Heston model is an example of a more general approximation technique of Volterra processes via an approximation of the kernel in \cite{abi2019multifactor} originally inspired by \cite{coutin1998fractional,carmona2000approximation, harms2019affine}. The convergence of the approximating processes and the prices of European-type options is guaranteed by stability results proven in \cite{abi2019multifactor} and in a more general framework in \cite{jaber2019weak}.

To price American options, and inspired by the approach in \cite{abi2019lifting, abi2019multifactor}, we draw upon kernel-based approximations of the Volterra Heston model. In the context of the rough Heston model where the kernel is of power-type, and for the approximation scheme in \cite{abi2019lifting}, the approximating models are high dimensional-diffusion models where classical simulation-based techniques, such as the Longstaff Schwartz algorithm \cite{longstaff2001valuing}, can be implemented. Within this framework, we can conduct an empirical study of the convergence and behavior of Bermudan put option prices in the approximated sequence of models. The results of our numerical experiments are summarized in Section \ref{sec:numeric}. 

Our main theoretical result is Theorem \ref{thm:main}. In the first part of the theorem, we show convergence of prices of Bermudan options in the approximating sequence of models towards the prices in the original Volterra Heston model. This result is not a direct consequence of previous stability results in \cite{abi2019multifactor,jaber2019weak} because of the path-dependent structure of the option. It is at this stage, and for purely theoretical reasons, that we exploit the affine structure of the model. More precisely, in order to prove the desired convergence results we first need to establish the convergence of the conditional Fourier--Laplace transforms. Once the convergence of the Bermudan option prices is established---and using classical arguments---we can prove, in the second part of Theorem \ref{thm:main}, the convergence of American option prices by approximating them with Bermudan option prices.

It is important to mention at this point that there exist other studies of optimal stopping and American option pricing in rough or fractional models; see, for instance, \cite{horvath2017functional, becker2019deep, bayer2020pricing, goudenege2020machine}. To understand the novelty of our work it is crucial to point out that, in general, there are two levels of approximation in the resolution of an optimal stopping problem using a probabilistic approach: 
\begin{itemize}
\item[(i)] First, the model has to be  approximated with simpler models where the trajectories can be simulated or where prices of American options can be computed more easily. For classical diffusion models this could correspond to a classical Euler scheme for simulation or a tree-based discrete approximation. Under rough volatility, simulation is cumbersome due to the non-Markovianity of the model. There is not a unified theory about how this approximation and simulation have to be performed. For instance, in the rough Bergomi model in order to simulate the volatility process one could use hybrid schemes \cite{bennedsen2017hybrid}. These schemes correspond to an approximation of the power kernel by concentrating on its behavior around zero and performing a stepwise approximation away from zero. But we could also imagine schemes relying on an approximation of the fractional kernel in terms of a sum of exponentials as in \cite{carmona2000approximation, harms2019affine}. Other recent studies in this direction are \cite{bayer2020weak, alfonsi2021approximation, bayer2021makovian, romer2020hybrid}. In this work, for our numerical illustrations, we use the approximation scheme of \cite{abi2019lifting,abi2019multifactor}, based on an approximation of the kernel using a sum of exponentials. Regarding the approximation via discrete-type models, in \cite{horvath2017functional} the authors prove a Donsker-type theorem for certain rough volatility models and apply it to perform tree-like approximations. These approximations allow them to develop tree-based algorithms, as opposed to simulation-based techniques, to price American options. The convergence of the American option prices computed on the approximating trees towards prices in the limiting rough models, however, is not the main goal of the study. 
\item[(ii) ] The second approximation occurs at the level of the resolution of the optimal stopping problem for the approximated model. In the approximated model, classical techniques such as the Longstaff Schwarz algorithm can be difficult to implement because of the high dimensionality of the model. It is at this stage that recent studies propose novel approaches, including techniques relying on neural networks \cite{goudenege2020machine, lapeyre2019neural}, to ease the implementation. It is also important to mention at this point the study in \cite{bayer2020pricing2}, where the authors propose an approximation of American option prices using penalized versions of the backward stochastic partial differential equation (BSPDE) satisfied by the value function of the problem. A deep learning-based method is used to approximate the solutions of these penalized BSPDEs.
\end{itemize}
The present paper does not focus on the second level of the approximation. For this part, in our numerical experiments we employ classical simulation-based techniques and, in particular, the Longstaff Schwarz algorithm over a low-dimensional space of functions. Our study mainly focuses on the first level of the approximation. More precisely, we concentrate on the convergence of the prices in the approximating model towards the prices in the limiting Volterra model. This point has not been addressed in the previous literature and is what distinguishes our paper from other papers on American options under the rough volatility paradigm. To prove this convergence in our framework and with our kernel-based approximation approach, we appeal to the particular affine structure of the Volterra Heston model, which explains our choice of setting. One could extend some of the results to other settings as long as the results regarding the convergence of the conditional Fourier--Laplace transform remain valid. Beyond the affine paradigm, for instance for the rough Bergomi model, this question falls outside the scope of our work and is an interesting topic for future research.

The rest of this paper is organized as follows. In Section \ref{sec:setup} we introduce the setup and state our main result of convergence, namely Theorem \ref{thm:main}. Section \ref{sec:conditional} contains the results on the adjusted forward process and the conditional Fourier--Laplace transform necessary for the proof of the main theorem. The proof of the main theorem is presented in Section \ref{sec:proof}. In Section \ref{sec:numeric}, within the framework of the rough Heston model, we provide numerical illustrations of the convergence and behavior of Bermudan put option prices. Appendix \ref{sec:app:ricatti} explains some properties of the Riccati equations appearing in the expression of the conditional Fourier--Laplace transform. In Appendix \ref{sec:app:kernelapprox} we provide results on the kernel approximation which guarantee certain hypotheses appearing in our main theorem.

\medskip

\noindent{\bf Notation}

\medskip 

\noindent We denote by $\Lcal^2_{loc}$ the space of real-valued locally square integrable functions on $\R_+$.  Similarly, given $T>0$, $\Lcal^2(0,T)$ stands for the space of real-valued square integrable functions on the interval $(0,T)$. The space $\Ccal(X,Y)$, where $X,Y\subseteq \C$, is the space of continuous functions from $X$ to $Y$, with the conventions $\Ccal(X,\R)=\Ccal(X)$ and $\Ccal=\Ccal(\R_+)$. We use the same conventions for $\Ccal_b$, $\Ccal_b^2$, $\Ccal_c$, $\Hcal^{\beta}$, $\Bcal$, and $\Bcal_c$, which are the spaces of bounded continuous functions, bounded continuous functions with bounded and continuous derivatives up to order two, continuous functions with compact support, H\"older continuous functions of any order less than $\beta$, bounded functions, and bounded functions with compact support, respectively. We write $\Delta$ for the shift operator, i.e., $\Delta_{\xi}f=f(\cdot+\xi)$. For a function $h$ on $\R$ we denote its support by ${\rm supp}(h)$. Given a function $K$ and a measure $L$ of locally bounded variation, we let $K*L$ be the convolution $(K\ast L)(t)=\int_{[0,t]}K(t-s)L(\dd s)$, whenever the integral is well defined.  If $F$ is a function on $\R_+$, we define $K\ast F=K\ast (F\dd s)$.

\section{Setup and main result}
\label{sec:setup}

\subsection{The model}

We consider a Volterra Heston stochastic volatility model as in \cite{abi2019markovian,abi2019affine}. In this model, under a risk-neutral measure, the asset's log price $X$ and spot variance $V$ are 
\begin{equation}
\label{defXV}
\begin{aligned}
X_t&=X_0+\int_0^t \left(r-\frac{V_s}{2}\right) \dd s+\int_0^t\sqrt{V_s}\left( \rho \dd W_s+\sqrt{1-\rho^2}\dd W^{\perp}_s\right),\\
V_t&=v_0(t)-\lambda \int_0^t K(t-s)V_s\dd s+\eta\int_0^t K(t-s)\sqrt{V_s} \dd W_s.
\end{aligned}
\end{equation}
In these equations, $X_0\in\R$ is the initial log price, $(W,W^{\perp})$ is a two-dimensional Brownian motion, $r$ is the risk-free rate, and $\rho\in [-1,1]$ is a correlation parameter. The variance process $V$ is a Volterra square root process. The constant $\lambda\geq 0$ is a parameter of mean reversion speed and $\eta\geq 0$ is the volatility of volatility. The kernel $K$ is in $\Lcal^2_{loc}$ and the function $v_0$ is in $\Ccal$. Observe that---for fixed $X_0$, interest rate $r$, and correlation parameter $\rho$---the log price process $X$ is completely determined by the variance process $V$ and the Brownian motion $(W,W^{\perp})$. Proposition \ref{prop:existuniq} gives sufficient conditions ensuring the existence and uniqueness of weak solutions to the stochastic Volterra equation of the variance process. 

Following the setting in \cite{abi2019affine}, we introduce a subset $\mathcal K$ of $\Lcal^{2}_{loc}$ in which we will consider the kernels.
\begin{definition}
\label{def:Kcal}
Let $K\in \Lcal^2_{loc}$. We write $K\in\mathcal K$ if the following holds:
\begin{enumerate}
\item\label{def:Kcal:1} There exist a constant $\gamma\in (0,2]$ and a locally bounded function $c_{K}:\R_+\to\R_+$ such that 
\begin{equation}\label{eq:Kcal_c}
\int_{0}^{\xi} |K(t)|^2 \dd t + \int_{0}^{T-\xi}  |K(t+\xi)-K(t)|^2 \dd t  \leq  c_K(T) \xi^\gamma
\end{equation}
for every $T>0$ and $0< \xi\leq T$.
\item \label{def:Kcal:2} $K$ is nonidentically zero, nonnegative, nonincreasing, continuous on $(0,\infty)$, and admits a so-called resolvent of first kind $L$.\footnote{This is a real-valued measure $L$ of locally bounded variation on $\mathbb{R_+}$ such that $K \ast L= 1.$} In addition, $L$ is nonnegative and
$$\textrm{the function }s\mapsto L([s,s+t])\textrm{ is nonincreasing on }\R_+$$
for every $t>0$.
\end{enumerate}
\end{definition}
Inspired by \cite{abi2019markovian}, we specify the space of functions in which we will take the functions $v_0$. For a given kernel $K\in\Kcal$, with associated constant $\gamma$ as in \eqref{eq:Kcal_c}, let
\begin{equation}\label{eq:GK}
\mathcal{G}_K  =  \{    g\in\mathcal{H}^{\gamma/2}:\   g(0)\ge 0, \  \Delta_{\xi} g - ( \Delta_{\xi}  K \ast L)(0) g  - \dd \,( \Delta_{\xi}  K \ast L) \ast g\ge 0 \textrm{ for all $\xi\ge 0$}\}.
\end{equation}
The space $\Gcal_K$ is stochastically invariant with respect to the adjusted version of the forward variance defined in Section \ref{sec_adjusted_forward_process}, and it plays a crucial role in our arguments.

Throughout our study we will make use of the following assumption.
\begin{assumption}\label{assump:kernelsv_0_original}
The kernel $K$ and the function $v_0$ satisfy the following:
\begin{enumerate}
\item\label{assump:kernelsv_0_original:1} $K\in\Kcal$ and $\Delta_{\xi} K$ satisfies \ref{def:Kcal:2} in Definition~\ref{def:Kcal} for all $\xi\ge 0$.
\item\label{assump:kernelsv_0_original:2} $v_0\in\Gcal_K$.
\end{enumerate}
\end{assumption}

The existence and uniqueness in law for the stochastic Volterra equation of the variance process in \eqref{defXV} is guaranteed by the following proposition.

\begin{proposition}\label{prop:existuniq}
Suppose that Assumption \ref{assump:kernelsv_0_original} holds. Then the stochastic Volterra equation for the variance process $V$ in \eqref{defXV} has a unique $\R_+$--valued weak solution. Furthermore, the trajectories of $V$ belong to $\mathcal{H}^{\gamma/2}$ and given $p\ge 1$,
\begin{equation}\label{eq:boundmoments}
\sup_{t\in[0,T]} \E[|V_t|^p]\leq c,\quad T>0,
\end{equation}
where $c<\infty$ is a constant that only depends on $p,T,\lambda,\eta, \gamma, c_K$, and $\|v_0\|_{\Ccal[0,T]}$.
\end{proposition}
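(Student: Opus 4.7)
The plan is to combine three classical ingredients: weak existence via an approximation of $K$ by smooth (Markovian) kernels, $L^p$ and Hölder estimates via the Burkholder--Davis--Gundy (BDG) inequality together with a convolution Gronwall lemma, and uniqueness in law via the exponential-affine Fourier--Laplace transform of $V$.

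For existence I would approximate $K$ by smooth completely monotone kernels $K^n$ (for instance the finite sums of exponentials of Appendix~\ref{sec:app:kernelapprox}) and $v_0$ by a matching sequence $v_0^n\in\Gcal_{K^n}$. For each $n$ the variance process in \eqref{defXV} admits a finite-dimensional Markovian lift as in \cite{abi2019lifting,abi2019multifactor}, so standard CIR theory produces a non-negative weak solution $V^n$. The uniform moment and Hölder estimates derived below, together with the Kolmogorov--Chentsov criterion, show that $(V^n)$ is tight in $\Ccal([0,T],\R_+)$. Passing to the limit in each term of \eqref{defXV} -- the stochastic convolution being handled by a stochastic-Fubini argument and the $\Lcal^2$-continuity of convolution with $K$ -- identifies any limit point as a weak solution with non-negative trajectories.

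For the moment bound \eqref{eq:boundmoments}, first take expectations in \eqref{defXV} (a localization argument making the stochastic convolution a true martingale at bounded stopping times) to obtain the linear Volterra inequality
$$\E[V_t]\le v_0(t)+\lambda\int_0^t K(t-s)\E[V_s]\dd s,$$
which is controlled uniformly on $[0,T]$ by the non-negative resolvent of $\lambda K$. For $p>1$, BDG applied for each fixed $t$ to the martingale $s\mapsto\int_0^s K(t-u)\sqrt{V_u}\dd W_u$, combined with Jensen's inequality, bounds $\E[V_t^p]$ by $c_p\|v_0\|_{\Ccal[0,T]}^p$ plus a convolution of $K$ against $\E[V_\cdot^p]$; a generalized Gronwall lemma for $\Lcal^2_{loc}$ kernels then closes the estimate. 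Feeding this uniform bound back into $\E[|V_t-V_s|^p]$ and invoking \eqref{eq:Kcal_c} to control the $\Lcal^2$ norms of $K(t-\cdot)-K(s-\cdot)$ on $[0,s]$ and of $K$ on $[s,t]$ gives $\E[|V_t-V_s|^p]\le c|t-s|^{p\gamma/2}$, so that the Kolmogorov continuity theorem places the trajectories in $\Hcal^{\gamma/2}$.

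The main obstacle is uniqueness in law, since $V$ is in general neither Markovian nor a semimartingale, so pathwise-uniqueness approaches fail. I would exploit the affine structure: for every $u\in\C$ with $\Re u\le 0$ and every admissible bounded continuous test function $f$, the Laplace functional $\E[\exp(u V_t+\int_0^t f(t-s)V_s\dd s)]$ admits a closed-form representation in terms of the unique global solution $\psi$ of a Riccati--Volterra equation driven by $K$ (cf.\ Appendix~\ref{sec:app:ricatti}). A monotone-class argument then extends this to the joint Laplace transform of $(V_{t_1},\dots,V_{t_n})$, thereby determining all finite-dimensional distributions of $V$, and by path-continuity the law of $V$ on $\Ccal([0,T],\R_+)$ itself. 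The joint law of $(X,V,W,W^\perp)$ follows because, once the variance path and the driving Brownian motions are fixed, $X$ is an explicit Wiener integral against $(W,W^\perp)$.
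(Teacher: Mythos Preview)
Your broad strategy---existence via smooth-kernel approximation and tightness, moment bounds via BDG plus a convolution Gronwall inequality, uniqueness via the affine Fourier--Laplace transform---is correct and is essentially a reconstruction of the results the paper simply cites from \cite{abi2019markovian,abi2019affine}. So for existence, H\"older regularity, and weak uniqueness there is nothing to object to beyond the level of detail.

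The one substantive point you miss is precisely the novelty of the proposition: the moment bound must depend on $K$ \emph{only through} $\gamma$ and $c_K$, not through finer features of the kernel. Your Gronwall step does not deliver this. A convolution Gronwall inequality of the form $u\le A + k*u$ bounds $u$ in terms of the resolvent of $k$, and the resolvent's norm on $[0,T]$ is in general \emph{not} controlled by $\|k\|_{L^1(0,T)}$ or $\|k\|_{L^2(0,T)}$ alone; two kernels with the same $L^2$ norm can have very different resolvents. So ``a generalized Gronwall lemma for $\Lcal^2_{loc}$ kernels then closes the estimate'' yields a finite constant for each fixed $K$, but not one that manifestly depends only on $\gamma$ and $c_K$. (This matters also for your existence argument, since tightness of the approximating sequence requires exactly this uniformity over the $K^n$.)

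The paper handles this with a short compactness argument you do not mention: following \cite[Lemma~3.1]{abi2019affine} one obtains a bound that is $L^2$-continuous in $K|_{[0,T]}$, and by the Fr\'echet--Kolmogorov theorem the set of non-increasing kernels satisfying \eqref{eq:Kcal_c} for fixed $\gamma,c_K$ is relatively compact in $L^2(0,T)$; maximizing the (continuous) bound over this compact set gives the desired uniform constant. An alternative repair of your direct approach is available: since $K$ is non-increasing, \eqref{eq:Kcal_c} yields the pointwise estimate $K(t)^2 t\le\int_0^t K(s)^2\,\dd s\le c_K(T)t^\gamma$, hence $K(t)\le c_K(T)^{1/2}t^{(\gamma-1)/2}$, after which a fractional (Henry-type) Gronwall lemma gives constants depending only on $c_K,\gamma,T$. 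Either way, this is the step that needs to be made explicit.

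A minor point: in your first-moment estimate the drift carries the wrong sign. Since $\lambda,K,V\ge 0$ the correct identity $\E[V_t]=v_0(t)-\lambda\int_0^t K(t-s)\E[V_s]\,\dd s$ already gives $\E[V_t]\le v_0(t)\le\|v_0\|_{\Ccal[0,T]}$ without any Gronwall argument.
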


\begin{proof}
This result follows from \cite[Theorems 2.1 and 2.3]{abi2019markovian}, with the exception of the last assertion on the bound \eqref{eq:boundmoments}. Following the argument in the proof of \cite[Lemma 3.1]{abi2019affine}, this bound can be shown to depend on $p,T,\lambda,\eta$, $\|v_0\|_{\Ccal[0,T]}$, and $L^2$-continuously on $K|_{[0,T]}$. Note that, thanks to the Fr\'echet--Kolmogorov theorem, the set of restrictions $K|_{[0,T]}$ of nonincreasing kernels satisfying the property \eqref{eq:Kcal_c} for a given $c_K$ and $\gamma$ is relatively compact in $L^2(0,T)$. Maximizing the bounds over all such $K$ yields a bound $c<\infty$ that only depends on $p,T,\lambda,\eta, \gamma, c_K$ and $\|v_0\|_{\Ccal[0,T]}$.
\end{proof}
 
The theoretical results of this study are stated for general kernels $K$ and functions $v_0$ satisfying Assumption \ref{assump:kernelsv_0_original}. This is convenient in order to keep the notation simple. It is also in tune with forward-type stochastic volatility models, such as the rough Bergomi model \cite{bayer2016pricing}. Indeed, thanks to \eqref{eq:boundmoments}, taking expectations in the equation for the variance process in \eqref{defXV} yields the following relation between the function $v_0$ and the initial forward-variance curve $(\E[V_t])$:
\[
v_0(t)=\E[V_t]+\lambda\int_0^t K(t-s)\E[V_s] \dd s.
\]

For the numerical illustrations in Section \ref{sec:numeric} we will use the setting of the rough Heston model \cite{el2019characteristic}, which we summarize in the following example.

\begin{example}\label{ex:roughHeston1}
In the rough Heston model, the kernel $K$ is a fractional kernel 
\begin{equation}\label{frackernel}
K(t)=\frac{t^{\alpha-1}}{\Gamma(\alpha)},
\end{equation}
with $\alpha\in \left(\frac12,1\right]$ and $\Gamma(\alpha)=\int_0^{\infty}x^{\alpha-1}{\rm e}^{-x}\dd x$, and the function $v_0$ is of the form
\begin{equation}\label{eq:v0}
v_0(t)=V_0+\lambda\overline \nu \int_0^t K(s)\dd s,
\end{equation}
where $V_0\geq 0$ is an initial variance and $\overline \nu\geq 0$ is a long term mean reversion level. Assumption \ref{assump:kernelsv_0_original}, with $\gamma=2\alpha-1$, holds in this framework thanks to \cite[Examples 2.3 and 6.2]{abi2019affine} and \cite[Example 2.2]{abi2019markovian}.
\end{example}

Assume that Assumption \ref{assump:kernelsv_0_original} holds. Let $\P$ be the probability measure, and let $\F=(\mathcal F_t)$ be the filtration of the stochastic basis associated to the weak solution $(X,V)$ to \eqref{defXV}. Suppose that $f\in\Ccal_b(\R)$. Our goal is to determine the value process $(P_t)_{0\leq t\leq T}$ of the American option with payoff process $\left( f(X_t)\right)_{0\leq t\leq T}$. We know that $P$ is given by
\begin{equation}
\label{AO}
P_t= \esssup_{\tau\in\mathcal T_{t,T}}\E\left[{\rm e}^{-r(\tau-t)}f(X_{\tau})\vert\mathcal F_t\right],\quad 0\leq t\leq T,
\end{equation}
where $\E$ is the expectation with respect to $\P$ and $\mathcal T_{t,T}$ denotes the set of $\mathbb F$-stopping times taking values in $[t,T]$. In order to compute American option prices, the financial model has to be approximated by more tractable models. In this work, we will consider approximations of the Volterra Heston model resulting from $\Lcal^2$-approximations of the kernel.  In the next section, we describe the approximation procedure.

\subsection{Approximation of the kernel and the Volterra Heston model}
We consider a sequence of kernels $(K^n)_{n\ge 1}$ in $\Lcal^2_{loc}$ and functions $(v_0^n)_{n\ge 1}$ in $\Ccal$. We make the following assumption.

\begin{assumption}\label{assump:kernelsv_0}
The kernels $(K^n)_{n\ge 1}$ and the functions $(v_0^n)_{n\ge 1}$ satisfy the following:
\begin{enumerate}
\item\label{assump:kernelsv_0:1} There exist a constant $\gamma\in (0,2]$ and a locally bounded function $c_{K}:\R_+\to\R_+$ such that $K^n$ satisfies \ref{def:Kcal:1} in Definition~\ref{def:Kcal} for all $n\ge 1$.
\item\label{assump:kernelsv_0:2} $\Delta_{\xi} K^n$ satisfies \ref{def:Kcal:2} in Definition~\ref{def:Kcal} for all $\xi\ge 0$ and $n\ge 1$.
\item\label{assump:kernelsv_0:3} $K^n$ converges to $K$ in $\Lcal^2_{loc}$.
\item\label{assump:kernelsv_0:4} $v^n_0\in\Gcal_{K^n}$, with the constant $\gamma$ of \ref{assump:kernelsv_0:1} for all $n\ge 1$ and $v_0^n$ converges to $v_0$ in $\Ccal$.
\end{enumerate}
\end{assumption}

According to Proposition \ref{prop:existuniq}, under Assumption \ref{assump:kernelsv_0}, for each $n\ge 1$ there exists a unique weak solution $(X^n,V^n)$ to
\begin{equation}\label{eq:approxSVE}
\begin{aligned}
X^n_t&=X_0+\int_0^t \left(r-\frac{V^n_s}{2}\right) \dd s+\int_0^t\sqrt{V^n_s}\left( \rho \dd W^n_s+\sqrt{1-\rho^2}\dd W^{n,\perp}_s\right),\\
V^n_t&=v^n_0(t)-\lambda \int_0^t K^n(t-s)V^n_s\dd s+\eta\int_0^t K^n(t-s)\sqrt{V^n_s} \dd W^n_s,
\end{aligned}
\end{equation}
where $(W^n,W^{n,\perp})$ is a Brownian motion in the corresponding stochastic basis. Furthermore, given $p\ge 1$,
\begin{equation}\label{eq:unifboundmoments}
\sup_{n\ge 1}\sup_{t\in [0,T]} \E^n[|V^n_t|^p]\leq c,\quad T>0,
\end{equation}
with a constant $c<\infty$ which can be chosen to depend only on $p,T,\lambda,\eta, \gamma, c_K$, and $\sup_{n\ge 1}\|v_0^n\|_{\Ccal[0,T]}$, and where $\E^n$ denotes the expectation in the respective probability space. Moreover, the argument in the proof of \cite[Theorem 3.6]{abi2019multifactor} shows that 
\begin{equation}\label{eq:weakconv}
(X^n,V^n)\text{ converges in law to } (X,Y),\text{ in $\Ccal(\R_+,\R^2)$,}\quad\text{as $n\to\infty$}.
\end{equation}
This is a consequence of a more general result proven in Proposition \ref{prposi_converg_un}. 

For completely monotone kernels\footnote{$K$ is completely monotone if $(-1)^m\frac{\dd^m}{\dd t^m}K(t)\ge 0$ for all nonnegative integers $m$.}, an approximation with a sum of exponentials is natural. We briefly explain this procedure  below.

\subsubsection{Approximation with a sum of exponentials}
Assume that the kernel $K$ is completely monotone. By Bernstein's theorem this is equivalent to the existence of a nonnegative Borel measure $\mu$ on $\R_+$ such that
\begin{equation}\label{eq:bernstein}
K(t)=\int_{\R_+}{\rm e}^{-x t}\mu(\dd x).
\end{equation}
As in \cite{abi2019multifactor} and \cite{carmona2000approximation, harms2019affine}, an approximation of the measure $\mu$ in \eqref{eq:bernstein} with a weighted sum of Dirac measures
\begin{equation}\label{eq:sumdiract}
\mu^n=\sum_{i=1}^n c^n_i \delta_{x_i^n}
\end{equation}
yields a candidate approximation of the kernel
\begin{equation}\label{eq:kernelsumexp}
K^n(t)=\int_{\R_+}{\rm e}^{-x t}\mu^n(\dd x)=\sum_{i=1}^n c^n_i {\rm e}^{-x^n_i t}.
\end{equation}
The kernels $(K^n)_{n\ge 1}$ are completely monotone. If, in addition, they are not identically zero, as explained in \cite[Example 6.2]{abi2019affine}, condition \ref{assump:kernelsv_0:2} in Assumption \ref{assump:kernelsv_0} holds.

The representation \eqref{eq:kernelsumexp} yields the following factor-representation for the Volterra equation \eqref{eq:approxSVE} satisfied by the variance process $V^n$: 
\begin{equation}\label{eq:factors}
\begin{aligned}
V^n_t&=v^n_0(t)+\sum_{i=1}^n c_i^n Y^{n,i}_t,\\
Y^{n,i}_t&=\int_0^t (-x_i^nY^{n,i}_s-\lambda V_s^n)\dd s+\int_0^t \eta\sqrt{V_s^n}\dd W^n_s,\quad i=1,\ldots,n.
\end{aligned}
\end{equation}
This representation is convenient because the process $(Y^{n,i})_{i=1}^n$ is an $n$-dimensional Markov process with an affine structure. This observation, together with the convergence in \eqref{eq:weakconv}, was exploited in \cite{abi2019multifactor} in order to approximate European option prices in the rough Heston models employing Fourier methods. The affine structure will also play a crucial role in our study.

We now describe a natural way to determine the weights $c_i^n$ and the points $x_i^n$. We truncate the integral in \eqref{eq:kernelsumexp} and introduce a subdivision $(\eta^n_i)_{i=0}^n$ which is a strictly increasing sequence in $[0,\infty)$. We then define $c_i^n$ and $x_i^n$ as the mass and the center of mass of the interval $[\eta_{i-1}^n,\eta_{i}^n)$, i.e.,
\begin{equation}\label{eq:weightscentermass}
\begin{aligned}
c_i^n&=\int_{[\eta_{i-1}^n,\eta_{i}^n)}\mu(\dd x)=\mu([\eta_{i-1}^n,\eta_{i}^n)),\\
c_i^n x_i^n&=\int_{[\eta_{i-1}^n,\eta_{i}^n)}x\mu(\dd x),\quad i=1,\ldots,n.
\end{aligned}
\end{equation}
In Appendix \ref{sec:app:kernelapprox} we provide sufficient conditions on the measure $\mu$ and the partitions $(\eta^n_i)_{i=0}^n$ that imply condition \ref{assump:kernelsv_0:1} in Assumption \ref{assump:kernelsv_0}.

For the numerical illustrations in Section \ref{sec:numeric} we will use a fractional kernel and a geometric partition which we present in the following example.

\begin{example}\label{ex:roughHeston2}
The fractional kernel \eqref{frackernel} is completely monotone, and in this case 
\[
\mu(\dd x)=\frac{x^{-\alpha}}{\Gamma(1-\alpha)\Gamma(\alpha)}\dd x.
\]
Following \cite{abi2019lifting}, we consider the geometric partition $(\eta_i^n)_{i=0}^n$ given by $\eta_i^n=r_n^{i-\frac{n}{2}}$ for $r_n>1$ such that 
\[
r_n\downarrow 1\quad \text {and }\quad n\log r_n\to\infty\quad\text{as $n\to\infty$}.
\] 
In this setting, the vectors $(c_i^n)$ and $(x_i^n)$ in \eqref{eq:weightscentermass} take the form
\begin{equation}\label{eq:coeffsc_ix_i}
c_i^n= \frac{(r_n^{1-\alpha}-1)}{\Gamma(\alpha) \Gamma(2-\alpha)} r_n^{(1-\alpha)(i-1-n/2)},\quad   \quad  x_i^n=\frac{1-\alpha}{2-\alpha}\frac{r_n^{2-\alpha}-1}{r_n^{1-\alpha}-1} r_n^{i-1-n/2}, \quad \quad i=1, \ldots, n.
\end{equation}
Like in Example \ref{ex:roughHeston1}, along with the kernels $(K^n)_{n\ge 1}$, we consider functions $(v_0^n)_{n\ge 1}$ of the form
\begin{equation*}
v_0^n(t)=V_0+\lambda\overline \nu \int_0^t K^n(s)\dd s.
\end{equation*}
Under this framework Assumption \ref{assump:kernelsv_0} holds\footnote{In the case of a uniform partition $\eta_i^n=i\pi_n$, conditions that ensure  \ref{assump:kernelsv_0:1}-\ref{assump:kernelsv_0:3} in Assumption \ref{assump:kernelsv_0} are studied in \cite{abi2019multifactor}.}. Indeed, Remark \ref{rem:appendix} in Appendix \ref{sec:app:kernelapprox} shows that condition \ref{assump:kernelsv_0:1} holds. As explained in  \cite[Example 6.2]{abi2019affine}, condition \ref{assump:kernelsv_0:2} is a consequence of the complete monotonicity of $K^n$, $n\ge 1$. Condition \ref{assump:kernelsv_0:3} is shown in \cite[Lemma A.3]{abi2019lifting}. This convergence and the considerations in Example \ref{ex:roughHeston1} imply condition \ref{assump:kernelsv_0:4}.  
\end{example}

With the setup of Example \ref{ex:roughHeston2}, since $K^n$ is a $\Ccal^1$-kernel and Assumption \ref{assump:kernelsv_0} holds, \cite[Proposition B.3]{abi2019multifactor} implies that, for each $n\ge 1$, there exists a unique \textit{strong solution}  $(X^n,V^n)$ to \eqref{eq:approxSVE}. Since, in addition, the factor process  $(Y^{n,i})_{i=1}^n$ in \eqref{eq:factors} is a diffusion, classical discretization schemes can be used in order to simulate the trajectories of the variance and log price. Relying on this observation, the numerical study in Section \ref{sec:numeric} uses a simulation-based method in order to approximate American option prices in the rough Heston model. The convergence of the approximated prices is a consequence of the main theoretical findings of our study, which we present in the next section.

\subsection{Main convergence result}
We start by approximating the American option value process $P$ in \eqref{AO} with Bermudan option prices. More precisely, given a nonnegative integer $N$, $T\ge 0$, a partition $(t_i)_{i=0}^N$ of $[0,T]$ with mesh $\pi_N$, and $t\in[0,T],$ we denote by $\mathcal T^N_{t,T}$ the set of $\F$-stopping times taking values in $[t,T]\cap\{t_0,...,t_N\}.$ For any $N\ge 0$, the Bermudan value process is then defined by
\begin{equation}
\label{BO}
P^N_t= \esssup_{\tau\in\mathcal T^N_{t,T}}\E\Big[{\rm e}^{-r(\tau-t)}f(X_{\tau})\vert\mathcal F_t\Big],\quad 0\leq t\leq T.
\end{equation}

In addition, given $(X^n,V^n)_{n\ge 1}$ weak solutions to \eqref{eq:approxSVE}, we define the corresponding American option prices
\begin{equation}
\label{AOn}
P^n_t= \esssup_{\tau\in\mathcal T^n_{t,T}}\E^n\left[{\rm e}^{-r(\tau-t)}f(X^n_{\tau})\vert\mathcal F^n_t\right],\quad 0\leq t\leq T
\end{equation}
and Bermudan option prices
\begin{equation}
\label{BO_N}
P^{N,n}_t=\esssup_{\tau\in\mathcal T^{N,n}_{t,T}}\E^n\Big[{\rm e}^{-r(\tau-t)}f(X^n_{\tau})\vert\mathcal F^n_t\Big],\quad\quad 0\leq t\leq T.
\end{equation}
In the previous definitions, $(\Fcal^n_t)$ is the filtration and $\E^n$ is the expectation on the stochastic basis associated to the weak solution to \eqref{eq:approxSVE}. The sets $\mathcal T^n_{t,T}$, $\mathcal T^{N,n}_{t,T}$ are defined similarly to $\mathcal T_{t,T}$, $\mathcal T^{N}_{t,T}$ on this stochastic basis.

Theorem \ref{thm:main} below is our main theoretical result. It implies, in particular, that the approximated American option prices $P_0^n$ converge to the prices $P_0$ in the original Volterra Heston model. 
\begin{theorem}\label{thm:main}
Suppose that Assumptions \ref{assump:kernelsv_0_original} and  \ref{assump:kernelsv_0} hold. Let $(X,V)$ and $(X^n,V^n)$ be the unique weak solutions to \eqref{defXV} and \eqref{eq:approxSVE}, respectively. For a function $f\in\Ccal_b(\R)$ define $P$, $P^N$, $P^{n}$, and $P^{N,n}$ as in \eqref{AO}, \eqref{BO}, \eqref{AOn}, and \eqref{BO_N}, respectively. Then
\begin{equation}
\label{convbermud}
P^{N,n}_{t_i}\textrm{ converges in law to }P^N_{t_i}\textrm{ as }n\to\infty,\quad N\geq 0,\,0\le i\le N,
\end{equation}
and 
\begin{equation}\label{convamtoam}
\lim_{n\to\infty}P_0^n=P_0.
\end{equation}
\end{theorem}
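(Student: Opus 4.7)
The plan is to establish the Bermudan convergence \eqref{convbermud} by backward induction along the grid $t_0<\cdots<t_N$, and then deduce the American convergence \eqref{convamtoam} by sandwiching between Bermudan approximations as the mesh $\pi_N\to 0$. The starting point for the inductive step is the dynamic programming principle
\[
P^N_{t_N}=f(X_{t_N}),\qquad P^N_{t_i}=\max\!\bigl\{f(X_{t_i}),\,e^{-r(t_{i+1}-t_i)}\E[P^N_{t_{i+1}}\mid\mathcal F_{t_i}]\bigr\},
\]
and the analogous recursion for $P^{N,n}_{t_i}$. The base case $i=N$ is immediate from $f\in\Ccal_b(\R)$ and the weak convergence $X^n\to X$ in $\Ccal(\R_+,\R)$ from \eqref{eq:weakconv}. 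The inductive step, however, is delicate: convergence in law of the processes does not automatically transfer to conditional expectations, since the filtrations $\mathcal F^n_{t_i}$ are non-Markovian and depend on the entire past of the kernel-driven variance.

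This is precisely where the affine structure invoked in Section \ref{sec:conditional} becomes essential. Writing $\xi^t=(\xi^t_s)_{s\ge t}$ for the adjusted forward variance curve at time $t$, the exponential--affine formula for the conditional Fourier--Laplace transform of $X_{t_{i+1}}-X_{t_i}$ given $\mathcal F_{t_i}$ yields a deterministic continuous map $\Phi^u_i$ such that $\E[e^{iu X_{t_{i+1}}}\mid\mathcal F_{t_i}]=\Phi^u_i(X_{t_i},\xi^{t_i})$, together with the same formula for the approximating model (with its own Riccati solution); Section \ref{sec:conditional} supplies the joint convergence of $(X^n_{t_i},\xi^{n,t_i})$ and of the Riccati coefficients. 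Iterating backward, one shows that $P^N_{t_i}$ is a continuous functional of $(X_{t_i},\xi^{t_i})$, and strengthens the induction hypothesis from convergence in law of $P^{N,n}_{t_i}$ to joint convergence in law of the triple $(X^n_{t_j},\xi^{n,t_j},P^{N,n}_{t_j})_{j\le i}$. Combined with the continuous mapping theorem for the max and for the functional representation of the conditional expectation, the induction closes.

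The main obstacle is exactly the step of turning the affine representation of exponential functionals into one for $\E[P^N_{t_{i+1}}\mid\mathcal F_{t_i}]$, since $P^N_{t_{i+1}}$ is only bounded continuous, not a pure exponential. I would resolve this by Fourier-representing $P^N_{t_{i+1}}$ as a functional of $(X,\xi)$, exchanging the outer Fourier integral with the conditional expectation via Fubini, and applying the affine formula inside; the integrability required for Fubini and for the passage to the limit is provided by the uniform moment bound \eqref{eq:unifboundmoments}, which also delivers the uniform integrability needed to deduce convergence of conditional expectations from convergence in law. Some care is needed because the inductive object $P^N_{t_{i+1}}$ depends on the forward curve at $t_{i+1}$ as well as on $X_{t_{i+1}}$, so the Fourier argument is carried out at the level of the enlarged state $(X,\xi)$.

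For \eqref{convamtoam}, the plan is the standard triangle-inequality decomposition
\[
|P^n_0-P_0|\le |P^n_0-P^{N,n}_0|+|P^{N,n}_0-P^N_0|+|P^N_0-P_0|.
\]
The middle term vanishes as $n\to\infty$ for every fixed $N$ by \eqref{convbermud} (at $t_0=0$, weak convergence to a deterministic limit is ordinary convergence). For the outer terms, approximate any $\tau\in\mathcal T_{0,T}$ by $\tau^N:=\min\{t_i:t_i\ge\tau\}\in\mathcal T^N_{0,T}$. Since $f\in\Ccal_b$ and the trajectories of $X$ and $X^n$ are continuous, with a Hölder modulus of continuity controlled uniformly in $n$ through the common constants $(\gamma,c_K)$ of Assumption \ref{assump:kernelsv_0} and the moment estimate \eqref{eq:unifboundmoments}, dominated convergence and this uniform modulus yield $|P^{N,n}_0-P^n_0|\to 0$ as $\pi_N\to 0$, uniformly in $n$, and likewise $|P^N_0-P_0|\to 0$. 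Choosing $N$ large first and then sending $n\to\infty$ gives \eqref{convamtoam}.
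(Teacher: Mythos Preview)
Your overall architecture matches the paper's: backward induction plus the affine transform for \eqref{convbermud}, and a Bermudan--American sandwich for \eqref{convamtoam}. But there are two places where your sketch diverges from what the paper actually does, and one of them is a genuine gap.

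\medskip
\textbf{The Bermudan step.} You correctly identify that the difficulty is computing $\E[P^N_{t_{i+1}}\mid\mathcal F_{t_i}]$ when $P^N_{t_{i+1}}$ is not an exponential, and you propose to ``Fourier-represent $P^N_{t_{i+1}}$ as a functional of $(X,\xi)$'' and swap integrals. The problem is that $\xi^{t_{i+1}}=v_{t_{i+1}}(\cdot)$ is an element of an infinite-dimensional function space, so there is no ready Fourier representation to invoke, and your remark that ``the Fourier argument is carried out at the level of the enlarged state $(X,\xi)$'' does not explain how. The paper's resolution is to \emph{enlarge the induction hypothesis} so that the state dependence on the forward curve stays finite-dimensional at every step: one proves convergence for Bermudan options whose terminal payoff has the form $g(X_T,(\langle h_j,v_T\rangle)_{j\in J})$ with $J$ finite and $h_j\in\Gcal_K^*$. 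For $g$ in the algebra of real parts of finite sums $\sum_k c_k\exp(i\nu_k x+\sum_j(\alpha_{j,k}+i\beta_{j,k})\eta_j)$ with $\alpha_{j,k}\ge 0$, the affine formula (Proposition~\ref{prop:characteristic_function_X_u_Xn_un}) makes the one-step conditional expectation explicit and of the \emph{same} form at time $t_{N-1}$, with new test functions $y_k(\Delta t)\in\Gcal_K^*$ coming from the Riccati equation; Lemma~\ref{lem:conv_psi} then gives convergence of these test functions, closing the induction. General $g$ is reduced to this algebra by Stone--Weierstrass on a compact box (after a tightness cut-off), not by a Fourier integral. Without this finite-$J$ bookkeeping your induction does not close.

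\medskip
\textbf{The American step.} Here your route differs from the paper's but is not wrong in principle. The paper does \emph{not} argue via a uniform H\"older modulus for $X^n$: it first proves, for $f\in\Ccal_b^2$, the explicit estimate $0\le P_0-P_0^N\le c\,(1+\E[\sup_{[0,T]}V])\,\pi_N$ by applying It\^o's formula to $e^{-rt}f(X_t)$ between $\tau_\varepsilon^*$ and its grid projection, and then upgrades to $f\in\Ccal_b$ by approximating $f$ uniformly on a large interval by $g\in\Ccal_c^\infty$ (using the uniform bound $\sup_n\E^n[\sup_{[0,T]}|X^n|]<\infty$ to control the tails). Your direct approach via $|f(X^n_\tau)-f(X^n_{\tau^N})|$ would require, beyond the uniform modulus for $X^n$, a tightness argument to restrict to a compact where $f$ is uniformly continuous; you allude to dominated convergence but do not supply this step. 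Either route works, but the paper's has the advantage of giving an explicit rate in $\pi_N$ for smooth $f$ and isolating the loss of rate to the $\Ccal_b$-approximation layer.
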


The proof of Theorem \ref{thm:main} is based on the study of the adjusted forward variance process and the associated Fourier--Laplace transform, which constitutes the main topic of the next section.

\section{Conditional Fourier--Laplace transform}	
\label{sec:conditional}

\subsection{Adjusted forward process}\label{sec_adjusted_forward_process}
In this section we study the adjusted forward process. This infinite-dimensional process was studied in \cite{abi2019markovian} to characterize the Markovian structure of the Volterra Heston model \eqref{defXV}. The adjusted forward process is very useful in order to study path-dependent options such as Bermudan and American options because, as we will see in Section \ref{sec-laplace}, it allows us to better understand the conditional laws of the underlying process by means of the conditional Fourier--Laplace transform.

Assume that Assumption \ref{assump:kernelsv_0_original} holds. Let $\P$ be the probability measure, and let $\F=(\mathcal F_t)$ be the filtration of the stochastic basis associated to the weak solution $(X,V)$ to \eqref{defXV}. The adjusted forward process $(v_t)$ of $V$ is
\begin{equation}
\label{forward_process}
v_t(\xi)= \mathbb{E} \left[   V_{t+\xi} +\lambda\int_{0}^{\xi} K(\xi-s) V_{t+s} \dd s    \vert \mathcal{F}_t   \right],\quad \xi\ge 0.\footnote{We called $(v_t)$ the \textit{adjusted} forward process to distinguish it from the classical Musiela parametrization of the forward process $(\E[V_{t+\cdot}\vert\Fcal_t])$.}
\end{equation}
In particular, the variance process is embedded in the adjusted forward process because $v_t(0)=V_t$. Notice that, thanks to \eqref{eq:boundmoments}, the process $ \left( \int_{0}^{r} K(t+\xi-s) \sqrt{V_s} \dd W_s \right)_{ 0 \le r \le t+\xi }$ is a martingale,  and we can rewrite the adjusted forward process as
\begin{equation}
\label{eq:forward_process}
	v_t(\xi)=v_0(t+\xi) + \int_{0}^{t} K(t+\xi-s) \left[-\lambda V_{s} \dd s  +\eta \sqrt{V_s} \dd W_s\right],\quad \xi\ge 0.
\end{equation}
Moreover, as shown in \cite[Theorem 3.1]{abi2019markovian}, $v_t\in\Gcal_K$ for all $t\ge 0$, i.e., $\Gcal_K$ is stochastically invariant with respect to $(v_t)$.

Similarly, if Assumption \ref{assump:kernelsv_0} holds, we can define the adjusted forward process for the approximating sequence $(V^n)_{n\ge 1}$ by
\begin{equation}
\label{forward_process_approx}
\begin{aligned}
v^n_t(\xi)&=\mathbb{E}^n \left[   V^n_{t+\xi} +\lambda\int_{0}^{\xi} K^n(\xi-s) V^n_{t+s} \dd s    \vert \mathcal{F}^n_t   \right]\\
&=v^n_0(t+\xi) + \int_{0}^{t} K^n(t+\xi-s) \left[-\lambda V^n_{s} \dd s  +\eta \sqrt{V^n_s} \dd W^n_s\right],\quad \xi\ge 0,
\end{aligned}
\end{equation}
and we have $v^n_t(0)=V^n_t$ and $v^n_t\in\Gcal_{K^n}$, for all $t\ge 0$ and $n\ge 1$.

We start with a lemma regarding the regularity for the approximated adjusted forward processes $v^n$, $n\ge 1$.

\begin{lemma} \label{continuity_u} 
	 Let $T,M \ge 0$ and $p> \max \{ 2,  4/\gamma \}$. Suppose that Assumption \ref{assump:kernelsv_0} holds and for $n\ge 1$ define the processes 
	 \[
	 \tilde{v}^n_t(\xi)=v^n_t(\xi)-v^n_0(t+\xi)
	 \] 
	 with $v^n$ as in \eqref{forward_process_approx}. Then 
	\begin{equation*}
	\mathbb{E}^n [\lvert \tilde{v}_{t}^{n}(\xi') -  \tilde{v}_{s}^{n}(\xi) \lvert^p] \le C (\max (\lvert t-s \lvert, \lvert \xi-\xi' \lvert))^{p\gamma/2},\quad (s,\xi),(t,\xi')\in [0,T]\times [0,M],
	\end{equation*}
	where $C$ is a constant that only depends on $p,T,M,\lambda,\eta, \gamma, c_K$, and $\sup_{n\ge 1}\|v_0^n\|_{\Ccal[0,T]}$. As a consequence $(\tilde{v}^n_t(\xi))_{(t,\xi)\in [0,T]\times [0,M]}$ admits an $\alpha$-H\"{o}lder continuous version for any $\alpha<\frac{\gamma}{2}$. Moreover, for this version and for $ \alpha<\frac{\gamma}{2}-\frac{2}{p}$ we have
	\begin{equation*}
	\mathbb{E}^n \left[   \left( \sup_{(t,\xi') \ne (s,\xi)\in[0,T]\times [0,M]}  \frac{\lvert \tilde{v}_{t}^{n}(\xi') -  \tilde{v}_{s}^{n}(\xi) \lvert}{\lvert  (t-s,\xi'-\xi) \lvert^\alpha}   \right)^p  \right] < c,
	\end{equation*}
	where $c<\infty$ is a constant that only depends on $p,\alpha,T,M,\lambda,\eta, \gamma, c_K$, and $\sup_{n\ge 1}\|v_0^n\|_{\Ccal[0,T]}$.
\end{lemma}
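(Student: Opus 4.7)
The plan is to combine the stochastic Volterra representation of $\tilde v^n$ with the Burkholder--Davis--Gundy and Jensen inequalities, the uniform kernel regularity supplied by Assumption \ref{assump:kernelsv_0}, and the uniform moment bound \eqref{eq:unifboundmoments}, and then to conclude with Kolmogorov--Chentsov and Garsia--Rodemich--Rumsey for the H\"older assertions.

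First I would read off from \eqref{forward_process_approx} that $\tilde v^n_t(\xi)=\int_0^t K^n(t+\xi-u)\,\dd Z^n_u$, with $\dd Z^n_u=-\lambda V^n_u\,\dd u+\eta\sqrt{V^n_u}\,\dd W^n_u$. Taking $s\le t$ without loss of generality, I would decompose $\tilde v^n_t(\xi')-\tilde v^n_s(\xi)=J_1+J_2$, where $J_1=\int_0^s[K^n(t+\xi'-u)-K^n(s+\xi-u)]\,\dd Z^n_u$ absorbs the kernel shift and $J_2=\int_s^t K^n(t+\xi'-u)\,\dd Z^n_u$ absorbs the new time increment, and treat the drift and martingale parts of each separately.

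For the drift parts I would combine Cauchy--Schwarz in $u$ with Jensen's inequality in the time variable and \eqref{eq:unifboundmoments} to control the resulting $\mathbb E^n[(V^n_u)^p]$. For the martingale parts I would apply BDG and then a Jensen argument against the probability measure $|K^n(\fdot)|^2\,\dd u/\|K^n(\fdot)\|_{L^2}^2$, converting the $p/2$-power of $\int|K^n(\fdot)|^2 V^n_u\,\dd u$ into an $|K^n(\fdot)|^2$-weighted integral of $(V^n_u)^{p/2}$, once again handled by \eqref{eq:unifboundmoments}. This reduces the whole moment bound to the two deterministic quantities $\|K^n(t+\xi'-\fdot)-K^n(s+\xi-\fdot)\|_{L^2(0,s)}$ and $\|K^n(\xi'+\fdot)\|_{L^2(0,t-s)}$. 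A change of variables and Assumption \ref{assump:kernelsv_0}\ref{assump:kernelsv_0:1} (applied with time horizon $T+M$) bound the first by $c_K(T+M)\cdot|(t+\xi')-(s+\xi)|^\gamma$; Assumption \ref{assump:kernelsv_0}\ref{assump:kernelsv_0:2} gives monotonicity of $K^n$, so $K^n(\xi'+v)\le K^n(v)$, and the first term in \eqref{eq:Kcal_c} bounds the second by $c_K(T+M)(t-s)^\gamma$. Combining with $|(t+\xi')-(s+\xi)|\le 2\max(|t-s|,|\xi-\xi'|)$ yields the stated moment estimate.

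To deduce the H\"older assertions, I would then invoke the Kolmogorov--Chentsov continuity theorem on the two-parameter index set $[0,T]\times[0,M]$, which applies as soon as $p\gamma/2>2$, i.e. $p>4/\gamma$, and produces a continuous version of any H\"older exponent strictly less than $(p\gamma/2-2)/p=\gamma/2-2/p$; letting $p\to\infty$ exhibits a version of every order less than $\gamma/2$. The quantitative $L^p$-bound on the H\"older seminorm then follows from the Garsia--Rodemich--Rumsey inequality fed the same moment estimate. The main obstacle, and the reason the lemma is stated uniformly, is that every constant must depend only on the data listed in the statement and not on the individual $K^n$ or $v_0^n$: this is precisely what Assumption \ref{assump:kernelsv_0}\ref{assump:kernelsv_0:1} (common $c_K,\gamma$) and \eqref{eq:unifboundmoments} (uniform in $n$ via $\sup_{n\ge 1}\|v_0^n\|_{\Ccal[0,T]}$) are designed to provide.
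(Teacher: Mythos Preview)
Your proposal is correct and follows essentially the same route as the paper: the paper writes the analogous stochastic integral representation, applies Jensen and BDG together with the uniform moment bound \eqref{eq:unifboundmoments} and the kernel regularity from Assumption~\ref{assump:kernelsv_0}, and then invokes Kolmogorov's continuity theorem (in the form of \cite[Theorem~I.2.1]{RY:99}, which already carries the $L^p$ bound on the H\"older seminorm). The only cosmetic difference is that the paper separates the $t$-shift and the $\xi$-shift and works with the three-term decomposition
\[
\int_{0}^{s}\bigl(K^n(t+\xi'-u)-K^n(s+\xi'-u)\bigr)\dd Z^n_u+\int_s^t K^n(t+\xi'-u)\dd Z^n_u+\int_0^s\bigl(K^n(s+\xi'-u)-K^n(s+\xi-u)\bigr)\dd Z^n_u,
\]
whereas you merge the first and third terms into a single combined shift $J_1$; both lead to the same kernel $L^2$-bounds via \eqref{eq:Kcal_c}.
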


\begin{proof}
Thanks to \eqref{forward_process_approx}, we have for $s\leq t$ and $\xi,\xi'\le M$,
\begin{equation*}
\begin{aligned}
\tilde{v}_{t}^{n}(\xi') -  \tilde{v}_{s}^{n}(\xi) &= \tilde{v}_{t}^{n}(\xi')- \tilde{v}_{s}^{n}(\xi') + \tilde{v}_{s}^{n}(\xi') -  \tilde{v}_{s}^{n}(\xi)\\
&=\int_{0}^{s} (K^n(t+\xi'-u)-K^n(s+\xi'-u)) \dd Z^n_u +\int_s^t K^n(t+\xi'-u)  \dd Z^n_u \\
&\quad+ \int_0^s  (K^n(s+\xi'-u)-K^n(s+\xi-u)) \dd Z^n_u
\end{aligned}
\end{equation*} 
where $Z^n_t=-\lambda \int_0^t V^n_{s} \dd s  +\eta \int_0^t \sqrt{V^n_s} \dd W^n_s$. From this point on, using Assumption \ref{assump:kernelsv_0} and the bound \eqref{eq:unifboundmoments}, the argument is analogous to the proof of \cite[Lemma 2.4]{abi2019affine} and it is based on successive applications of Jensen and Burkholder--Davis--Gundy inequalities, and Kolmogorov's continuity theorem; see \cite[Theorem I.2.1]{RY:99}. 
\end{proof}

\begin{remark}
As an immediate consequence of Lemma \ref{continuity_u}, if Assumption \ref{assump:kernelsv_0} holds, then
\begin{equation}\label{eq:boundsupvn}
\sup_{n\geq 1}\E^n\Big[\sup_{t\in[0,T]}V_t^n\Big]\leq c,
\end{equation}
where $c<\infty$ is a constant that only depends on $T,\lambda,\eta, \gamma, c_K$, and $\sup_{n\ge 1}\|v_0^n\|_{\Ccal[0,T]}$.
\end{remark}

We are now able to establish the convergence of the approximated adjusted forward process in the next proposition.
\begin{proposition}\label{prposi_converg_un} 
Suppose that Assumptions \ref{assump:kernelsv_0_original} and \ref{assump:kernelsv_0} hold. Let $X$ (resp., $X^n$) be as in \eqref{defXV} (resp., \eqref{eq:approxSVE}), and let $v$ (resp., $v^n$) be as in \eqref{forward_process} (resp., \eqref{forward_process_approx}). Then, as $n$ goes to infinity, $(X^n_t,v^n_t(\xi))_{(t,\xi)\in\R_+^2}$ converges in law to $(X_t,v_t(\xi))_{(t,\xi)\in\R_+^2}$ in $\Ccal(\R_+^2,\R^2)$.
\end{proposition}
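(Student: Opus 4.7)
The plan is to combine three ingredients: (i) the weak convergence of $(X^n,V^n)$ to $(X,V)$ already asserted in \eqref{eq:weakconv}, (ii) the uniform two-parameter modulus estimates from Lemma~\ref{continuity_u} to get tightness of the adjusted forward processes, and (iii) passage to the limit in the stochastic convolution representation \eqref{forward_process_approx} to identify the limit as the adjusted forward process $v$ defined by \eqref{eq:forward_process}.

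First, since by Assumption~\ref{assump:kernelsv_0}\ref{assump:kernelsv_0:4} we have $v_0^n\to v_0$ in $\Ccal$, it suffices to prove tightness and identification for the centered process $\tilde v^n_t(\xi)=v^n_t(\xi)-v^n_0(t+\xi)$. For any $T,M>0$ and $p>\max\{2,4/\gamma\}$, Lemma~\ref{continuity_u} gives a constant $C$ independent of $n$ such that $\E^n[|\tilde v^n_t(\xi')-\tilde v^n_s(\xi)|^p]\le C(|t-s|\vee|\xi-\xi'|)^{p\gamma/2}$ on $[0,T]\times[0,M]$. The two-parameter Kolmogorov–Chentsov criterion then yields tightness of $(\tilde v^n)_n$ in $\Ccal([0,T]\times[0,M])$; tightness of $(X^n)_n$ in $\Ccal([0,T])$ follows from \eqref{eq:unifboundmoments} by a routine BDG estimate on the increments of $X^n$. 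A diagonal extraction over an exhausting sequence of rectangles produces tightness in $\Ccal(\R_+^2,\R^2)$.

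Next, by Prokhorov's theorem a subsequence of $(X^n,v^n)$ converges in law, say to $(\tilde X,\tilde v)$. Enlarging the picture to include the driving noises $(W^n,W^{n,\perp})$ and invoking Skorokhod's representation theorem, we may assume, on a common probability space, that $(X^n,V^n,W^n,W^{n,\perp},v^n)\to (\tilde X,\tilde V,\tilde W,\tilde W^\perp,\tilde v)$ almost surely in the corresponding path spaces. By \eqref{eq:weakconv} and standard martingale characterization arguments for weak solutions of the Volterra SDE (as in \cite{abi2019multifactor}), the limit $(\tilde V,\tilde W,\tilde W^\perp)$ is a weak solution to the Volterra equation \eqref{defXV} with kernel $K$, and by uniqueness in law (Proposition~\ref{prop:existuniq}) $(\tilde X,\tilde V)\stackrel{d}{=}(X,V)$.

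It remains to show that $\tilde v$ coincides with the adjusted forward process $v$ built from $(\tilde V,\tilde W)$ via \eqref{eq:forward_process}. Since $v^n_0\to v_0$ uniformly on compacts and $v^n_0(t+\xi)\to v_0(t+\xi)$ pointwise, the task reduces, for each fixed $(t,\xi)$, to proving
\begin{equation*}
\int_0^t K^n(t+\xi-s)\bigl[-\lambda V^n_s\,\dd s+\eta\sqrt{V^n_s}\,\dd W^n_s\bigr]\longrightarrow \int_0^t K(t+\xi-s)\bigl[-\lambda\tilde V_s\,\dd s+\eta\sqrt{\tilde V_s}\,\dd\tilde W_s\bigr]
\end{equation*}
in probability. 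Splitting the difference into a kernel term and an integrator term, the kernel term is controlled by It\^o isometry and the $\Lcal^2_{\rm loc}$ convergence of $K^n$ to $K$ (Assumption~\ref{assump:kernelsv_0}\ref{assump:kernelsv_0:3}) together with the uniform $L^p$ bound \eqref{eq:unifboundmoments}; the integrator term is handled by the uniform on compacts a.s.\ convergence $V^n\to\tilde V$, $W^n\to\tilde W$ coming from Skorokhod, combined with BDG and the uniform square-integrability of $\sqrt{V^n_s}$ provided by \eqref{eq:boundsupvn}. Hence $\tilde v=v$ in the sense of \eqref{eq:forward_process}. Since every weak subsequential limit of $(X^n,v^n)$ agrees in law with $(X,v)$, the whole sequence converges in law, proving the proposition.

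The main obstacle will be the last step, namely the convergence of the stochastic convolutions where both the deterministic kernel and the semimartingale integrator depend on $n$. The delicate point is to justify the joint passage to the limit uniformly in the parameter $\xi$ over compacts, so that the identification $\tilde v=v$ holds not just pointwise but as elements of $\Ccal(\R_+^2)$; this is where the uniform H\"older bound in Lemma~\ref{continuity_u} is essential, because it upgrades pointwise identification of the limit to identification as continuous random fields via continuity in $\xi$ of both sides.
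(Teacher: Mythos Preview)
Your overall strategy---tightness via Lemma~\ref{continuity_u}, then identification of subsequential limits---matches the paper's. The gap is in step~(iii), specifically in the ``integrator term'' where you claim that the a.s.\ uniform convergence $W^n\to\tilde W$ obtained from Skorokhod, together with BDG and moment bounds, suffices to pass to the limit in $\int_0^t K(t+\xi-s)\sqrt{V^n_s}\,\dd W^n_s$. This is not justified: stochastic integrals are not continuous functionals of the integrator in the uniform topology, so $W^n\to\tilde W$ a.s.\ in $\Ccal$ does not by itself yield $\int_0^t g_s\,\dd W^n_s\to\int_0^t g_s\,\dd\tilde W_s$. One would need a stability result of Kurtz--Protter type (UCV/UT conditions) together with careful bookkeeping of filtrations on the Skorokhod space; none of this is covered by ``BDG and uniform square-integrability''.

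The paper sidesteps this difficulty entirely by a different identification argument. Rather than carrying the Brownian motions through the limit, it works with the triple $(X^n,v^n,Z^n)$ where $Z^n_t=-\lambda\int_0^t V^n_s\,\dd s+\eta\int_0^t\sqrt{V^n_s}\,\dd W^n_s$, and uses the associativity identity $1*(\Delta_\xi K^n*\dd Z^n)=\Delta_\xi K^n*Z^n$ from \cite[Lemma~3.2]{jaber2019weak} to convert the stochastic convolution into an ordinary Lebesgue convolution against the \emph{path} $Z^n$. The latter is a continuous functional of $(K^n,Z^n)$, so one can pass to the limit using only the $\Lcal^2$-convergence of the kernels and the weak convergence of $Z^n$, obtaining $1*v(\xi)=1*v_0(\xi+\cdot)+\Delta_\xi K*Z$. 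The limit $Z$ is then identified as $-\lambda\int_0^\cdot V_s\,\dd s+\eta\int_0^\cdot\sqrt{V_s}\,\dd W_s$ for \emph{some} Brownian motion $W$ via martingale-problem arguments as in \cite[Theorem~3.6]{abi2019multifactor}, and a second application of the same lemma recovers \eqref{eq:forward_process}. This ``integrate in time first'' trick is precisely what your argument is missing: it turns the stochastic convolution into a pathwise object before taking limits, so that no stability theory for stochastic integrals is required.
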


\begin{proof}
This proof is similar to the proof of \cite[Theorem 3.6 and Proposition 4.2]{abi2019multifactor}. We include a short explanation for completeness. Lemma \ref{continuity_u} and Assumption \ref{assump:kernelsv_0}\ref{assump:kernelsv_0:4} imply tightness for the uniform topology of the triple $(X^n,v^n,Z^n)$, where $Z^n_t=-\lambda \int_0^t V^n_{s} \dd s  +\eta \int_0^t \sqrt{V^n_s} \dd W^n_s$. Suppose that $(X,v,Z)$ is a limit point. Thanks to \eqref{forward_process_approx} and \cite[Lemma 3.2]{jaber2019weak}, we have
\begin{align*}
	\nonumber 1 \ast v^n(\xi) &=1 \ast v_0^n(\xi+\cdot)  +   1\ast  \left(\Delta_{\xi} K^n \ast dZ^n   \right)\\
	\nonumber  & = 1 \ast v_0^n(\xi+\cdot) +   \Delta_{\xi} K^n \ast Z^n   \\
	& = 1 \ast v_0^n(\xi+\cdot)  +  \Delta_{\xi} K \ast Z^n  + \left(  \Delta_{\xi} K - \Delta_{\xi} K^n \right)\ast Z^n,\quad \xi\ge 0. 
\end{align*}
In the previous identities, we have used the notation $\Delta_{\xi} K*dZ$ for the stochastic integral $(\Delta_{\xi}K*dZ)_t=\int_0^t K(t-s+\xi)dZ_s$. Assumption \ref{assump:kernelsv_0} and the convergence in law of $(v^n,Z^n)$ towards $(v,Z)$ yield
\begin{equation*} 
	1 \ast v(\xi) =1 \ast v_0(\xi+\cdot)  +   \Delta_{\xi} K \ast Z  \quad \xi\ge 0.
\end{equation*}
One can show, as in \cite[Theorem 3.6]{abi2019multifactor}, that $Z$ is of the form $Z_t=-\lambda \int_0^t V_{s} \dd s  +\eta \int_0^t \sqrt{V_s} \dd W_s$ for some Brownian motion $W$, where $V_t=v_t(0)$, $t\ge 0$. Once again, \cite[Lemma 3.2]{jaber2019weak} implies that
\begin{equation*}
v_t(\xi)=v_0(\xi+t)  + \left(  \Delta_{\xi} K \ast dZ  \right)_t,\quad t,\xi\ge 0.
\end{equation*}
Hence, $V_t=v_t(0)$, $t\ge 0$, is the (unique) weak solution to the stochastic Volterra equation in \eqref{eq:approxSVE} and $v$ is the associated adjusted forward process. Furthermore, since $X$ is completely determined by $V$, one can prove that $(X,V)$ is the unique weak solution to \eqref{defXV} (see also \cite[Theorem 3.5]{abi2019multifactor}). \end{proof}

\subsection{Conditional Fourier--Laplace transforms}
\label{sec-laplace}
This section studies the conditional Fourier--Laplace transform of the log price and the adjusted forward variance in the Volterra Heston model based on previous considerations in \cite{keller2018affine, abi2019markovian, cuchiero2020generalized}. The results of this section will be useful to establish the convergence of Bermudan option prices in the approximated models to the Bermudan option prices in the original model, i.e., \eqref{convbermud} in Theorem \ref{thm:main}, using a dynamic programming approach.

We start by introducing some notation. For a kernel $K\in\Kcal$ define
\begin{equation}\label{eq:dualspace}
\mathcal{G}^*_K   =  \left\{h\in \Bcal_c(\mathbb{R}_+,\mathbb{C}): t\mapsto -{\rm Re}\left(\int_0^{\infty}h(\xi)K(t+\xi) \dd \xi\right)\in\mathcal{G}_K\right\}
\end{equation}
with $\Gcal_K$ as in \eqref{eq:GK}. This space is a \textit{dual space} that we will consider in the computation of the Fourier--Laplace transform of the adjusted forward process. 

The next proposition characterizes the conditional Fourier Laplace transform of the log price $X$ and the adjusted forward variance $v$ through solutions of some Riccati equations. 

\begin{proposition}\label{prop:characteristic_function_X_u_Xn_un}
Suppose that Assumption \ref{assump:kernelsv_0_original} holds, let $X$ be the log price process given by \eqref{defXV}, and let $v$ be the adjusted forward process given by \eqref{forward_process}. Fix $T\ge 0$, $w \in \mathbb{C}$ with ${\rm Re}(w) \in [0,1]$ and $h\in\mathcal{G}^*_K$. Then the conditional Fourier--Laplace transform of $(X,v)$ 
\begin{equation}\label{Fourier_Laplace_transform_ut_xt}
L_t(w,h;X_T,v_T)=\mathbb{E}  \left[    \exp  \left( w X_T + \int_{0}^{\infty}    h(\xi) v_T(\xi) \dd \xi \right)  \vert \mathcal{F}_t  \right],\quad t\leq T,
\end{equation}
can be computed thanks to the following formula:
\begin{equation}\label{Formula_Fourier_Laplace_transform_ut_xt}
L_t(w,h;X_T, v_T)= \exp \left( w (X_t+r(T-t)) + \int_{0}^{\infty} \Psi (T-t,\xi;w,h)v_t(\xi) \dd \xi \right),
\end{equation}
where $\Psi$ satisfies
\begin{equation}\label{eq:invRicatti}
\xi\mapsto \Psi (t,\xi;w,h)\in\Gcal_K^*,\quad t\ge 0,
\end{equation} 
and it is a solution to the following Riccati equation:
\begin{equation}
\label{ricattieq}
\Psi (t,\xi;w,h) =  h(\xi-t) \mathbf{1}_{ \{\xi \ge t \}} + \mathcal{R} \left(w, \int_{0}^{\infty}\Psi(t-\xi,z;w,h) K(z) \dd z \right)\mathbf{1}_{ \{\xi < t \}},\quad t,\xi\ge 0,
\end{equation}
and the operator $\mathcal R$ is defined by 
\begin{align}
\label{equ_r_phi_r_psi_u_x}  
\mathcal{R}(w, \varphi)= \frac{1}{2} (w^2 -w) + \left(  \rho \eta w - \lambda + \frac{\eta^2}{2}      \varphi  \right) \varphi.
\end{align}
Moreover, if 
\[
{\rm Re}(w)=0,\quad\int_{0}^{\infty}    {\rm Re}(h(\xi)) v_T(\xi) \dd \xi\le 0,
\]
then
\[
 \int_{0}^{\infty} {\rm Re}(\Psi (T-t,\xi;w,h))v_t(\xi) \dd \xi \le 0,\quad t\le T.
\]
\end{proposition}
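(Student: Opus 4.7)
The plan is to follow the standard affine Volterra strategy. I will first construct $\Psi$ as a solution to \eqref{ricattieq} satisfying the invariance \eqref{eq:invRicatti}, then form the candidate process
$$M_t := \exp\!\left( w\bigl(X_t + r(T-t)\bigr) + \int_0^\infty \Psi(T-t,\xi)\, v_t(\xi)\, \dd \xi \right), \qquad t\in[0,T],$$
and prove that $M$ is a true martingale on $[0,T]$. Since $M_T = \exp\bigl(wX_T + \int_0^\infty h(\xi)v_T(\xi)\,\dd\xi\bigr)$, the identity $M_t = \E[M_T\mid\Fcal_t]$ is exactly \eqref{Formula_Fourier_Laplace_transform_ut_xt}.

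Existence and the $\Gcal_K^*$-invariance of $\Psi$ are handled by reducing \eqref{ricattieq} to a scalar Volterra--Riccati equation. Setting $\psi(t):=\int_0^\infty \Psi(t,z) K(z)\,\dd z$ and integrating \eqref{ricattieq} against $K$ yields
$$\psi \;=\; K * \Rcal(w,\psi) \;+\; \int_0^\infty h(u) K(\cdot+u)\,\dd u.$$
Well-posedness of this scalar equation together with the propagation of the $\Gcal_K^*$-invariance along its flow are the content of Appendix \ref{sec:app:ricatti}; once $\psi$ is known, $\Psi$ is recovered explicitly from \eqref{ricattieq} and \eqref{eq:invRicatti} follows.

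The core step is the It\^o computation. Using \eqref{eq:forward_process} and a stochastic Fubini interchange, I would write
$$U_t := \int_0^\infty \Psi(T-t,\xi) v_t(\xi)\, \dd\xi = \int_0^\infty \Psi(T-t,\xi) v_0(t+\xi)\,\dd\xi + \int_0^t A(t,s)\,\dd Z_s,$$
with $A(t,s) := \int_0^\infty \Psi(T-t,\xi) K(t+\xi-s)\,\dd\xi$ and $Z_s = -\lambda \int_0^s V_u\,\dd u + \eta\int_0^s \sqrt{V_u}\,\dd W_u$. The shift identity $\Psi(T-t,\xi+\epsilon)=\Psi(T-t-\epsilon,\xi)$ encoded in \eqref{ricattieq}, together with the boundary relation $\Psi(T-t,0)=\Rcal(w,\psi(T-t))$ obtained by sending $\xi\to 0^+$ in \eqref{ricattieq}, gives the diagonal value $A(t,t)=\psi(T-t)$. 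Applying It\^o to $M_t=\exp(Y_t)$ with $Y_t = w(X_t+r(T-t))+U_t$ and expanding $\dd Y_t + \tfrac12 \dd\langle Y\rangle_t$ from the dynamics of $X$ and of $V_t=v_t(0)$, the coefficient of $V_t\,\dd t$ reduces, by \eqref{equ_r_phi_r_psi_u_x}, to $\Rcal(w,\psi(T-t))-\Psi(T-t,0)=0$; the remaining contributions from $v_0(t+\cdot)$ and the off-diagonal part of $A$ should vanish by the full pointwise content of \eqref{ricattieq}. Hence $M$ is a local martingale on $[0,T]$. The main obstacle is precisely this step: justifying the stochastic Fubini despite the possible singularity of $K$ at zero, and making the pointwise content of \eqref{ricattieq} cancel every drift term, not merely the diagonal one.

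It remains to upgrade $M$ to a true martingale and to deduce the final assertion. The $\Gcal_K^*$-invariance of $\Psi(T-t,\cdot)$ combined with $v_t\in\Gcal_K$ yields, via the duality arguments of \cite{abi2019markovian, abi2019affine}, the pathwise inequality ${\rm Re}(U_t)\le 0$ almost surely on $[0,T]$. Together with ${\rm Re}(w)\in[0,1]$ this gives $|M_t|\le \exp({\rm Re}(w)(X_t+rT))$, whose supremum over $[0,T]$ is integrable thanks to \eqref{eq:boundmoments} and Doob-type estimates for the discounted price $\e^{-rt}\e^{X_t}$. This provides the uniform integrability needed to conclude $M_t=\E[M_T\mid\Fcal_t]$, proving \eqref{Formula_Fourier_Laplace_transform_ut_xt}. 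The last assertion then drops out for free: under ${\rm Re}(w)=0$ and $\int_0^\infty {\rm Re}(h(\xi)) v_T(\xi)\,\dd\xi\le 0$, one has $|M_T|\le 1$, hence $|M_t|=|\E[M_T\mid\Fcal_t]|\le 1$, and taking logarithms gives ${\rm Re}\int_0^\infty \Psi(T-t,\xi) v_t(\xi)\,\dd\xi\le 0$.
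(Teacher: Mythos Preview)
Your overall architecture matches the paper: construct $\Psi$ via the scalar Riccati--Volterra equation for $\psi$, form the candidate $M_t$, show it is a local martingale via It\^o, upgrade to a true martingale, and read off the last assertion from $|M_t|\le 1$. The It\^o step is handled in the paper by first computing the dynamics of $Y_t=\int_0^\infty\Psi(T-t,\xi)v_t(\xi)\,\dd\xi$ explicitly, obtaining $\dd Y_t=\psi(T-t)\,\dd Z_t-\Rcal(w,\psi(T-t))V_t\,\dd t$; your sketch with $A(t,s)$ and the shift identity $\Psi(\tau,\xi+\epsilon)=\Psi(\tau-\epsilon,\xi)$ points in the same direction, though it remains heuristic where the paper is explicit.

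The genuine gap is your upgrade from local to true martingale. You claim that $\Psi(T-t,\cdot)\in\Gcal_K^*$ and $v_t\in\Gcal_K$ give ${\rm Re}(U_t)\le 0$ ``via the duality arguments of \cite{abi2019markovian,abi2019affine}''. No such pairing inequality is available: membership of $h$ in $\Gcal_K^*$ only says that $s\mapsto -{\rm Re}\int_0^\infty h(\xi)K(s+\xi)\,\dd\xi$ lies in $\Gcal_K$, which is a condition on integrals of $h$ against \emph{shifted kernels}, not against arbitrary elements of $\Gcal_K$. Indeed, if your duality held, the extra hypothesis $\int_0^\infty{\rm Re}(h(\xi))v_T(\xi)\,\dd\xi\le 0$ in the last part of the proposition would be automatic, yet the paper states it as a separate assumption and uses it nontrivially in the proof of Theorem~\ref{T:convBermuda}. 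Even granting your bound, the estimate $|M_t|\le e^{rT}(e^{X_t})^{{\rm Re}(w)}$ at ${\rm Re}(w)=1$ would require $\E[\sup_{t\le T}e^{X_t}]<\infty$, which does not follow from an $L^1$ Doob argument alone.

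The paper instead observes that $\psi$ is continuous and hence bounded on $[0,T]$, writes $M$ as the stochastic exponential of $U_t=\int_0^t\bigl(w\sqrt{V_s}\,\dd B_s+\eta\,\psi(T-s)\sqrt{V_s}\,\dd W_s\bigr)$, and invokes the Novikov-type argument of \cite[Lemma~7.3]{abi2019affine}, which relies on the moment bounds \eqref{eq:boundmoments} for $V$ rather than on any sign of $U_t$. Your derivation of the final assertion from $|M_T|\le 1\Rightarrow|M_t|\le 1$ is correct and is exactly what the paper means by ``direct consequence of \eqref{Formula_Fourier_Laplace_transform_ut_xt}''.
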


\begin{remark}\label{rem:Psivspsi}
Existence of solutions to \eqref{ricattieq} satisfying \eqref{eq:invRicatti} is shown in Appendix \ref{sec:app:ricatti} (see Propostion \ref{prop_existence_Psi}). Notice that by setting
\begin{equation}\label{eq:defpsi}
\psi(t)= \int_{0}^{\infty}\Psi(t,\xi;w,h) K(\xi)\dd \xi,
\end{equation}
then the Riccati equation \eqref{ricattieq} can be recast as the following \textit{Riccati--Volterra} equation for $\psi$:
\begin{equation}
\label{eq_psi}\psi(t)=\int_{0}^{\infty} h(\xi) K(t+\xi) \dd \xi  + (K \ast \mathcal{R}\left( w,  \psi(\cdot)  \right))(t),
\end{equation}
and we have the identity
\begin{equation}\label{riccati_alt}
\Psi (t,\xi;w,h)  =  h(\xi-t) \mathbf{1}_{ \{\xi \ge t \}} + \mathcal{R} \left(w, \psi(t-\xi) \right)\mathbf{1}_{ \{\xi < t \}}.
\end{equation}
We also point out the similarity between Proposition \ref{prop:characteristic_function_X_u_Xn_un} and \cite[Proposition 4.6]{gatheral2019affine}. In this regard, we highlight that in our formulation we allow the argument $w$ to be complex and we specify an invariant space $\Gcal^*$ for the Riccati equation \eqref{ricattieq}. These points are important for the proof of our main convergence result.
\end{remark}

\begin{proof}[Proof of Proposition \ref{prop:characteristic_function_X_u_Xn_un}]
Let $\Psi$ be a solution to \eqref{ricattieq}, satisfying \eqref{eq:invRicatti} (see Proposition \ref{prop_existence_Psi}). To simplify notation, throughout this proof we will omit the parameters $w$ and $h$. Let $Z$ be the semimartingale $Z_t=-\lambda \int_0^t V_{s} \dd s  +\eta \int_0^t \sqrt{V_s} \dd W_s$, $\psi$ be as in \eqref{eq:defpsi}, and set
\[
\widetilde{Y}_t=  \int_{0}^{\infty} \Psi (T-t,\xi)(v_t(\xi)-v_0(\xi+t))\dd \xi.
\]
The identity \eqref{eq:forward_process}, equation \eqref{ricattieq}, the stochastic Fubini theorem (see \cite[Theorem 65]{P05}), and a change of variables yield

\begin{equation}\label{eq_Y_t_aux}
\begin{split}
 \widetilde{Y}_t  &=  \int_{0}^{\infty}\int_0^t \Psi (T-t,\xi) K(t+\xi-s)\dd Z_s \dd\xi \\
  &=  \int_0^t\int_{T-t}^{\infty} h(\xi-T+t)K(t+\xi-s) \dd\xi \dd Z_s+\int_0^t\int_{0}^{T-t} \mathcal{R}(\psi(T-t-\xi))K(t+\xi-s) \dd\xi \dd Z_s    \\
  &=\int_0^t\int_{T-s}^{\infty} h(\xi-T+s)K(\xi) \dd\xi \dd Z_s+\int_0^t\int_{t-s}^{T-s} \mathcal{R}(\psi(T-s-\xi))K(\xi) \dd\xi \dd Z_s.
\end{split}
\end{equation}
Equation  \eqref{eq_psi} implies that 
\begin{equation}\label{int_Psi_Y_aux}
 \psi(T-s)= \int_{T-s}^{\infty} h(\xi-T+s) K(\xi) \dd\xi   + \int_{0}^{T-s} \mathcal{R}(\psi(T-s-\xi))K(\xi)d\xi.
\end{equation}
We then plug \eqref{int_Psi_Y_aux} into \eqref{eq_Y_t_aux}  and obtain	
\begin{equation}\label{int_psi_Y}
\widetilde{Y}_t = \int_0^t\psi(T-s) \dd Z_s-\int_0^t\int_{0}^{t-s} \mathcal{R}(\psi(T-s-\xi))K(\xi) \dd\xi \dd Z_s.	
\end{equation}
We deduce, thanks to \eqref{int_psi_Y} and the stochastic Volterra equation for the variance process,  the following semimartingale dynamics for the process $\widetilde{Y}$:
\begin{equation}\label{eq_final_dY_t}
	\begin{split}
 \dd \widetilde{Y}_t & = \psi(T-t)\dd Z_t -  \left(\mathcal{R}(\psi(T-t))\int_{0}^{t} K(t-s) \dd Z_s\right) \dd t\\
  & =\psi(T-t)\dd Z_t - \mathcal{R}(\psi(T-t))(V_t-v_0(t))\dd t.\\
  \end{split}
	\end{equation}
On the other hand, similar calculations show that
\begin{equation}\label{eq:Fouriercondinitial}
\int_{0}^{\infty} \Psi (T-t,\xi)v_0(\xi+t)\dd \xi=\int_0^{\infty}h(\xi)v_0(\xi+T)\dd\xi+\int_0^{T-t}\mathcal{R}(\psi(\xi))v_0(T-\xi)\dd\xi.
\end{equation}
Define the process $Y$ as
\[
Y_t=\widetilde{Y}_t+\int_{0}^{\infty} \Psi (T-t,\xi)v_0(\xi+t)\dd \xi=\int_{0}^{\infty} \Psi (T-t,\xi)v_t(\xi) \dd \xi.
\]
From \eqref{eq_final_dY_t} and \eqref{eq:Fouriercondinitial} we obtain the following semimartingale dynamics for $Y$:
\begin{equation}\label{eq_final_dY_t_new}
dY_t=\psi(T-t)\dd Z_t  - \mathcal{R}(\psi(T-t))V_t\dd t.
\end{equation}
Consider now the semimartingale
\[
M_t=\exp (w(X_t-rt)+Y_t).
\]
From \eqref{eq_final_dY_t_new} and It\^o's formula, we  obtain
\begin{eqnarray}
\frac{\dd M_t}{M_t} & =  & w \dd X_t - w r\dd t +\dd Y_t +\frac{1}{2} w^2\dd \,\langle X \rangle_t +\frac{1}{2} \dd \, \langle Y \rangle_t +w  \dd\, \langle X,Y \rangle_t\nonumber\\
& = &  -\frac{w}{2} V_t \dd t + w \sqrt{V_t}\dd B_t + \psi(T-t)\dd Z_t  - \mathcal{R}(\psi(T-t))V_t\dd t \nonumber\\
 & &  + \frac{1}{2} w^2 V_t \dd t + \frac{1}{2}  \psi^2(T-t)\eta^2 V_t \dd t +  \rho \eta w \psi(T-t) V_t \dd t, 
\nonumber
\end{eqnarray}
where $B=\rho W+\sqrt{1-\rho^2}W^\perp$. From the definition of $\mathcal R$ in \eqref{equ_r_phi_r_psi_u_x}, we finally get
\begin{equation}\label{eq:martingal_M}
\frac{dM_t}{M_t} =  w \sqrt{V_t}\dd B_t +\psi(T-t)  \eta \sqrt{V_t}\dd W_t.
\end{equation}
$M$ is then a local martingale and 
\[ 
M_T=\exp\left(w(X_T-rT) + \int_{0}^{\infty}h(\xi) v_T(\xi) \dd\xi\right)
\]
since $\Psi(0,\xi)=h(\xi)$. As pointed out in the proof of Proposition \ref{prop_existence_Psi} in Appendix \ref{sec:app:ricatti}, thanks to the continuity of $\int_{0}^{\infty} h(\xi) K(\cdot+\xi) \dd \xi $, the function $\psi$ is a continuous, and hence bounded, function on $[0,T]$ . Using \eqref{eq:martingal_M}, the fact that $\psi$ is bounded, and a similar argument to the one used in \cite[Lemma 7.3]{abi2019affine}, we can show that 
\[
M_t=M_0\exp\left(U_t-\frac12\langle U\rangle_t\right),\quad\text{with } U_t=\int_0^t \left(w \sqrt{V_s}\dd B_s +\psi(T-s)  \eta \sqrt{V_s}\dd W_s\right),
\] is a true martingale. This implies the formula for the Fourier--Laplace transform  \eqref{Formula_Fourier_Laplace_transform_ut_xt}. The last implication in the statement of the proposition is a direct consequence of  \eqref{Formula_Fourier_Laplace_transform_ut_xt}.

\end{proof}

To establish the convergence of approximated Bermudan option prices, we will use convergence results of the conditional Fourier--Laplace transform, which we present in the following section.

\subsection{Convergence of the Fourier--Laplace transform}

Suppose that the kernels $(K^n)_{n\ge 1}$ and the functions $(v_0^n)_{n\ge 1}$ satisfy Assumption \ref{assump:kernelsv_0}. Let $(X^n,V^n)_{n\ge 1}$ be the solutions to \eqref{eq:approxSVE}, and let $(v^n)_{n\ge 1}$ be the corresponding adjusted forward processes as in \eqref{forward_process_approx}. We define, analogously to \eqref{Fourier_Laplace_transform_ut_xt}, the associated conditional Fourier--Laplace transform 
\begin{equation}\label{Fourier_Laplace_transform_ut_xt_approx}
L^n(w,h^n;X^n_T,v^n_T)=\mathbb{E}^n  \left[    \exp  \left( w X^n_T + \int_{0}^{\infty}    h^n(\xi) v^n_T(\xi) \dd \xi \right)  \vert \mathcal{F}^n_t  \right]
\end{equation} 
with $h^n\in\Gcal^*_{K^n}$ and ${\rm Re}(w)\in[0,1]$. Proposition \ref{prop:characteristic_function_X_u_Xn_un} implies that
\begin{equation*}
L^n_t(w,h^n;X^n_T, v^n_T)= \exp \left( w (X^n_t+r(T-t)) + \int_{0}^{\infty} \Psi^n (T-t,\xi;w,h^n)v^n_t(\xi) \dd \xi \right),
\end{equation*}
where $\Psi^n$ solves \eqref{ricattieq} with $h$ replaced by $h^n$ and $K$ replaced by $K^n$. We have the following convergence result for the conditional Fourier--Laplace transforms.

 \begin{proposition}
\label{prop:convFourierLaplace}
Suppose that Assumptions \ref{assump:kernelsv_0_original} and \ref{assump:kernelsv_0} hold.  Let $X$ (resp., $X^n$) be as in \eqref{defXV} (resp., \eqref{eq:approxSVE}), and let $v$ (resp., $v^n$) be as in \eqref{forward_process} (resp., \eqref{forward_process_approx}). Fix  $T\ge 0$, $w \in \mathbb{C}$ with ${\rm Re}(w) \in [0,1]$, and $(h^n)_{n\ge 1}$ with $h^n\in\mathcal{G}^*_{K^n}$, $n\ge 1$. Assume that there is $M\ge 0$ such that
\[
{\rm supp}(h^n)\subseteq [0,M],\,\,n\ge 1;\quad\text{ and }\quad h^n\to h\in\Gcal_K^* \text{ in $\Bcal([0,M],\mathbb C)$,}\quad\text{as $n\to\infty$.}
\]
Then
\[
L^n(w,h^n;X^n_T,v^n_T) \text{ converges in law to } L(w,h;X_T,v_T)\text{ in $\Ccal[0,T]$,}\quad\text{as $n\to\infty$,}
\]
where $L(w,h;X_T,v_T)$ and $L^n(w,h^n;X^n_T,v^n_T)$ are the conditional Fourier--Laplace transforms defined in \eqref{Fourier_Laplace_transform_ut_xt} and \eqref{Fourier_Laplace_transform_ut_xt_approx}, respectively.
 \end{proposition}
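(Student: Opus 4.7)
The plan is to reduce this stochastic convergence to two ingredients: the convergence of the driving processes $(X^n,v^n)$ provided by Proposition~\ref{prposi_converg_un}, combined with a deterministic stability analysis of the Riccati--Volterra equation~\eqref{eq_psi}. Concretely, I would first establish uniform convergence $\psi^n\to\psi$ on $[0,T]$, where $\psi^n(t)=\int_0^{\infty}\Psi^n(t,\xi;w,h^n) K^n(\xi)\,\dd \xi$, and then invoke the Skorokhod representation theorem to pass to a common probability space on which $(X^n,v^n)\to(X,v)$ almost surely, uniformly on compact sets. Once both pieces are in place, the explicit formula \eqref{Formula_Fourier_Laplace_transform_ut_xt} and the continuous mapping theorem (with the exponential) deliver the conclusion.

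For the deterministic step, subtracting the Riccati--Volterra equations for $\psi^n$ and $\psi$ yields
\[
\psi^n-\psi=(g^n-g)+K^n\ast\bigl(\mathcal R(w,\psi^n)-\mathcal R(w,\psi)\bigr)+(K^n-K)\ast\mathcal R(w,\psi),
\]
where $g^n(t)=\int_0^{M}h^n(\xi)K^n(t+\xi)\,\dd \xi$. Using Cauchy--Schwarz together with the $\Lcal^2_{loc}$-convergence $K^n\to K$ and the uniform convergence $h^n\to h$ on $[0,M]$, the forcing differences $g^n-g$ and $(K^n-K)\ast\mathcal R(w,\psi)$ vanish uniformly on $[0,T]$. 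The key ingredient then is a uniform $L^\infty$-bound $\sup_{n\ge 1}\sup_{t\in[0,T]}|\psi^n(t)|\le C$, which I would extract from the invariance property~\eqref{eq:invRicatti} together with the Riccati-equation results of Appendix~\ref{sec:app:ricatti} (in particular Proposition~\ref{prop_existence_Psi}); once $\psi^n$ is confined to a fixed bounded set, $\mathcal R(w,\cdot)$ is Lipschitz there, and a Gronwall-type inequality for convolution Volterra equations closes the estimate. The representation~\eqref{riccati_alt} then transports this to convergence of $\Psi^n$ itself on the relevant compact region.

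For the probabilistic step, by~\eqref{riccati_alt} and the support hypothesis $\supp(h^n)\subseteq[0,M]$, for each $t\in[0,T]$,
\[
\int_0^{\infty}\Psi^n(T-t,\xi;w,h^n)v_t^n(\xi)\,\dd \xi=\int_0^{T-t}\mathcal R(w,\psi^n(T-t-\xi))v_t^n(\xi)\,\dd \xi+\int_0^{M}h^n(u)v_t^n(T-t+u)\,\dd u,
\]
and analogously for the limit. Both integrals are over bounded intervals with integrands uniformly bounded on compacts in $(n,t,\xi)$; the uniform convergence of $\psi^n\to\psi$ on $[0,T]$, the convergence $h^n\to h$ in $\Bcal([0,M],\mathbb C)$, and the a.s.\ uniform-on-compacts convergence of $v^n$ to $v$ then yield a.s.\ uniform convergence in $t\in[0,T]$ of the full expression. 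Adding the elementary a.s.\ uniform convergence of $w(X^n_\cdot + r(T-\cdot))$ and composing with the exponential produces the desired a.s.\ convergence $L^n\to L$ in $\Ccal[0,T]$, hence convergence in law.

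The main obstacle is the uniform $L^\infty$-bound on $\psi^n$ needed to activate the local Lipschitz property of $\mathcal R(w,\cdot)$; without it, the quadratic nonlinearity in the Riccati equation can in principle cause blow-up, and the Gronwall closure breaks down. It is precisely here that the invariance class $\Gcal^*_{K^n}$ enters: the structural constraints imposed by $h^n\in\Gcal^*_{K^n}$, together with the uniform assumptions on the kernels $K^n$ in Assumption~\ref{assump:kernelsv_0}, are what guarantee a controlled behaviour of $\Psi^n$ (and hence of $\psi^n$) uniformly in $n$.
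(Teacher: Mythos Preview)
Your proposal is correct and follows essentially the same route as the paper: Skorokhod representation via Proposition~\ref{prposi_converg_un}, a deterministic stability result for the Riccati--Volterra equation (the paper packages this as Lemma~\ref{lem:conv_psi}), and then the continuous mapping theorem applied to the explicit formula~\eqref{Formula_Fourier_Laplace_transform_ut_xt}. Your sketch of the deterministic step --- subtract the equations, control the forcing terms via $\Lcal^2$-convergence of the kernels and uniform convergence of $h^n$, then close with a Gronwall argument once a uniform bound on $\psi^n$ is available --- is exactly what underlies the proof of Lemma~\ref{lem:conv_psi}, which the paper only sketches by pointing to \cite[Theorem~4.1]{abi2019multifactor}; you have also correctly identified the uniform $L^\infty$-bound on $\psi^n$ as the crux, and its source in the invariance structure $\Gcal^*_{K^n}$ combined with the uniform kernel hypotheses of Assumption~\ref{assump:kernelsv_0}.
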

 
The proof of Proposition \ref{prop:convFourierLaplace} is based on Proposition \ref{prposi_converg_un} and the following lemma, whose proof can be found in Appendix \ref{sec:app:ricatti}.
 
 \begin{lemma}
 \label{lem:conv_psi}
Assume that the hypotheses of Proposition \ref{prop:convFourierLaplace} hold. Let $\Psi$ (resp., $\Psi^n$) be solutions to the Riccati equation \eqref{ricattieq} with kernel $K$ (resp., $K^n$) and initial condition $h$ (resp., $h^n$). Define
\[
\psi(t)= \int_{0}^{\infty}\Psi(t,\xi;w,h) K(\xi)\dd \xi,\quad \psi^n(t)= \int_{0}^{\infty}\Psi^n(t,\xi;w,h^n) K^n(\xi)\dd \xi.
\]
Then, as $n$ goes to infinity, $\psi^n$ converges to $\psi$ in $\Ccal[0,T]$. Moreover, letting $\widetilde{M}=\max\{M,T\}$, the support of $\Psi^n(t,\cdot;w,h^n)$ is contained in $[0,\widetilde{M}]$ for  all $n\ge 1$ and $t\le T$, and $\Psi^n(t,\cdot;w,h^n)$ converges to $\Psi(t,\cdot;w,h)$ in $\Bcal([0,\widetilde{M}],\mathbb C)$ uniformly in $t\in [0,T]$.
 \end{lemma}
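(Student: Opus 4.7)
The plan is to recast everything through the scalar Riccati--Volterra equation \eqref{eq_psi} for $\psi^n$, deduce convergence $\psi^n \to \psi$ in $\Ccal[0,T]$ via a Volterra Gronwall estimate, and then recover the convergence and support of $\Psi^n$ from the explicit representation \eqref{riccati_alt}. The crucial preliminary ingredient will be a uniform-in-$n$ a priori bound $\sup_{n\ge 1}\|\psi^n\|_{\Ccal[0,T]}<\infty$, which I would obtain by tracking the constants in the existence argument behind Proposition \ref{prop_existence_Psi}: these depend only on $\lambda,\eta,\rho,w$, on $\|K^n\|_{\Lcal^2(0,T+M)}$, and on $\|h^n\|_{\Bcal([0,M],\C)}$, and are therefore uniformly controlled by Assumption \ref{assump:kernelsv_0}\ref{assump:kernelsv_0:1}--\ref{assump:kernelsv_0:3} together with the hypothesis $h^n \to h$ in $\Bcal([0,M],\C)$. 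Establishing this uniform bound in the complex Riccati regime ${\rm Re}(w)\in[0,1]$ is the step I expect to require the most care, since $\mathcal R(w,\cdot)$ is genuinely quadratic and only the invariance property \eqref{eq:invRicatti} (inherited from $h^n\in\Gcal^*_{K^n}$) prevents finite-time blowup.

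Once the uniform bound is in place, I would subtract the two Riccati--Volterra equations and split the residual as
\begin{equation*}
\psi^n(t) - \psi(t) = S^n(t) + \bigl((K^n-K)\ast \mathcal R(w,\psi)\bigr)(t) + \bigl(K^n \ast (\mathcal R(w,\psi^n) - \mathcal R(w,\psi))\bigr)(t),
\end{equation*}
where $S^n(t) := \int_0^\infty h^n(\xi) K^n(t+\xi)\,\dd \xi - \int_0^\infty h(\xi) K(t+\xi)\,\dd \xi$. Using $\supp(h^n)\cup\supp(h)\subseteq[0,M]$, Cauchy--Schwarz, the convergence $h^n \to h$ in $\Bcal([0,M],\C)$, and $K^n \to K$ in $\Lcal^2_{loc}$ (Assumption \ref{assump:kernelsv_0}\ref{assump:kernelsv_0:3}), the source difference $S^n$ tends to $0$ uniformly on $[0,T]$; the second term vanishes uniformly on $[0,T]$ as well, exploiting in addition the boundedness of $\mathcal R(w,\psi)$ on $[0,T]$. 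For the third term, the polynomial form of $\mathcal R$ combined with the uniform bounds on $\psi^n,\psi$ yields $|\mathcal R(w,\psi^n(s)) - \mathcal R(w,\psi(s))| \le C|\psi^n(s) - \psi(s)|$, and a standard generalized Gronwall inequality for convolution inequalities driven by the positive kernel $|K^n|$, whose $\Lcal^2$-norm on $[0,T]$ is uniformly bounded, then delivers $\|\psi^n - \psi\|_{\Ccal[0,T]} \to 0$.

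For the statements concerning $\Psi^n$, I would read them off the representation \eqref{riccati_alt}. The two regions $\{\xi \ge t\}$ and $\{\xi < t\}$ yield contributions $h^n(\xi-t)$ and $\mathcal R(w,\psi^n(t-\xi))$, whose support in $\xi$ is contained in $[t,t+M]$ and $[0,t)$ respectively; for $t\in[0,T]$ these sit inside $[0,\widetilde M]$, giving the support claim. For the uniform convergence on this interval I would split the pointwise difference accordingly: on $\{\xi \ge t\}$ one has $|\Psi^n(t,\xi) - \Psi(t,\xi)| = |h^n(\xi-t) - h(\xi-t)| \le \|h^n - h\|_{\Bcal([0,M],\C)}$, while on $\{\xi < t\}$ the local Lipschitz property of $\mathcal R(w,\cdot)$ together with the uniform bound on $\psi^n,\psi$ gives $|\mathcal R(w,\psi^n(t-\xi)) - \mathcal R(w,\psi(t-\xi))| \le C\|\psi^n - \psi\|_{\Ccal[0,T]}$. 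Both bounds are independent of $(t,\xi)\in[0,T]\times[0,\widetilde M]$ and tend to $0$ by the preceding step, which completes the argument.
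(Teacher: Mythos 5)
Your plan follows essentially the same route as the paper: the paper defers the $\psi^n\to\psi$ step to the stability theorem \cite[Theorem 4.1]{abi2019multifactor}, and your three-term decomposition of $\psi^n-\psi$ together with the convolution Gronwall estimate and the uniform a priori bound is precisely the argument underlying that reference (with nonzero initial curves, as the paper notes). You also correctly isolate the one genuinely delicate point, namely that the a priori bound in the complex regime rests on the invariance property $\Psi^n(t,\cdot;w,h^n)\in\Gcal^*_{K^n}$, and the recovery of the statements on $\Psi^n$ from the representation \eqref{riccati_alt} matches the paper's intention.

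One slip worth flagging, which is also present in the paper's own statement: your analysis of \eqref{riccati_alt} correctly shows that for fixed $t$ the support of $\Psi^n(t,\cdot;w,h^n)$ is contained in $[t,t+M]\cup[0,t)=[0,t+M]$, so for $t\le T$ it sits inside $[0,T+M]$. But you then assert that this is contained in $[0,\widetilde M]$ with $\widetilde M=\max\{T,M\}$; that containment fails whenever $T>0$ and $M>0$, since $T+M>\max\{T,M\}$. The correct uniform support bound is $[0,T+M]$. This is harmless for the way the lemma is used in Proposition \ref{prop:convFourierLaplace} (what matters is only that the supports lie in a single compact set uniformly in $n$ and $t$), but the numerical bound should be corrected both in your write-up and, in fact, in the lemma's statement.
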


 \begin{proof}[Proof of Proposition \ref{prop:convFourierLaplace}]
By Proposition \ref{prposi_converg_un} and Skorohod's representation theorem we can construct $(X^n,v^n)$ and $(X,v)$ on the same probability space such that, as $n$ goes to infinity, $(X^n,v^n)$ converges almost surely to $(X,v)$ in $\Ccal(\R_+^2,\R^2)$. This observation and Lemma \ref{lem:conv_psi} imply that
\[
\lim_{n\to\infty}\sup_{t\in[0,T]}\left|w X^n_t+\int_0^{\infty}\Psi^n(t,\xi;w,h^n)v^n_t(\xi)\dd \xi-w X_t-\int_0^{\infty}\Psi(t,\xi;w,h)v_t(\xi)\dd\xi\right|=0,\quad\text{a.s.}
\]
Hence, $w X^n+\int_0^{\infty}\Psi^n(\cdot,\xi;w,h^n)v^n_t(\xi)\dd \xi$ converges in law to $w X+\int_0^{\infty}\Psi(\cdot,\xi;w,h)v_t(\xi)\dd\xi$ in $\Ccal[0,T]$. An application of the continuous mapping theorem with the exponential function, together with Proposition \ref{prop:characteristic_function_X_u_Xn_un}, yields the conclusion.
 \end{proof}
 
We now possess all the elements necessary for the proof of Theorem \ref{thm:main}. 

\section{Proof of the main convergence result}
\label{sec:proof}

We break down the argument into different parts. We start by establishing, in the next section, the convergence of the Bermudan option prices as stated in \eqref{convbermud}. To this end, we will consider a more general payoff structure that is better suited for an inductive argument.

\subsection{Convergence of Bermudan option prices}
Throughout this section we will use the notation
\[
\langle h_1, h_2\rangle=\int_0^{\infty}h_1(\xi) h_2(\xi)\dd \xi
\]
for $h_1\in \Bcal_c(\mathbb R_+,\mathbb C)$ and $h_2 \in \Ccal$. In addition, for a given finite set of indices $J$, we define
\begin{equation*}
\Dcal_J=\{(x,(\eta_j)_{j\in J})\in(\R,\C^{\# J} ):\ {\rm Re}(\eta_j)\le 0\text{ for all $j\in J$}\}.
\end{equation*}

We will consider options with intrinsic payoff processes $(Z_t)_{0\leq t\leq T}$ defined as 
\begin{equation}\label{eq:payoffZ}
Z_t=\left\lbrace\begin{array}{ll}
f(X_t), & \textrm{ for }0\leq t<T,\\
g(X_T, (\langle h_j,v_T\rangle)_{j\in J}), &  \textrm{ for } t=T,
\end{array}\right.
\end{equation}
where $v$ denotes the adjusted forward process \eqref{forward_process}, $J$ is a finite set of indexes, $f\in\mathcal C_b(\R)$, $g\in\mathcal C_b(\mathcal D_J)$, $h_j\in\mathcal G_K^*$ for all $j\in J$, and $(X_T,(\langle h_j,v_T\rangle)_{j\in J})\in\mathcal D_J$. In this setting, the Bermudan option discrete value process  over the grid $(t_i)_{i=0}^N$ takes the form 
\begin{equation}
\label{BO2}
U^N_i= \esssup_{\tau\in\mathcal T^N_{t_i,T}}\E\Big[{\rm e}^{-r(\tau-t_i)}Z_{\tau}\vert\mathcal F_{t_i}\Big],\quad 0\leq i\leq N.
\end{equation}

For the approximating models, and in an analogous manner, we will consider options with payoff processes $(Z^n_t)_{0\leq t\leq T}$ defined as
\begin{equation}\label{eq:payoffZn}
Z^n_t=\left\lbrace\begin{array}{ll}
f(X^n_t), & \textrm{ for }0\leq t<T,\\
g(X^n_T, (\langle h^n_j,v^n_T\rangle)_{j\in J}), &  \textrm{ for } t=T,
\end{array}\right.
\end{equation}
where $h_j\in\mathcal G_{K^n}^*$, for all $j\in J$, and $(X^n_T,(\langle h^n_j,v^n_T\rangle)_{j\in J})\in\mathcal D_J$. The Bermudan option discrete value process,  in the approximated model and over the grid $(t_i)_{i=0}^N$, takes the form 
\begin{equation}\label{BO2n}
U^{N,n}_i= \esssup_{\tau\in\mathcal T^N_{t_i,T}}\E^n\Big[{\rm e}^{-r(\tau-t_i)}Z^n_{\tau}\vert\mathcal F^n_{t_i}\Big],\quad 0\leq i\leq N.
\end{equation}
The following is the main result of this section.
\begin{theorem}\label{T:convBermuda}
Suppose that Assumptions \ref{assump:kernelsv_0_original} and  \ref{assump:kernelsv_0} hold. Let $X$ (resp., $X^n$) be as in \eqref{defXV} (resp., \eqref{eq:approxSVE}), and let $v$ (resp., $v^n$) be as in \eqref{forward_process} (resp., \eqref{forward_process_approx}). Fix $T\ge 0$, $J$ a finite set of indexes, $f\in\mathcal C_b(\R)$, $g\in\mathcal C_b(\mathcal D_J)$, and $(h^n)_{n\ge 1}$ with $h^n\in\mathcal{G}^*_{K^n}$, $n\ge 1$. Assume that there is $M\ge 0$ such that
\[
{\rm supp}(h^n)\subseteq [0,M],\,\,n\ge 1;\quad\text{ and }\quad h^n\to h\in\Gcal_K^* \text{ in $\Bcal([0,M],\mathbb C)$,}\quad\text{as $n\to\infty$.}
\]
Then
\[
U_i^{N,n}\text{ converges in law to }U^N_i,\quad i=0,\ldots,N,\quad\text{as $n\to\infty$,}
\]
where $U^N_i$ and $U_i^{N,n}$ are given by \eqref{BO2} and \eqref{BO2n}, respectively.
\end{theorem}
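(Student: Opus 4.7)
We propose a proof by backward induction on $i\in\{N,N-1,\ldots,0\}$, establishing at each step the joint convergence in law of $(X^n,v^n,U_N^{N,n},U_{N-1}^{N,n},\ldots,U_i^{N,n})$ to $(X,v,U_N^N,\ldots,U_i^N)$ in $\Ccal(\R_+^2,\R^2)\times\R^{N-i+1}$. Using Proposition \ref{prposi_converg_un} together with Skorohod's representation theorem, we may assume without loss of generality that $(X^n,v^n)\to(X,v)$ almost surely in $\Ccal(\R_+^2,\R^2)$. For the base case $i=N$, the uniform support condition $\mathrm{supp}(h_j^n)\subseteq[0,M]$, the convergence $h_j^n\to h_j$ in $\Bcal([0,M],\C)$, and the uniform convergence of $v_T^n$ to $v_T$ on $[0,M]$ together give $\langle h_j^n,v_T^n\rangle\to\langle h_j,v_T\rangle$ almost surely; continuity and boundedness of $g$ then yield $U_N^{N,n}\to U_N^N$ a.s.

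For the inductive step at level $i$, the Bermudan dynamic programming principle reads
\[
U_i^{N,n}=\max\bigl(f(X_{t_i}^n),\ e^{-r(t_{i+1}-t_i)}\,\E^n[U_{i+1}^{N,n}\mid\mathcal F_{t_i}^n]\bigr),
\]
and similarly in the limit. As $\max$ and $f$ are continuous, it suffices to establish joint convergence in law of $\E^n[U_{i+1}^{N,n}\mid\mathcal F_{t_i}^n]$ to $\E[U_{i+1}^N\mid\mathcal F_{t_i}]$ together with $(X^n,v^n)$. The principal difficulty here is that almost sure convergence of $U_{i+1}^{N,n}$ alone does \emph{not} imply convergence of the conditional expectations, since the filtrations $\mathcal F^n$ and $\mathcal F$ are generated by different Brownian motions; one must instead exploit the Markov structure of the state and the affine character of the model.

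To this end, we invoke the Markov property from \cite{abi2019markovian} to write $U_{i+1}^{N,n}=\Phi_{i+1}^n(X_{t_{i+1}}^n,v_{t_{i+1}}^n)$ and $U_{i+1}^N=\Phi_{i+1}(X_{t_{i+1}},v_{t_{i+1}})$ for uniformly bounded measurable functionals, and approximate $\Phi_{i+1}^n,\Phi_{i+1}$ in sup-norm by finite linear combinations of Fourier--Laplace characters $\chi_{w,h}(x,v)=\exp(wx+\int_0^\infty h(\xi)v(\xi)d\xi)$ with $\mathrm{Re}(w)\in[0,1]$ and $h\in\Gcal_{K^n}^\ast$ (resp.\ $\Gcal_K^\ast$). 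Admissibility of sufficiently many characters is ensured by the observation that for any real $h_0\in\Bcal_c(\R_+,\R)$ one has $-\mathrm{Re}(\int ih_0(\xi)K^n(t+\xi)d\xi)\equiv 0\in\Gcal_{K^n}$, so that $ih_0\in\Gcal_{K^n}^\ast$; classical Fourier inversion applied in the real and imaginary parts of the relevant pairings therefore supplies such characters at the terminal step and, by composition, at every preceding step. For a single character, Proposition \ref{prop:characteristic_function_X_u_Xn_un} identifies $\E^n[\chi_{w,h}(X_{t_{i+1}}^n,v_{t_{i+1}}^n)\mid\mathcal F_{t_i}^n]$ as another character $\chi_{w,\Psi^n(t_{i+1}-t_i,\cdot;w,h)}(X_{t_i}^n,v_{t_i}^n)$, and Lemma \ref{lem:conv_psi} combined with Proposition \ref{prop:convFourierLaplace} supplies the desired joint convergence of these quantities with the Markov state. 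Passing to the limit in the sup-norm approximation and invoking the continuous mapping theorem completes the induction.

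The most delicate point, and the step I expect to be the main obstacle, is organizing the character approximations in a way that is consistent and uniform in $n$: the non-analytic $\max$ operation must be smoothly approximated to stay in the algebra generated by the characters; the Riccati solutions $\Psi^n$ produced at successive backward steps must have their supports tracked through the composition (Lemma \ref{lem:conv_psi} bounds them by $\max\{M,T\}$ at each individual step, but one must argue inductively that successive steps do not blow up the relevant support); and the parameters $(w,h)$ must remain in the admissible range $\mathrm{Re}(w)\in[0,1]$, $h\in\Gcal_{K^n}^\ast$ throughout the backward recursion. Once these bookkeeping issues are handled and the approximation is shown to be uniform in $n$, the argument propagates cleanly from $i=N$ down to $i=0$, yielding the stated convergence of Bermudan values.
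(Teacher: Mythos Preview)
Your proposal identifies all the right ingredients—the affine transform formula, Riccati convergence, and approximation by Fourier--Laplace characters—but the induction is organised on the wrong variable, and this is precisely what creates the gap you flag at the end. You induct backward on the time index $i$ and then try to approximate the value functionals $\Phi_{i+1}^n$, $\Phi_{i+1}$ by characters in sup-norm. The difficulty is that these are functionals on the infinite-dimensional state $(x,v)\in\R\times\Ccal$, and there is no a priori reason such an approximation exists, let alone uniformly in $n$; Stone--Weierstrass is not available on this domain. What would make the approximation feasible is the structural fact that $\Phi_{i+1}$ depends on $v$ only through \emph{finitely many pairings} $\langle h_j,v\rangle$ with $h_j\in\Gcal_K^*$—but that is exactly the inductive content you need, and your induction hypothesis at level $i+1$ (joint convergence in law) is too weak to carry it.

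The paper resolves this by inducting on the number of exercise dates $N+1$, with the terminal payoff allowed to be any $g\in\Ccal_b(\Dcal_J)$ evaluated at $(X_T,(\langle h_j,v_T\rangle)_{j\in J})$. With this stronger hypothesis a single backward step suffices, handled in three layers: first for $g$ a finite sum of admissible characters, where the affine formula reduces $U^N_{N-1}$ explicitly to $\hat g\bigl(X_{t_{N-1}},(\langle y_k(\Delta t_{N-1}),v_{t_{N-1}}\rangle)_{k\in I}\bigr)$ with Riccati solutions $y_k\in\Gcal_K^*$, so the $N$-date induction hypothesis applies directly with new terminal data $(\hat g,(y_k))$; then for compactly supported $g$ via Stone--Weierstrass on a \emph{finite-dimensional} compact $\widetilde\Gamma\subset\Dcal_J$ (the algebra being chosen with $\alpha_{j,k}\ge 0$ so that the characters remain in $\Gcal_K^*$); and finally for general bounded $g$ by a cutoff and tightness. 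The bookkeeping you anticipate—support control, admissibility of $(w,h)$ along the recursion, uniformity in $n$—is thereby absorbed into one backward step plus the induction hypothesis, instead of being propagated through all $N$ steps.
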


\begin{proof}
We prove the result by induction on the number of exercise dates $N+1$.\\
\textit{Initialization}. Assume that $N=0$. We just have to prove that
$$\lim_{n\to\infty}g(X_0, (\langle h^n_j,v^n_0\rangle)_{j\in J})=g(X_0, (\langle h_j,v_0\rangle)_{j\in J}).$$
This follows from continuity of $g$ on $\mathcal D_J$, because our hypotheses readily imply 
$$\lim_{n\to\infty}\langle h^n_j,v^n_0\rangle  =  \langle h_j,v_0\rangle.$$
\textit{Induction}.  Assume that the claim holds for Bermudan options with $N$ exercise dates. We have to consider three different cases.
\begin{itemize}
\item[(1)] Suppose that $g$ on $\mathcal D_J$ has the form
\begin{equation*}
g(x,(\eta_j)_{j\in J})={\rm Re}\left(\sum_{k\in I}c_k\exp\left(i\left(\nu_k x+\sum_{j\in J}\beta_{j,k} {\rm Im}(\eta_j)\right) +\sum_{j\in J}\alpha_{j,k}{\rm Re}(\eta_j) \right)\right),
\end{equation*}
with $I$ a finite set of indices, $c_k\in\C$, $\nu_k\in\R$, $\alpha_{j,k}\geq 0$, $\beta_{j,k}\in\R$. In this case the value of the option at maturity (in the original Volterra model) is
\[
Z_T={\rm Re}\left(\sum_{k\in I}c_k\exp(i\nu_k X_T + \langle y_k(0),v _T\rangle)\right),
\]
with 
\[
y_k(0)=\sum_{j\in J}\alpha_{j,k} {\rm Re}(h_j) + i\sum_{j\in J}\beta_{j,k} {\rm Im}(h_j),\quad k\in I.
\]
One can verify that for each $k\in I$, $y_k(0)\in\Gcal_K^*$ thanks to the fact that $\alpha_{j,k}\ge 0$, $j\in J$, and the definition of $\Gcal_K^*$ in \eqref{eq:dualspace} and $\Gcal_K$ in \eqref{eq:GK}. Since the process $U^N$ discounted coincides with the Snell envelope of the discounted payoff process, we have
\begin{eqnarray*}
U^N_{{N-1}} & = & \max\left(Z_{t_{N-1}},\ e^{-r(\Delta t_{N-1})}\E\Big[U^N_T\vert\mathcal F_{t_{N-1}}\Big]\right)\\
& = & \max\left(f(X_{t_{N-1}}),\ e^{-r\Delta t_{N-1}}\E\Big[g(X_T,  (\langle h_j,v_T\rangle)_{j\in J})\vert\mathcal F_{t_{N-1}}\Big]\right),
\end{eqnarray*}
where $\Delta t_{N-1}=t_N-t_{N-1}=T-t_{N-1}$. According to the affine transform formula in Proposition \ref{prop:characteristic_function_X_u_Xn_un}, with $w$ being purely imaginary, the value of the option at time $N-1$ is then
\[
U^N_{{N-1}}=\max\left\{f(X_{t_{N-1}}),{\rm e}^{-r\Delta t_{N-1}}{\rm Re}\left(\sum_{k\in I}c_k {\rm e}^{i\nu_k (X_{t_{N-1}}+r\Delta t_{N-1})+\langle y_k(\Delta t_{N-1}),v_{t_{N-1}}\rangle}\right)\right\},
\]
where $y_k(\Delta t_{N-1})\in \Gcal^*_K$ is a solution at time $\Delta t_{N-1}$ of the associated Riccati equation (with initial condition $y_k(0)$), $k\in I$. Similarly, in the approximated model,  we have 
\[
U^{N,n}_{N-1}=\max\left\{f(X^n_{t_{N-1}}),{\rm e}^{-r\Delta t_{N-1}}{\rm Re}\left(\sum_{k\in I}c_k {\rm e}^{i\nu_k (X^n_{t_{N-1}}+r\Delta t_{N-1})+\langle y^n_k(\Delta t_{N-1}),v^n_{t_{N-1}}\rangle}\right)\right\},
\]
where $y^n_k(\Delta t_{N-1})\in\Gcal_{K^n}^*$ is a solution at time $\Delta t_{N-1}$ of the associated Riccati equation with initial condition 
\[
y^n_k(0)=\sum_{j\in J}\alpha_{j,k} {\rm Re}(h^n_j) + i\sum_{j\in J}\beta_{j,k} {\rm Im}(h^n_j)\in \Gcal_{K^n}^*.
\]
Propositions \ref{prposi_converg_un} and \ref{prop:convFourierLaplace} imply that $U^{N,n}_{N-1}$ converges in law to $U^{N}_{N-1}$. To prove that $U^{N,n}_{i}$ converges in law to $U^{N}_{i}$ for $i=0,\ldots,N-2$, we apply Lemma \ref{lem:conv_psi} together with the induction hypothesis in the case of a Bermudan option with maturity $t_{N-1}$, $N$ exercise dates, and final payoff $\hat g(X_{t_{N-1}},  (\langle \hat h_k,v_{t_{N-1}}\rangle)_{k\in I})$, where, for $k\in I$,  $\hat h_k= y_k(\Delta t)$ and 
\[
\hat{g}(x,(\eta_k)_{k\in I})=\max\left\{f(x),,{\rm e}^{-r\Delta t_{N-1}}{\rm Re}\left(\sum_{k\in I} c_k \exp\left(i\nu_k (x+r\Delta t_{N-1})+\eta_k\right)\right)\right\}. 
\]
Notice that $(X_{t_{N-1}},  (\langle \hat h_k,v_{t_{N-1}}\rangle)_{k\in I})\in\mathcal D_I$ thanks to the last implication in Proposition \ref{prop:characteristic_function_X_u_Xn_un}.

\item[(2)] Assume now that $g$ vanishes outside a compact set $\Gamma\subset\mathcal D_J$. \\
Let $\eps>0$. By tightness of the sequence $(X^n_T,v^n_T)$, its convergence to $(X_T,v_T)$, and the convergence of $h^n_j$ to $h_j$ for all $j\in J$, there exists a compact set $\Gamma'\subset\Dcal_J$ such that $\Gamma\subset \Gamma'$ and
\begin{equation}\label{E:restcompactproba0}
\P\left((X_T,(\langle h_j, v_T \rangle)_{j\in J})\notin \Gamma'\right)<\eps,\quad \P^n\left((X_T^n,(\langle h^n_j, v^n_T \rangle)_{j\in J})\notin \Gamma'\right)<\eps,\quad n\ge 1.
\end{equation}
Furthermore, we can assume that there exists a constant $A>0$ such that
\[
\Gamma'=\left\{(x,(\eta_j)_{j\in J})\in\Dcal_J: \ |x|+\max_{j\in J}(|\eta_j|)\leq A\right\}.
\]
Let $\Acal$ be an algebra of functions defined as follows. We say that a function $\hat g$ on $\Dcal_J$ belongs to $\Acal$ if it is of the form
\[
\hat g(x,(\eta_j)_{j\in J})={\rm Re}\left(\sum_{k\in I}c_k\exp\left(2\pi i \left(\frac{n_k}{2A} x +\sum_{j\in J}\frac{m_{k,j}}{2A}{\rm Im}(\eta_j)\right)+\sum_{j\in J}\alpha_{j,k}{\rm Re}(\eta_j)\right)\right),
\]
with $I$ a finite set of indices, $c_k\in\mathbb{C}$, $\alpha_{j,k}\ge 0$, and $n_k$ and $m_{k,j}$ integers. We also define the following compact subset of $\Dcal_J$:
\[
\widetilde{\Gamma}=\left\{(x,(\eta_j)_{j\in J})\in\Dcal_J:\ |x|+\max_{j\in J}|{\rm Im}(\eta_j)|)\leq A\right\}.
\]
Notice that we have $\Gamma'\subset \widetilde{\Gamma}$, and if we denote by $\Acal|_{\widetilde{\Gamma}}$ the restriction of all the functions in $\Acal$ to $\widetilde{\Gamma}$, then $\Acal|_{\widetilde{\Gamma}}$ is a subset of $\Ccal_0(\widetilde{\Gamma},\R)$---the space of continuous functions that vanish at infinity---that satisfies the hypothesis of the Stone--Weierstrass Theorem. Therefore, there exists $\hat g\in\Acal$ such that
\begin{equation}\label{E:supG-GtildeGammatilde}
\sup_{(x,(\eta_j)_{j\in J})\in\widetilde{\Gamma}}|g(x,(\eta_k)_{k\in I})-\hat g(x,(\eta_k)_{k\in I})|\leq \eps.
\end{equation}

Now observe that for all $(x,(\eta_j)_{j\in J})\in\Dcal_J$, there exists $(x',(\eta_j')_{j\in J})\in \widetilde{\Gamma}$ such that $\hat g(x,(\eta_j)_{j\in J})=\hat g(x',(\eta_j')_{j\in J})$. Hence
\begin{equation}\label{E:supGtildeDcal}
\|\hat g\|_{\infty}\leq \eps +\|g\|_{\infty},
\end{equation}
where $\|\cdot\|_{\infty}$ denotes the sup norm on $\Dcal_J$.

Denote by $\widehat U^N$ (resp., $\widehat U^{N,n}$) the value processes for the Bermudan options corresponding to the payoff process $\widehat Z$ (resp., $\widehat Z^n$) obtained by replacing $g$ by $\widehat g$ in \eqref{eq:payoffZ} (resp., \eqref{eq:payoffZn}). As shown in the previous case, we already know that
\begin{equation}\label{eq:convUhat}
\text{$\widehat U^{N,n}_i$ converges in law to $\widehat U^N_i$ for $i=0,\ldots,N-1$.}
\end{equation}  
Moreover, since the process $U^N$ discounted coincides with the Snell envelope of the discounted payoff process, we have
$$
|U^N_i-\widehat U^N_i| \leq  \E\left[ | U^N_{i+1}-\widehat U^N_{i+1}| \bigg| \mathcal F_{t_i}\right],\quad i=0,\ldots,N-1.$$
By iterating this inequality, we deduce
$$
|U^N_i-\widehat U^N_i| \leq \E\left[|g(X_T,(\langle h_j, v_T \rangle)_{j\in J})-\hat g(X_T,(\langle h_j, v_T \rangle)_{j\in J})|\vert \mathcal F_{t_i}\right],\quad i=0,\ldots,N.$$
Therefore, thanks to the inequalities \eqref{E:restcompactproba0}, \eqref{E:supG-GtildeGammatilde}, and \eqref{E:supGtildeDcal},
\begin{equation}\label{eq:UhatU}
\E\Big[|U^N_i-\widehat U^N_i| \Big]\leq \eps (1+\|\hat g\|_{\infty})\leq \eps(1+\eps+\|g\|_{\infty}),\quad i=0,\ldots,N.
\end{equation}
Similarly, we can prove that
\begin{equation}\label{eq:UnhatUn}
\E^n\Big[|U^{N,n}_i-\widehat U^{N,n}_i|\Big]\leq \eps(1+\eps+\|g\|_{\infty}),\quad i=0,\ldots,N,\,n\ge 1.
\end{equation}
Since $\eps$ is arbitrary we conclude, using \eqref{eq:convUhat}, \eqref{eq:UhatU}, and \eqref{eq:UnhatUn}, that $U^{N,n}_i$ converges in law to $U^N_i$, for $i=0,\ldots, N$.

\item[(3)] Suppose now that $g$ belongs to $\mathcal C_b(\Dcal_J)$.\\
Let $\eps>0$ be arbitrary. As before, tightness of the sequence $(X^n_T,v_T^n)$, its convergence to $(X_T,v_T)$, and the convergence of $h^n_j$ to $h_j$, $j\in J$, imply that there is a compact set $\Gamma\subset\Dcal_J$ such that
\begin{equation*}
\P\left((X_T,(\langle h_j, v_T \rangle)_{j\in J})\notin \Gamma\right)<\eps,\quad \P^n\left((X_T^n,(\langle h^n_j, v^n_T \rangle)_{j\in J})\notin \Gamma\right)<\eps,\quad n\ge 1.
\end{equation*}

Let $\varphi:\Dcal_J\to[0,1]$ be a function of compact support such that $\varphi\equiv 1$ on $\Gamma$.

Denote $\overline U^N$ (resp., $\overline U^{N,n}$) the value processes for the Bermudan options corresponding to the payoff process $\overline Z$ (resp., $\overline Z^n$) obtained by replacing $g$ by $\overline g=\varphi g$ in \eqref{eq:payoffZ} (resp., \eqref{eq:payoffZn}). As shown in the previous case, we already know that
\begin{equation}\label{eq:convUoverline}
\text{$\overline U^{N,n}_i$ converges in law to $\overline U^N_i$ for $i=1,\ldots,N-1$.}
\end{equation}
Additionally, we have
\begin{equation}\label{eq:UoverlineU}
\begin{split}
\E\Big[|U^N_i-\overline{U}^N_i|\Big]&\leq \E\left[|g(X_T,\langle h_j, v_T \rangle)_{j\in J}))-\overline g(X_T,\langle h_j, v_T \rangle)_{j\in J}))|\right]\\
&\leq \eps \|g\|_{\infty},
\end{split}
\end{equation}
and
\begin{equation}\label{eq:UnoverlineUn}
\E^n\Big[|U^{N,n}_i-\overline{U}^{N,n}_i|\Big]\leq  \eps \|g\|_{\infty}.
\end{equation}
Since $\eps$ is arbitrary we conclude, from \eqref{eq:convUoverline}, \eqref{eq:UoverlineU}, and \eqref{eq:UnoverlineUn}, that $U^{N,n}_i$ converges in law to $U^N_i$, for $i=0,\ldots, N$.
\end{itemize}
\end{proof}

\subsection{Approximation of American options with Bermudan options}

The following theorems establish the convergence of Bermudan option prices towards American option prices and they are crucial in order to prove Theorem \ref{thm:main}.

\begin{theorem}
\label{T:convbermudaamerican}
Suppose that Assumption \ref{assump:kernelsv_0_original} holds. Let $(X,V)$ be the unique weak solution to \eqref{defXV}. For a function $f\in\Ccal^2_b(\R)$ consider the American and Bermudan option prices given by \eqref{AO} and \eqref{BO}, respectively. Then
\begin{equation}\label{eq:BvsA}
0\leq P_0-P_0^N\leq c\left(1+\E\Big[\sup_{t\in[0,T]}V_t\Big]\right)\pi_N,
\end{equation}
where $\pi_N$ is the mesh of the partition $(t_i)_{i=0}^N$ and $c$ is a constant that only depends on $r,T$ and $\|f^{(m)}\|_{\Ccal[0,T]}$, $m=0,1,2$.
\end{theorem}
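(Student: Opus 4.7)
The lower bound $P_0\ge P_0^N$ is immediate, since $\mathcal T^N_{0,T}\subseteq \mathcal T_{0,T}$ and the two suprema differ only by the restriction on the admissible stopping times.

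For the upper bound the plan is a standard time-discretization argument, carried out by Itô's formula on the semimartingale $(e^{-rt}f(X_t))$. Given any $\tau\in\mathcal T_{0,T}$, I define the grid-valued approximation
\begin{equation*}
\tau_N=\min\{t_i:\ t_i\ge \tau\},
\end{equation*}
which is an $\mathbb F$-stopping time with values in $\{t_0,\dots,t_N\}$ (because $\{\tau_N\le t_i\}=\{\tau\le t_i\}\in\mathcal F_{t_i}$) and which satisfies $0\le \tau_N-\tau\le \pi_N$. Since $\tau_N\in \mathcal T^N_{0,T}$,
\begin{equation*}
P_0-P_0^N\le \sup_{\tau\in\mathcal T_{0,T}}\E\big[e^{-r\tau}f(X_{\tau})-e^{-r\tau_N}f(X_{\tau_N})\big],
\end{equation*}
so it suffices to bound this difference uniformly in $\tau$.

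Apply Itô's formula to $e^{-rs}f(X_s)$. Since the log price $X$ is a semimartingale with $d\langle X\rangle_s=V_s\,\dd s$, this yields
\begin{equation*}
e^{-r\tau_N}f(X_{\tau_N})-e^{-r\tau}f(X_{\tau})
=\int_{\tau}^{\tau_N}e^{-rs}\mathcal A f(X_s,V_s)\,\dd s+\int_{\tau}^{\tau_N}e^{-rs}f'(X_s)\sqrt{V_s}\,\dd B_s,
\end{equation*}
with $B=\rho W+\sqrt{1-\rho^2}W^{\perp}$ and
\begin{equation*}
\mathcal A f(x,v)=-rf(x)+\Bigl(r-\tfrac v2\Bigr)f'(x)+\tfrac v2 f''(x).
\end{equation*}
Because $f\in\Ccal_b^2(\R)$, there is a constant $c$ depending only on $r$ and $\|f^{(m)}\|_{\Ccal[0,T]}$ for $m=0,1,2$ such that $|\mathcal A f(x,v)|\le c(1+v)$. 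The stochastic integral is a true martingale by standard arguments: $(f')^2$ is bounded and $\E\int_0^T V_s\,\dd s<\infty$ thanks to \eqref{eq:boundmoments}. Taking expectations and using $\tau_N-\tau\le \pi_N$ gives
\begin{equation*}
\big|\E\big[e^{-r\tau_N}f(X_{\tau_N})-e^{-r\tau}f(X_{\tau})\big]\big|
\le c\,\E\!\int_{\tau}^{\tau_N}(1+V_s)\,\dd s
\le c\,\pi_N\Bigl(1+\E\bigl[\sup_{t\in[0,T]}V_t\bigr]\Bigr),
\end{equation*}
uniformly in $\tau\in\mathcal T_{0,T}$, which delivers \eqref{eq:BvsA}.

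The only mild obstacle is the justification that the $\dd B$-integral is a true martingale (needed to kill the martingale part in expectation); this is routine given $f'\in\Ccal_b$ and the $L^1$-bound on $V$, but one has to invoke it explicitly. Everything else is bookkeeping with Itô's formula and the sup norms of $f$ and its first two derivatives.
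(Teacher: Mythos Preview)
Your proof is correct and follows essentially the same approach as the paper: round each stopping time up to the grid, apply It\^o's formula to $e^{-rt}f(X_t)$ on the interval $[\tau,\tau_N]$, and bound the drift by $c(1+V_s)$ using $f\in\Ccal_b^2$. The only cosmetic difference is that the paper works with an $\varepsilon$-optimal stopping time and lets $\varepsilon\to 0$, whereas you take the supremum over all $\tau$ directly; your explicit justification that the $\dd B$-integral is a true martingale is a detail the paper leaves implicit.
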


\begin{proof}
We obviously have $0\leq P_0-P^N_0$. Let $\varepsilon>0$. There exists $\tau_\varepsilon^*\in\mathcal T_{0,T}$, $\varepsilon$-optimal in the sense that
$$P_0\leq \E\Big[{\rm e}^{-r\tau_\varepsilon^*}f(X_{\tau_\varepsilon^*})\Big]+\varepsilon.$$
Now, we introduce the lowest stopping time taking values in $\{t_0,...,t_N\}$, greater than $\tau_\varepsilon^*$; that is
$$\tau_\varepsilon^{N,*}= \inf\{ t_k:\ t_k\geq \tau^*_{\varepsilon}\}.$$
We have that $\tau_\varepsilon^{N,*}$ belongs to $\mathcal T^N_{0,T}$.
Since the drift and the quadratic variation of $X$ are affine in $V$, applying It\^o's formula to the process $\left({\rm e}^{-rt}f(X_t)\right)_{0\leq t\leq T}$ between $\tau^{*}_\varepsilon$ and $\tau^{N,*}_\varepsilon$ yields
\begin{eqnarray}
P_0-P^N_0 & \leq & c\E\Big[\int_{\tau^{*}_\varepsilon}^{\tau^{N,*}_\varepsilon}(1+V_s)\, ds\Big]+\varepsilon\nonumber\\
 & \leq &  c\E\Big[(\tau^{N,*}_\varepsilon-\tau^{*}_\varepsilon)\sup_{t\in[0,T]}(1+V_t)\Big]+\varepsilon\nonumber\\
 & \leq &c\left(1+\E\Big[\sup_{t\in[0,T]}V_t\Big]\right)\pi_N +\varepsilon,\nonumber
\end{eqnarray}
where $c$ is a constant that only depends on $r,T$ and $\|f^{(m)}\|_{\Ccal[0,T]}$, $m=0,1,2$. Since $\varepsilon>0$ was arbitrary, we deduce \eqref{eq:BvsA}.
\end{proof}

As a direct consequence of \eqref{eq:boundsupvn} and Theorem \ref{T:convbermudaamerican}, we have the following convergence result for payoffs $f\in\Ccal^2_b(\R)$.
\begin{theorem}
Suppose that Assumptions \ref{assump:kernelsv_0_original} and  \ref{assump:kernelsv_0} hold. Let $(X,V)$ and $(X^n,V^n)$ be the unique weak solutions to \eqref{defXV} and \eqref{eq:approxSVE}, respectively. For a function $f\in\Ccal^2_b(\R)$ define $P$, $P^N$, $P^{n}$ and $P^{N,n}$ as in \eqref{AO}, \eqref{BO}, \eqref{AOn}, and \eqref{BO_N}, respectively.
Then
\begin{equation}\label{convbermam}
\lim_{\pi_N\to 0}\sup_{n\ge 1}|P^{N,n}_0-P^{n}_0|=\lim_{\pi_N\to 0}|P^N_0-P_0|=0.
\end{equation}
\end{theorem}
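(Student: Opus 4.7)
The plan is to reduce the statement to two observations: the already-proven Theorem \ref{T:convbermudaamerican} handles the original model directly, and the same argument can be transplanted to each approximating model with a constant independent of $n$, so that the uniform-in-$n$ bound reduces to an estimate on $\sup_n \E^n[\sup_{t\in[0,T]} V^n_t]$.

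First, for the second limit $|P^N_0-P_0|\to 0$, I just quote Theorem \ref{T:convbermudaamerican}: since $f\in\Ccal_b^2(\R)$, we have $0\le P_0-P_0^N\le c(1+\E[\sup_{t\in[0,T]}V_t])\pi_N$, and Proposition \ref{prop:existuniq} together with Lemma \ref{continuity_u} (or more directly the argument yielding \eqref{eq:boundsupvn} applied to $n=1$ case / the limiting model) implies $\E[\sup_{t\in[0,T]}V_t]<\infty$, so the right-hand side tends to $0$ as $\pi_N\to 0$.

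For the first limit $\sup_n|P^{N,n}_0-P^n_0|\to 0$, the key point is that the proof of Theorem \ref{T:convbermudaamerican} depends on the approximating model only through (i) the affine structure of the drift and quadratic variation of $X^n$ in terms of $V^n$, which is inherited verbatim from \eqref{eq:approxSVE}, and (ii) an $\varepsilon$-optimal stopping time argument combined with It\^o's formula applied to $({\rm e}^{-rt}f(X^n_t))$ on the intervals $[\tau^{*}_{\varepsilon},\tau^{N,*}_{\varepsilon}]$. Both steps go through identically in the approximating framework because each $(X^n,V^n)$ is a classical semimartingale (in fact, for the rough-Heston factor representation a diffusion, but all we need is the semimartingale structure of $X^n$, which \eqref{eq:approxSVE} provides). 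Repeating the calculation gives, with a constant $c$ depending only on $r$, $T$ and $\|f^{(m)}\|_{\Ccal[0,T]}$, $m=0,1,2$,
\begin{equation*}
0\le P^n_0-P^{N,n}_0\le c\Bigl(1+\E^n\bigl[\sup_{t\in[0,T]}V^n_t\bigr]\Bigr)\pi_N,\qquad n\ge 1.
\end{equation*}
Crucially, $c$ does not depend on $n$ because it is extracted from the payoff and the discount factor, not from the dynamics.

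Taking the supremum over $n$ and invoking the uniform moment bound \eqref{eq:boundsupvn} (itself an immediate corollary of Lemma \ref{continuity_u}), we obtain
\begin{equation*}
\sup_{n\ge 1}|P^{N,n}_0-P^n_0|\le c\Bigl(1+\sup_{n\ge 1}\E^n\bigl[\sup_{t\in[0,T]}V^n_t\bigr]\Bigr)\pi_N,
\end{equation*}
and the right-hand side tends to $0$ as $\pi_N\to 0$. The only real subtlety is confirming that the constant $c$ in Theorem \ref{T:convbermudaamerican} genuinely depends only on $r,T$ and the payoff derivatives; this is visible from the proof, where the It\^o expansion of $({\rm e}^{-rt}f(X_t))$ produces integrands bounded by $c(1+V_s)$ with $c$ coming solely from $\|f^{(m)}\|_{\Ccal[0,T]}$ and $r$. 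With that verified, the rest is just the uniform moment bound already established, and there is no genuine obstacle beyond writing out these observations carefully.
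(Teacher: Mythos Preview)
Your proposal is correct and matches the paper's approach exactly: the paper states this theorem as ``a direct consequence of \eqref{eq:boundsupvn} and Theorem \ref{T:convbermudaamerican}'' without further proof, and what you have written is precisely the unpacking of that sentence --- apply Theorem \ref{T:convbermudaamerican} to each approximating model, observe that the constant $c$ there depends only on $r,T$ and the payoff derivatives (hence is uniform in $n$), and invoke the uniform bound \eqref{eq:boundsupvn} to control $\sup_n \E^n[\sup_t V^n_t]$.
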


We are now ready to prove our main theorem.

\begin{proof}[Proof of Theorem \ref{thm:main}]
The convergence in \eqref{convbermud} is a direct consequence of Theorem \ref{T:convBermuda}. For a function $f\in\Ccal^2_b(\R)$, the limit \eqref{convamtoam} follows from \eqref{convbermud} and \eqref{convbermam}. It is then sufficient to show  \eqref{convamtoam} for a function $f\in\Ccal_b(\R)$ knowing that
\begin{equation}\label{eq0}
\lim_{n\to\infty}|P_0^{n,g}-P_0^g|=0,\quad g\in\mathcal C_b^2(\mathbb R),
\end{equation}
where $P_0^{n,g}$ and $P_0^g$ denote the prices at time zero of an American option with payoff $g$ in the approximated model \eqref{eq:approxSVE} and in the Volterra Heston model \eqref{defXV}, respectively.\\
To this end fix $\varepsilon>0$. For all $\tau\in\mathcal T_{0,T}$, $\tau_n\in\mathcal T^n_{0,T}$, and $M> 0$ we have
\begin{equation}\label{eq1}
\begin{aligned}
\mathbb P(|X_{\tau}|>M)&\leq \frac{\mathbb{E}[|X_{\tau}|]}{M}\leq \frac{\mathbb{E}[\sup_{t\in[0,T]}|X_t|]}{M},\\
\mathbb P^n(|X^n_{\tau_n}|>M)&\leq \frac{\mathbb{E}^n[|X^n_{\tau_n}|]}{M}\leq \frac{\mathbb{E}^n[\sup_{t\in[0,T]}|X^n_t|]}{M},\quad n\ge 1.
\end{aligned}
\end{equation}
Moreover, the Burkholder--Davis--Gundy inequality and \eqref{eq:boundsupvn} imply that 
\begin{equation}\label{eq2}
\begin{aligned}
\mathbb E[\sup_{t\in[0,T]}|X_t|]&\leq C(1+\mathbb{E}[\sup_{t\in[0,T]}V_t])\leq c,\\
\mathbb E^n[\sup_{t\in[0,T]}|X^n_t|]&\leq C(1+\mathbb{E}^n[\sup_{t\in[0,T]}V^n_t])\leq c,\quad n\ge 1,
\end{aligned}
\end{equation}
for a constant $c$ that depends only on $T,\lambda,\eta,\gamma,c_K$, and $\sup_{n\ge 1}\|v_0\|_{\mathcal C[0,T]}$ . We conclude, thanks to \eqref{eq1} and \eqref{eq2}, that if $M = \frac{c}{\eps}$, then for all $\tau\in\mathcal T_{0,T}$ and $\tau_n\in\mathcal T^n_{0,T}$
\begin{equation}\label{eq4}
\mathbb P(|X_{\tau}|>M)\leq\varepsilon,\quad \mathbb P^n(|X^n_{\tau_n}|>M) \leq\varepsilon,\quad n\ge 1.
\end{equation}
Let $g\in\mathcal C_{c}^{\infty}(\mathbb R)$ be such that
\begin{equation}\label{eq5}
\sup_{x\in[-M,M]}|f(x)-g(x)|\leq \varepsilon,\quad \|g\|_{\infty}\leq \|f\|_{\infty}.
\end{equation}
For all $\delta>0$ there exists $\tau_{\delta}\in\mathcal T_{0,T}$ such that
\[
\mathbb E[f(X_{\tau_\delta})]>P_0^f-\delta.
\]
The inequalities \eqref{eq4} and \eqref{eq5} imply
\[
\mathbb E[f(X_{\tau_\delta})]\leq P_0^g+\mathbb E[(f-g)(X_{\tau_\delta})]\leq P_0^g +\varepsilon(1+2\|f\|_{\infty}).
\]
Then
\[
P_0^f\leq P_0^g +\varepsilon(1+2\|f\|_{\infty}) +\delta.
\]
Similarly,
\[
P_0^g\leq P_0^f +\varepsilon(1+2\|f\|_{\infty}) +\delta.
\]
Since $\delta>0$ was arbitrary we conclude that
\[
|P_0^f-P_0^g|\leq \varepsilon(1+2\|f\|_{\infty}).
\]
An analogous argument over the approximating models, using the inequalities \eqref{eq4} and \eqref{eq5}, yields
\[
|P_0^{n,f}-P_0^{n,g}|\leq \varepsilon(1+2\|f\|_{\infty}),\quad n\ge 1.
\]
Therefore,
\[
|P_0^{n,f}-P_0^f|\leq |P_0^{n,g}-P_0^g|+2\varepsilon(1+2\|f\|_{\infty}),\quad n\ge 1,
\]
and thanks to \eqref{eq0}
\[
\limsup_{n\to\infty}|P_0^{n,f}-P_0^f|\leq 2\varepsilon(1+2\|f\|_{\infty}).
\]
Since $\varepsilon>0$ was arbitrary, this yields
\[
\lim_{n\to\infty}|P_0^{n,f}-P_0^f|=0.
\]

\end{proof}

\section{Numerical illustrations}
\label{sec:numeric}

In this section we illustrate with numerical examples the convergence and behavior of Bermudan put option prices in the approximated sequence of models. To this end, we consider the framework of the rough Heston model in Example \ref{ex:roughHeston1} and the approximation scheme of Example \ref{ex:roughHeston2}. 

We choose the same model parameters as in \cite{abi2019lifting}, namely
\begin{equation*}
V_0=0.02,\quad \bar{\nu}=0.02,\quad\lambda=0.3,\quad\eta=0.3,\quad\rho=-0.7.
\end{equation*}
We fix a maturity $T=0.5$ and a spot interest rate $r=0.06$. 

In order to compute Bermudan option prices in the approximated model $(X^n,V^n)$ in \eqref{eq:approxSVE}, we apply the Longstaff Schwartz algorithm \cite{longstaff2001valuing} using $10^5$ path simulations. Following the suggestion in \cite{abi2019lifting}, and based on the factor-representation \eqref{eq:factors}, we simulate the trajectories of the variance with a \textit{truncated} explicit-implicit Euler-scheme and the trajectories of the log price with an explicit Euler-scheme. More precisely, given a uniform partition $(s_k)_{k=0}^{N_{time}}$ of $[0,T]$ of norm $\Delta t$, and  $\left(G^k_1\right)_{k\geq 1}$ and $\left(G^k_2\right)_{k\geq 1}$ independent sequences of independent centered and reduced Gaussian variables, we simulate the log price with the scheme
\begin{equation*}
\widehat X_{s_{k+1}}^{n}=  \widehat X_{s_{k}}^{n}  + \left(  r - \frac{\widehat V_{s_k}^{n}}{2} \right) \Delta t + \sqrt{\widehat V_{s_k}^{n\ +}}\sqrt{ \Delta t } \left(\rho  G_1^{k+1} + \sqrt{1- \rho^2}  G_2^{k+1}  \right),\quad \widehat X_{s_{0}}^{n}=X_0,
\end{equation*}
and the variance with the scheme
\begin{equation*}
\widehat V_{s_k}^{n}=v_0^n(s_k)+ \sum_{i=1}^{n} c_i^n\widehat  Y_{s_k}^{n,i},   \quad   \widehat Y_0^{n,i}=0,   \quad i=1, \ldots,n,
\end{equation*}
\begin{equation*}
\widehat Y_{s_{k+1}}^{n,i}= \frac{1}{1 + x_i^{n} \Delta t} \left( \widehat Y_{s_{k}}^{n,i}  - \lambda  \widehat V_{s_k}^{n} \Delta t + \eta \sqrt{\widehat V_{s_k}^{n\ +}}\sqrt{ \Delta t }G_1^{k+1}\right) , \quad i=1, \ldots,n.
\end{equation*}
In this framework the approximation of the initial curve $v_0$ in \eqref{eq:v0} takes the form
\[
v_0^n(s_k) = V_0 + \lambda \bar{\nu} \sum_{i=1}^{n} c_i^{n} \left( \frac{1 - e^{-x_i^n s_k}}{x_i^n}  \right).\footnote{For generality and consistency reasons, we have done the numerical experiments with this approximation of $v_0$. We notice, however, that for the fractional kernel, $v_0$ is given explicitly by $v_0(t)=V_0+\lambda\overline \nu\frac{t^\alpha}{\Gamma(1+\alpha)}.$}
\] 
 We take $N_{time}=500$ and select equidistant exercise times $(t_k)_{i=0}^N$, with $N=50$, within the partition $(s_k)_{k=0}^{N_{time}}$. Given a strike price $K$, for the regressions of the Longstaff Schwartz algorithm we use the linear space of functions generated by functions with argument $S$, corresponding to the log price, and $V$ corresponding to the volatility, of the form 
\[
f_1\left(\frac{S}{K}\right)f_2\left(\frac{V}{\bar{\nu}}\right),\quad f_1,f_2\in\Acal,
\]
where $\Acal$ is given by
\[
\Acal=\{1\}\cup\{{\rm e}^{-z}L_i(z):i=0,1,2\},
\]
and $L_i$ denotes the Laguerre polynomial of order $i$.\footnote{In the framework of our factor-approximation scheme, the prices of the Bermudan options at intermediate times are functions of the price $S$ and the factors $(Y^{n,i})_{i=1}^n$ defined in \eqref{eq:factors}. These functions could be approximated using neural network-based techniques similar to those in \cite{lapeyre2019neural}. Our initial experiments, however, indicate that there is no significant gain in using this more complex approach. This is consistent with similar findings in \cite{bayer2020pricing,goudenege2020machine} for American options prices in the rough Bergomi model.}

To illustrate the convergence of options prices, we fix the parameter $\alpha=0.6$ and choose parameters $r_n>1$ in the kernel approximation such that  
\begin{equation}\label{eq:optrn}
\begin{aligned}
r_n&=\underset{r}{\argmin}\|K-K^r\|^2_{\Lcal^2(0,T)}\\
&=\underset{r}{\argmin} \left(\sum_{i,j\leq n}c_i^rc_j^r\frac{1-{\rm e}^{-(x_i^r+x_j^r)T}}{x_i^r+x_j^r}-2\sum_{i\leq n}c_i^r(x_i^r)^{-\alpha}\gamma(\alpha,Tx_i^r)\right),
\end{aligned}
\end{equation}
where $c_i^r,x_i^r$, $i=1,\ldots,n$, are as in \eqref{eq:coeffsc_ix_i} with $r_n$ replaced by $r$, $K^r$ is the corresponding kernel obtained as a sum of exponentials, and $\gamma(\alpha,x)=\frac{1}{\Gamma(\alpha)}\int_0^x t^{\alpha-1}{\rm e}^{-t}\dd t$ is the lower incomplete gamma function. Table \ref{T:r_n_alpha06} contains the values of the parameter $r_n$ along with the corresponding values of $\|K-K^n\|^2_{\Lcal^2(0,T)}$ for $n=4,10,20,40,200$.\footnote{These values were obtained using the function \textit{fminbnd} in the MATLAB optimization toolbox.} Figure \ref{fig:Pricesin_n} shows Bermudan put option prices for a strike $K=100$, initial prices $S_0=\exp(X_0)$ in $[93,96]$, and $n=4,10,20,40$ number of factors. We also plot the prices obtained for the classical Heston model, which corresponds to the case $n=1$, $c_1^1=1$, and $x_1^1=0$. For each set of prices we indicate the corresponding so-called critical price, this is the greatest value of the initial price for which the Bermudan option price is equal to the payoff. We observe that as $n$ increases the option prices on this interval decrease and as a result the critical price increases. In Figure \ref{fig:boundaryin_nKnK}, we plot the critical-price as a function of the norm $\|K-K^n\|_{\Lcal^2(0,T)}$ for $n=1,4,10,20,40$, where $n=1$ corresponds to the classical Heston model. Computing prices with $n=200$ factors we observe the same critical price as with $n=40$ which illustrates the convergence of the approximated models.  

\begin{table}
\centering
\begin{tabular}{|c|c|c|}
 \hline
 $n$ & $r_n$ & ${\rm norm}^2_n$ \\
 \hline
 \hline
  4 & 50.5458 & 0.3699 \\
  10 & 18.0548 & 0.1125 \\
  20 & 8.8750 & 0.0325 \\
  40 & 4.4737 & 0.0076 \\
  200 & 1.6946 & 1.1166${\rm e}$-04\\                                    
  \hline                          
\end{tabular}
\caption{Values of $r_n$ and ${\rm norm}^2_n=\|K-K^n\|^2_{\Lcal^2(0,T)}$ obtained using \eqref{eq:optrn} with $\alpha=0.6$ and $T=0.5$.}\label{T:r_n_alpha06}
\end{table}

\begin{figure}
\centering
\includegraphics[scale=0.5]{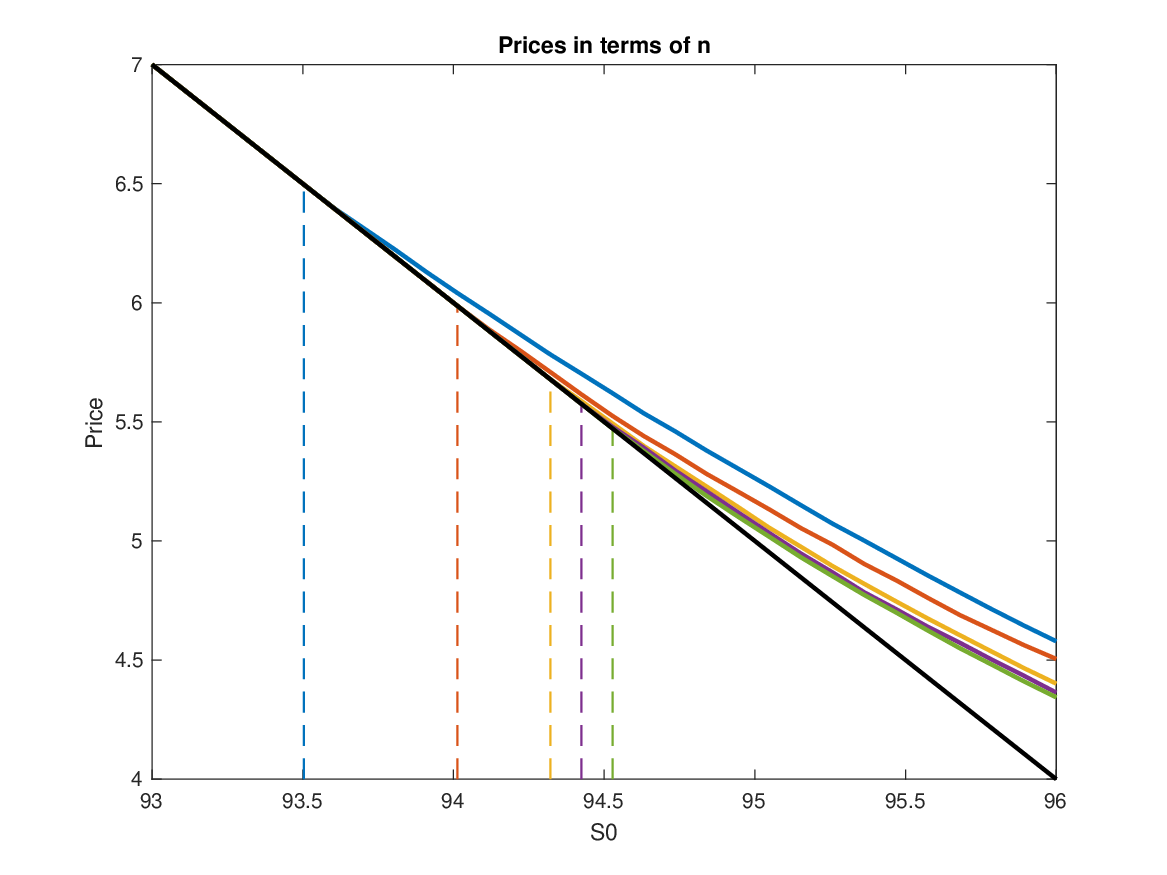}
\caption{Bermudan put option prices in terms of $n$. Payoff (black), Heston model (blue), $n=4$ (red), $n=10$ (yellow), $n=20$ (purple), $n=40$ (green).}\label{fig:Pricesin_n}
\end{figure}

\begin{figure}
\centering
\includegraphics[scale=0.5]{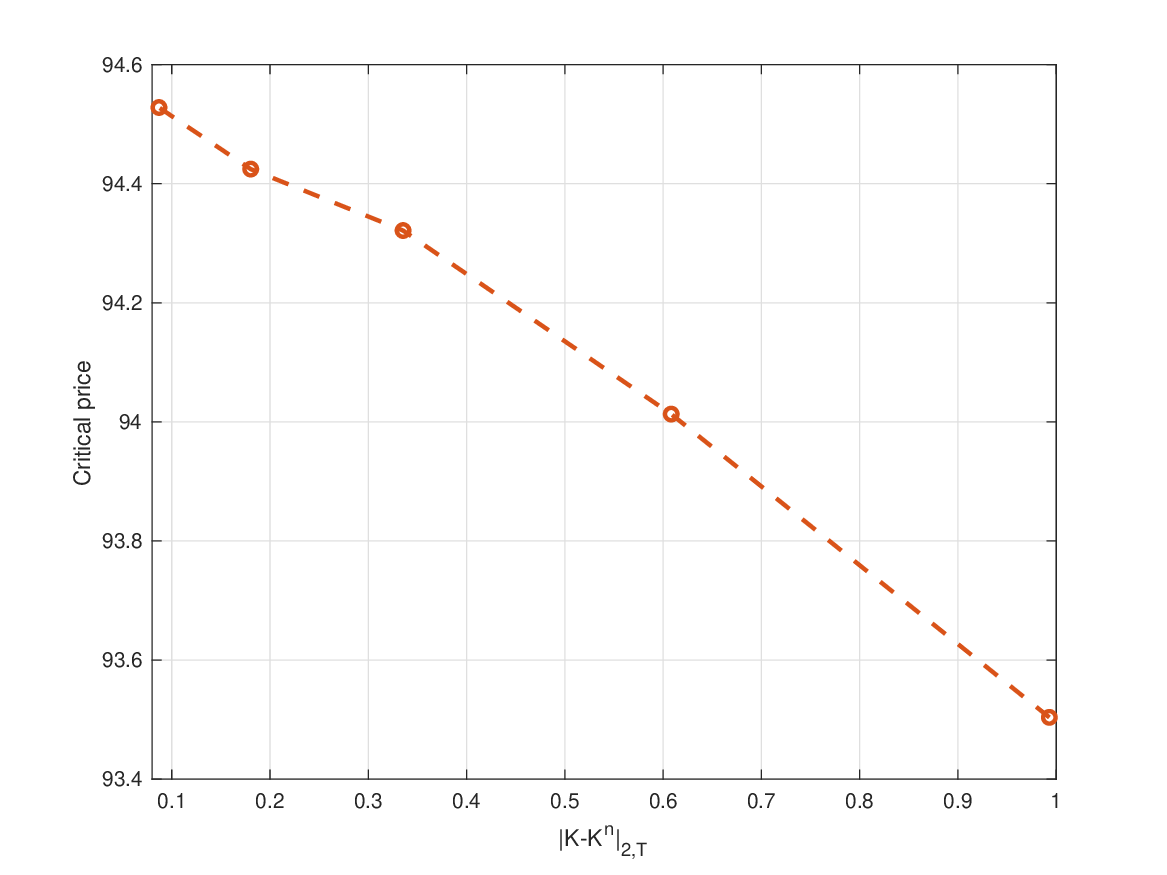}
\caption{Critical prices as a function of $\|K-K^n\|_{\Lcal^2(0,T)}$.}\label{fig:boundaryin_nKnK}
\end{figure}

To study the behavior of Bermudan put option prices with respect to the parameter $\alpha$, and taking into account our previous findings, we proxy the prices in the rough Heston model using the approximated model with $n=40$ factors. We consider the same parameters as in the previous example with the exception of $\alpha$. The parameter $r_{40}$ is chosen as in \eqref{eq:optrn} depending on the parameter $\alpha$ of the fractional kernel $K$. We compute prices and critical prices for $\alpha=0.6,0.7,0.8,0.9,1$. Figure \ref{fig:Pricesinalpha} shows the Bermudan option prices obtained for these values of $\alpha$ and Figure \ref{fig:boundaryin_alpha} displays the critical price as a function of $\alpha$. As $\alpha$ increases, we observe a similar behavior as the one obtained by increasing $\|K-K^n\|_{\Lcal^2(0,T)}$ in our previous example. More precisely, as the regularity of the paths in the model increases, i.e., $\alpha$ increases, the prices of the option increase and the critical price decreases. This is consistent with similar findings reported in \cite{horvath2017functional} within the context of the rough Bergomi model and it could be a consequence of the fact that for smaller values of $\alpha$ the variance has rougher paths and spends more time in a neighborhood of zero.  

\begin{figure}
\centering
\includegraphics[scale=0.5]{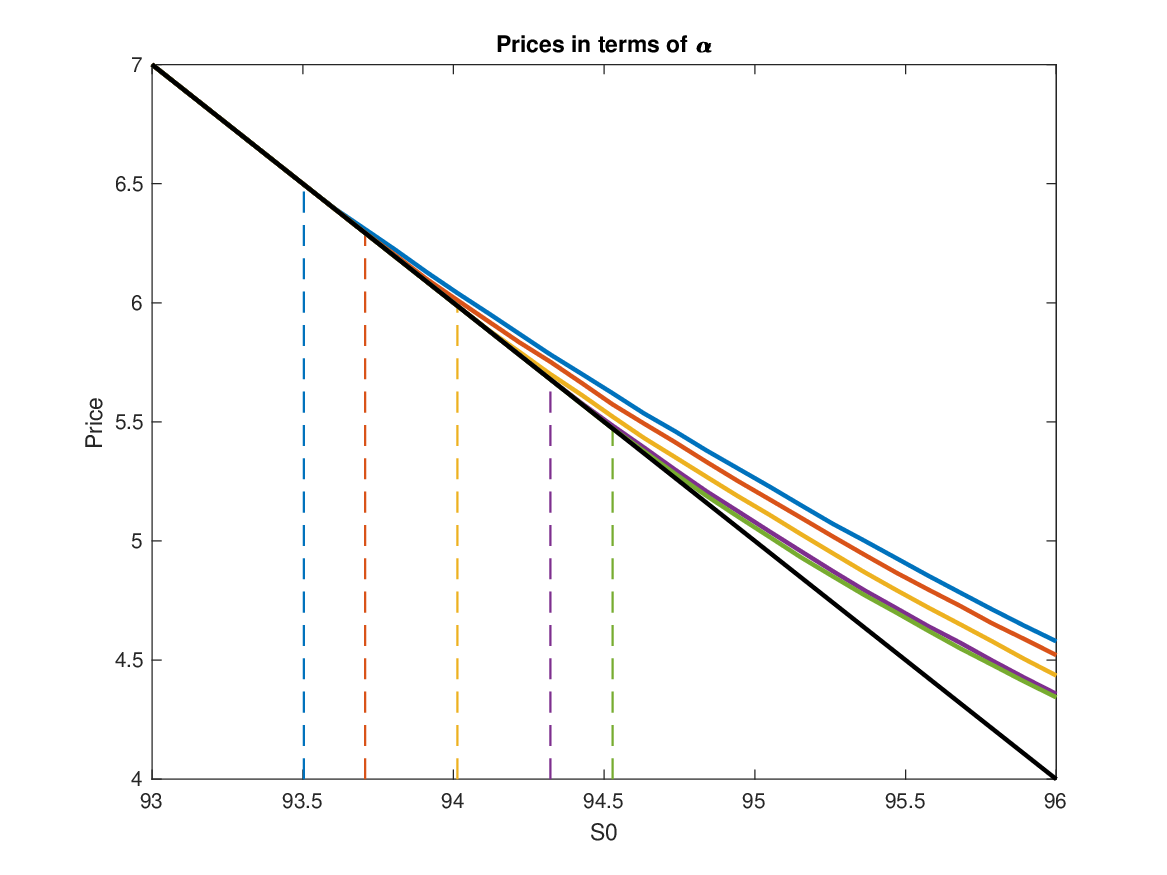}
\caption{Bermudan put option prices and critical prices in terms of $\alpha$. Payoff (black), $\alpha=1$ (blue), $\alpha=0.9$ (red), $\alpha=0.8$ (yellow), $\alpha=0.7$ (purple), $\alpha=0.6$ (green).}\label{fig:Pricesinalpha}
\end{figure}

\begin{figure}
\centering
\includegraphics[scale=0.5]{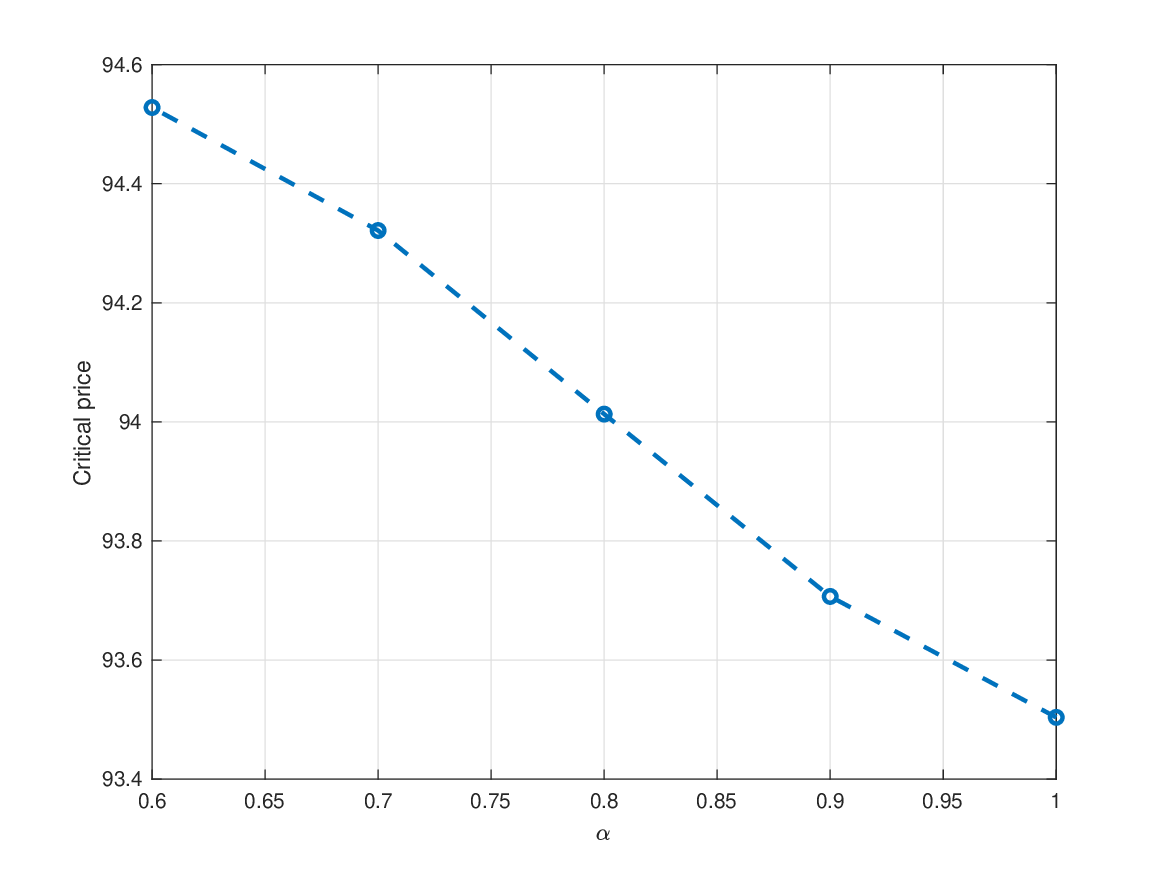}
\caption{Critical prices as a function of $\alpha$.}\label{fig:boundaryin_alpha}
\end{figure}

To illustrate the impact of the initial spot variance, we compare in Figure \ref{fig:boundaryin_V0} the levels of the critical price for different values of $V_0$ in the rough Heston model with $\alpha=0.6$ and the classical Heston model. The critical price seems to depend almost linearly on the initial spot variance $V_0$ in both the classical and the rough Heston model. In the rough Heston model the critical price, and hence the Bermudan option prices, appear to be slightly less sensitive to the initial level of the variance. This could be a result of the difference in sensitivity, with respect to $V_0$, of the time spent around zero by the trajectories in the classical and rough Heston models. 

We also plot in Figure \ref{fig:boundaryin_T} the critical prices for different maturities and for $V_0=0.06$. We observe that for short maturities the sensitivity is higher in the classical Heston model than in the rough Heston model. This is coherent with the previously described behavior with respect to the initial variance level.

\begin{figure}
\centering
\includegraphics[scale=0.5]{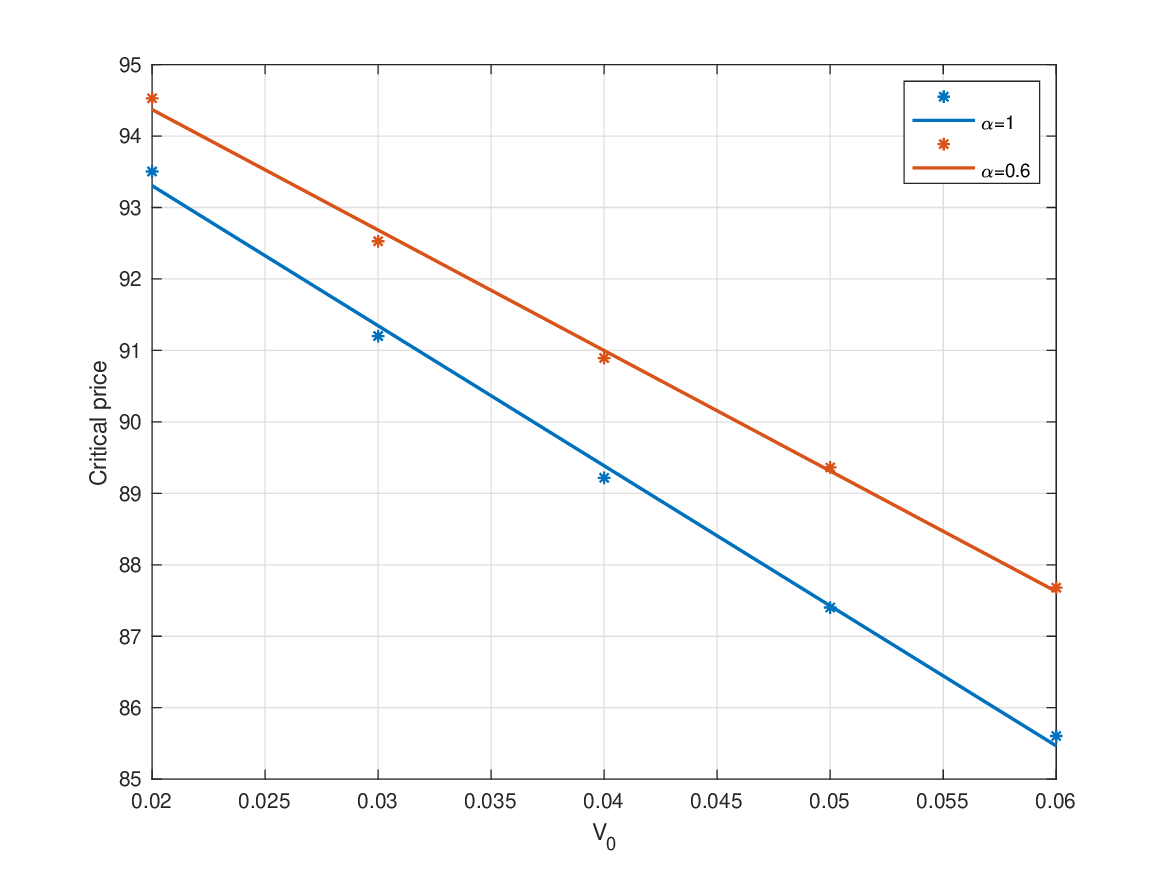}
\caption{Critical prices for $\alpha=0.6,1$ and $V_0=0.02+k*0.01$, $k=0,1,2,3,4$. The solid lines represent the linear regressions.}\label{fig:boundaryin_V0}
\end{figure}

\begin{figure}
\centering
\includegraphics[scale=0.5]{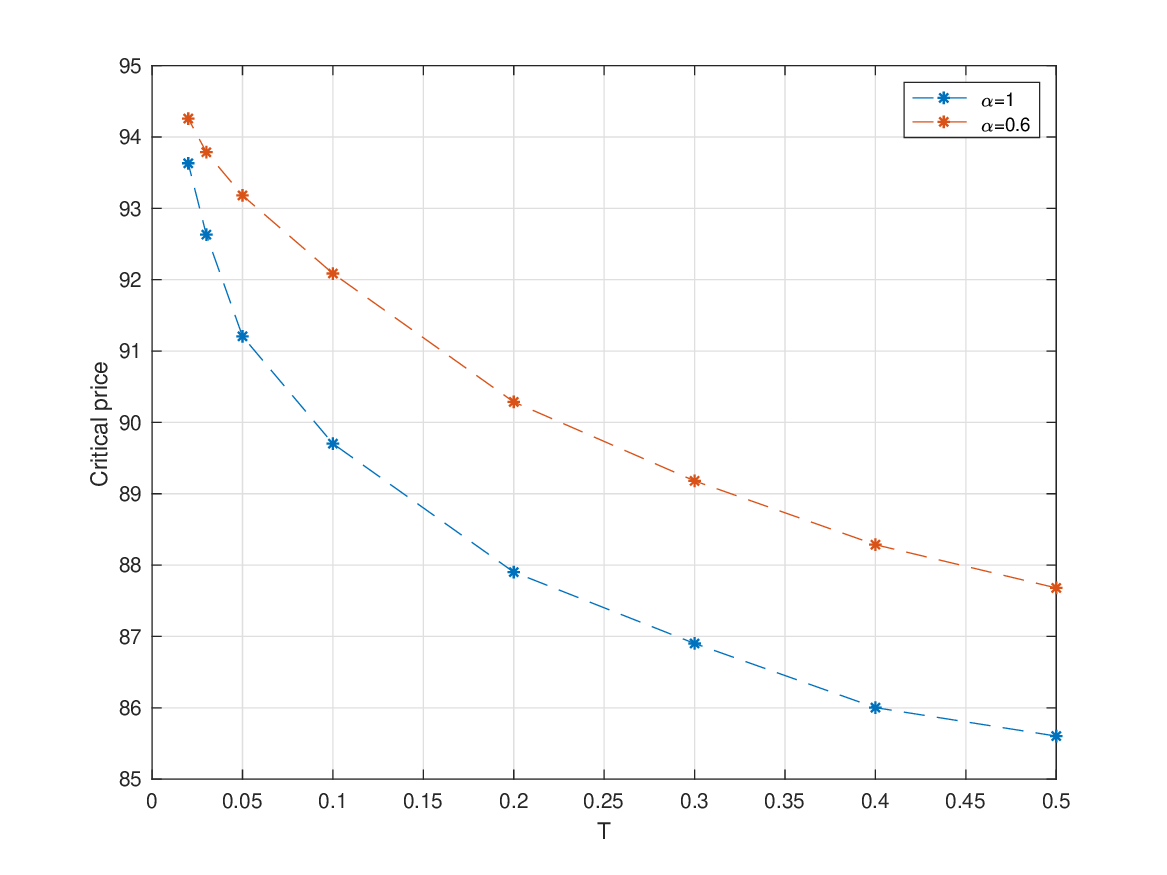}
\caption{Critical prices for $\alpha=0.6,1$, $V_0=0.06$ and different maturities $T$. Specifically, $T=0.02, 0.03, 0.05, 0.1, 0.2, 0.3, 0.4, 0.5$.}\label{fig:boundaryin_T}
\end{figure}

To finish, we numerically illustrate in Figure \ref{fig:Pricesin_N} the convergence of Bermudan put option prices to American put option prices in the rough Heston model with $\alpha=0.6$. Figure \ref{fig:boundaryin_N} shows the convergence of the corresponding critical prices. 
\begin{figure}
\centering
\includegraphics[scale=0.5]{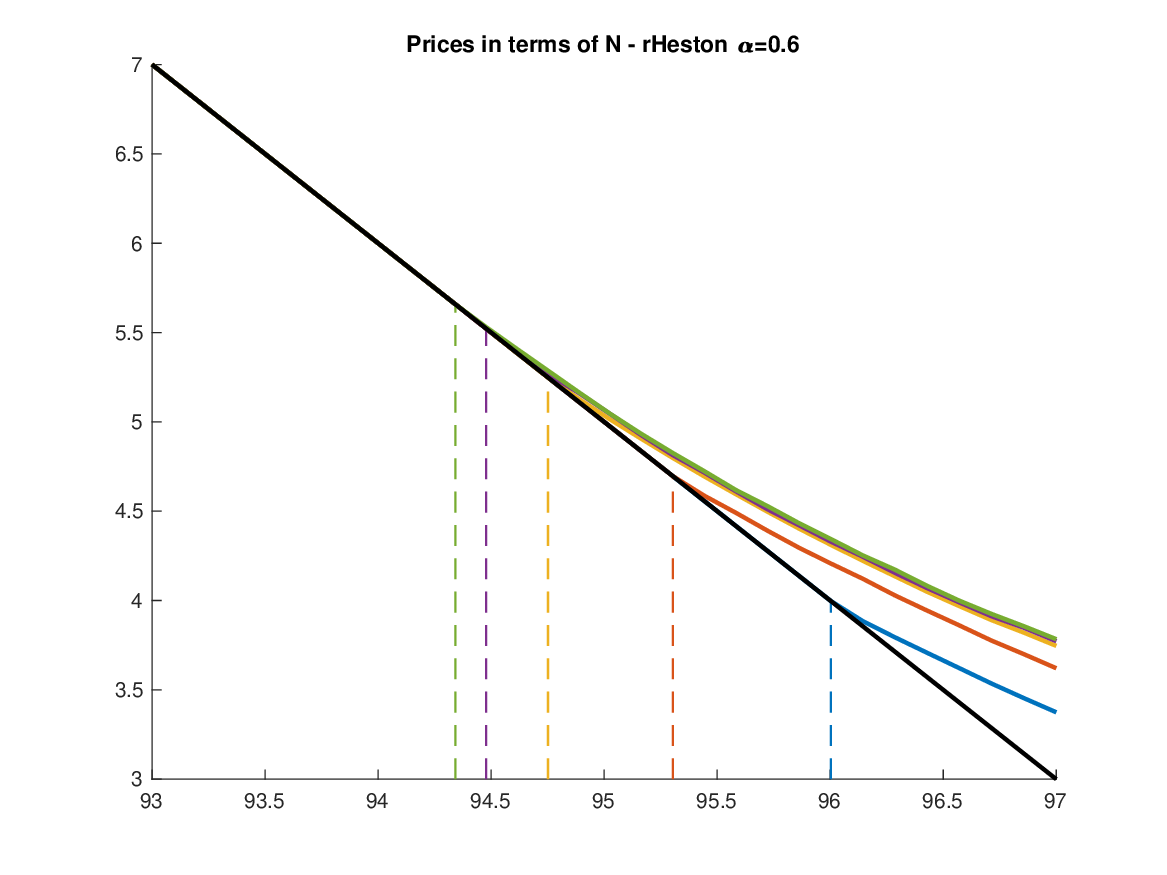}
\caption{Bermudan put option prices in terms of $N$. Payoff (black), $N=5$ (blue), $N=10$ (red), $N=25$ (yellow), $N=50$ (purple), $N=100$ (green).}\label{fig:Pricesin_N}
\end{figure}

\begin{figure}
\centering
\includegraphics[scale=0.5]{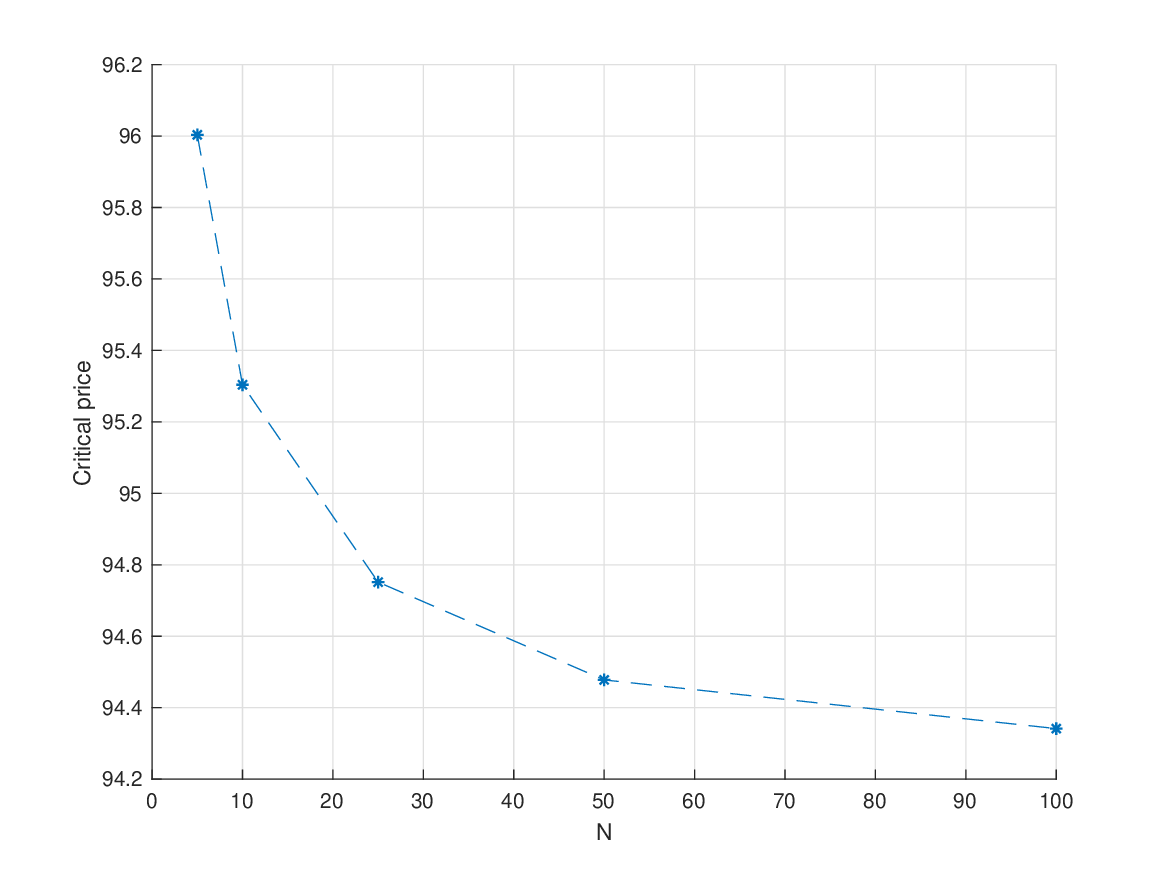}
\caption{Critical prices as a function of $N$.}\label{fig:boundaryin_N}
\end{figure}

A theoretical explanation of our numerical findings would require more detailed results about the path-behavior of the rough Heston model, and their impact on American and Bermudan option prices. Such study falls outside of the scope of this paper but could be an interesting topic of future research, along with a deeper numerical analysis of the behavior of American and Bermudan option prices in terms of the parameters of rough volatility models.

\appendix

\section{Riccati--Volterra equations}
\label{sec:app:ricatti}
\begin{proposition}\label{prop_existence_Psi}
Suppose that $K\in\Lcal^2_{loc}$ satisfies condition \ref{assump:kernelsv_0_original:1} in Assumption \ref{assump:kernelsv_0_original}. Then, given $w\in\C$ with ${\rm Re}(w)\in[0,1]$, and $h\in\Gcal_K^*$, the Riccati--Volterra equation \eqref{ricattieq} admits a solution $\Psi$ such that $\Psi(t,\cdot;w,h)\in\Gcal_K^*$ for all $t\ge 0$.
\end{proposition}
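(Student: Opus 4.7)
My plan is to reduce the Riccati-Volterra equation \eqref{ricattieq} for $\Psi$ to the scalar convolution equation \eqref{eq_psi} for $\psi(t) := \int_0^\infty \Psi(t,\xi;w,h) K(\xi)\,d\xi$, solve for $\psi$, and then reconstruct $\Psi$ via \eqref{riccati_alt}. Setting $g(t) := \int_0^\infty h(\xi) K(t+\xi)\,d\xi$, the scalar equation reads $\psi = g + K * \mathcal{R}(w,\psi)$. Since $h\in\mathcal{G}_K^*$ has compact support, the driver $-\operatorname{Re}(g)$ belongs to $\mathcal{G}_K\subset\mathcal{H}^{\gamma/2}$, so in particular $g$ is continuous on $[0,\infty)$ with $\operatorname{Re}(g)\leq 0$.

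Local existence of a continuous solution $\psi$ on some $[0,t^*)$ follows from a standard Picard iteration, using the local Lipschitz property of the polynomial $\mathcal{R}(w,\cdot)$ and the boundedness of the convolution operator $K*(\cdot)$ on $\mathcal{C}([0,T])$ for $K\in\mathcal{L}^2_{loc}$. To extend globally, I would establish an a priori upper bound $\operatorname{Re}(\psi)\leq 0$. For this, note that $\operatorname{Re}(w)\in[0,1]$ yields
\[
\operatorname{Re}\bigl(\tfrac{1}{2}(w^2-w)\bigr) = \tfrac{1}{2}\bigl(\operatorname{Re}(w)(\operatorname{Re}(w)-1) - \operatorname{Im}(w)^2\bigr) \leq 0,
\]
and combined with $\operatorname{Re}(g)\leq 0$ and the quadratic structure of $\mathcal{R}$, a Volterra comparison argument analogous to \cite[Lemma 7.1]{abi2019affine} gives $\operatorname{Re}(\psi)\leq 0$ and hence precludes blow-up, so that $\psi$ extends continuously to all of $[0,\infty)$. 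With $\psi$ in hand, define $\Psi$ by \eqref{riccati_alt}; substituting \eqref{eq_psi} verifies directly that $\Psi$ solves \eqref{ricattieq}.

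The main obstacle is proving that $\Psi(t,\cdot;w,h)\in\mathcal{G}_K^*$ for every $t\geq 0$, i.e., that
\[
F_t(s) := -\operatorname{Re}\int_0^\infty \Psi(t,\xi;w,h)\, K(s+\xi)\,d\xi
\]
belongs to $\mathcal{G}_K$. A direct computation, splitting the integral at $\xi=t$ and changing variables, shows that $F_t(s) = -\operatorname{Re}(\tilde g_t(s))$, where $\tilde g_t(s) := g(s+t) + \int_0^t K(s+t-u)\,\mathcal{R}(w,\psi(u))\,du$ is precisely the driver of the Riccati-Volterra equation satisfied by the time-shifted function $s\mapsto\psi(s+t)$. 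To show $F_t\in\mathcal{G}_K$ one must verify: (i) $\gamma/2$-Hölder regularity, which follows from the $L^2$-Hölder estimate \eqref{eq:Kcal_c} on $K$ and its shifts; (ii) $F_t(0)\geq 0$, which reduces to $-\operatorname{Re}(\psi(t))\geq 0$ from the a priori bound of the previous step; and (iii) the shift-convolution inequality defining $\mathcal{G}_K$. Condition (iii) is the deepest and most technical point. I would adapt the invariance arguments of \cite[Theorem 3.1]{abi2019markovian} establishing stochastic invariance of $\mathcal{G}_K$ for the adjusted forward process: the hypothesis that $\Delta_\xi K$ admits a non-negative, non-increasing resolvent of the first kind (Assumption \ref{assump:kernelsv_0_original}\ref{assump:kernelsv_0_original:1}) provides the key monotonicity, and combining this with the sign information $\operatorname{Re}(\psi)\leq 0$ secured above allows one to propagate the cone property from the initial datum $h\in\mathcal{G}_K^*$ through the nonlinear flow.
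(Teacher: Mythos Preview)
Your overall strategy---reduce \eqref{ricattieq} to the scalar Riccati--Volterra equation \eqref{eq_psi}, solve for $\psi$, reconstruct $\Psi$ via \eqref{riccati_alt}, and then identify $-\int_0^\infty \operatorname{Re}(\Psi(t,\xi))K(s+\xi)\,d\xi$ with the shifted driver $\tilde g_t(s)$---is exactly the route the paper takes. Two points deserve tightening.

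First, the step ``$\operatorname{Re}(\psi)\leq 0$ and hence precludes blow-up'' is incomplete. For a complex-valued solution of a quadratic Volterra equation, an upper bound on the real part does not by itself rule out $|\psi|\to\infty$: the imaginary part could diverge, or $\operatorname{Re}(\psi)$ could go to $-\infty$. The paper does not claim non-explosion from $\operatorname{Re}(\psi)\leq 0$ alone; it invokes the full argument of \cite[Lemma~7.4]{abi2019affine}, which couples the sign information with the specific structure of $\mathcal{R}(w,\cdot)$ to obtain an a~priori bound on $|\psi|$. You should either reproduce that two-step argument or cite it.

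Second, for the invariance $F_t\in\mathcal{G}_K$, the relevant tool is the \emph{deterministic} invariance result \cite[Theorem~C.1]{abi2019markovian}, not the stochastic one \cite[Theorem~3.1]{abi2019markovian}. The latter concerns the adjusted forward process $(v_t)$ and is proved \emph{using} Theorem~C.1; here you are working with the deterministic flow $t\mapsto \tilde g_t$, so Theorem~C.1 applies directly once you combine it with the observation that $\int_0^\infty f(\xi)K(\cdot+\xi)\,d\xi\in\mathcal{G}_K$ for every $f\in\mathcal{B}_c(\mathbb{R}_+,\mathbb{R}_+)$. This replaces your hand-verification of the three defining conditions of $\mathcal{G}_K$ and in particular dispenses with the need to check the shift-convolution inequality (iii) from scratch.
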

\begin{proof}
As pointed out in Remark \ref{rem:Psivspsi}, the Riccati equation \eqref{ricattieq} for $\Psi$ can be recast as the stochastic Volterra equation \eqref{eq_psi} for the function $\psi$ given by \eqref{eq:defpsi}. Thanks to the continuity of $\int_{0}^{\infty} h(\xi) K(\cdot+\xi) \dd \xi $, \cite[Theorem 12.1.1]{GLS:90} implies the existence of a continuous solution $\psi$ on a maximal interval $[0,T_{max})$. In order to prove that $T_{max}=\infty$, we can follow the proof of \cite[Lemma 7.4]{abi2019affine}. In \cite{abi2019affine} the authors consider $\Lcal^2$-solutions and a particular type of initial conditions for the Riccati--Volterra equations. In our case we consider continuous solutions and we have initial conditions of the form $\int_0^{\infty}h(\xi)K(t+\xi)\dd\xi$ such that $-\int_0^{\infty}{\rm Re}(h(\xi))K(t+\xi)\dd\xi\in\Gcal_K$. The same arguments, however, can be adapted to our setting using the invariance result in \cite[Theorem C.1]{abi2019multifactor} together with the fact that $\int_0^{\infty}f(\xi)K(t+\xi)\dd\xi\in\Gcal_K$ for all $f\in \Bcal_c(\R_+,\R_+)$. Moreover, taking the real part in \eqref{eq_psi}, \cite[Theorem C.1]{abi2019multifactor} guarantees that 
\begin{equation}\label{eq:invarianceappendix}
s\mapsto g_{t}(s)=\Delta_{t}g(s)-(\Delta_sK\ast{\rm Re}(\mathcal{R}(w,\psi)))(t)\in\Gcal_K,\quad t\ge 0,
\end{equation}
where $g(s)=-\int_0^{\infty}{\rm Re}(h(\xi))K(s+\xi)\dd \xi$. We now define $\Psi$ using \eqref{riccati_alt}, which satisfies \eqref{ricattieq} thanks to \eqref{eq_psi}. The fact that $\Psi(t,\cdot;w,h)\in\Gcal_K^*$, for all $t\ge 0$, is a consequence of \eqref{eq:invarianceappendix} and the identity
\[
\Delta_{t}g(s)-(\Delta_sK\ast{\rm Re}(\mathcal{R}(w,\psi)))(t)=-\int_0^{\infty}{\rm Re}(\Psi(t,\xi;w,h))K(s+\xi)\dd \xi.
\]
This concludes the proof.
\end{proof}

We finish this section with a sketch of the proof of Lemma \ref{lem:conv_psi}.

 \begin{proof}[Proof of Lemma \ref{lem:conv_psi}]
To prove the convergence of $\psi^n$ towards $\psi$ in $\Ccal[0,T]$, one can use similar arguments as in the proof of \cite[Theorem 4.1]{abi2019multifactor}, replacing the zero initial condition by the initial curves $\int_0^{\infty}h(\xi)K(t+\xi)\dd\xi$ and $\int_0^{\infty}h^n(\xi)K^n(t+\xi)\dd\xi$, $n\ge 1$. The convergence of $\Psi^n$ towards $\Psi$ is a consequence of the identity \eqref{riccati_alt}, the convergence of $(h^n,\psi^n)$ to $(h,\psi)$, and the quadratic structure of $\mathcal{R}(w,\cdot)$. Since ${\rm supp}(h^n)\subseteq[0,M]$ for all $n\ge1$, thanks to the form of the Riccati equations satisfied by $\Psi^n$, we conclude that the support of $\Psi^n(t,\cdot;w,h^n)$ is contained in $[0,\max\{T,M\}]$ for all $n\ge1$ and $t\le T$.
 \end{proof}

\section{Some results on the kernel approximation}
\label{sec:app:kernelapprox}

In this appendix we provide sufficient conditions on the kernel approximation which ensure condition \ref{assump:kernelsv_0:1} in Assumption \ref{assump:kernelsv_0}. 
\begin{theorem}\label{thm:appendix}
Suppose that $\mu$ is a nonnegative Borel measure on $\R_+$ such that
\begin{equation}\label{eq:condmuapp}
\int_{\R_+}(1\wedge (\eps x)^{-\frac12})\mu(\dd x)\leq c(T) \eps^{\frac{\gamma-1}{2}},\quad T>0,\,\eps\leq T, \footnote{This condition was also considered also in \cite[Section 4]{abi2019markovian}.}
\end{equation}
with $\gamma\in (0,2]$ and $c:\R_+\to\R_+$ a locally bounded function. If, in addition,
\begin{equation}\label{eq:condappetapartition}
\sup_{n\ge 1}\sup_{i\in\{0,\ldots n-1\}}\frac{\eta^n_{i+1}}{\eta_i^n}<\infty
\end{equation}
then the kernels $(K^n)_{n\ge 1}$ defined in \eqref{eq:kernelsumexp}, with $
(c_i^n)_{i=1}^n$, $(x_i^n)_{i=1}^n$ given by \eqref{eq:weightscentermass}, satisfy condition \ref{assump:kernelsv_0:1} in Assumption \ref{assump:kernelsv_0} with $\gamma$ as in \eqref{eq:condmuapp}. 
\end{theorem}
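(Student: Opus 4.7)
The plan is to transfer the measure estimate \eqref{eq:condmuapp} from $\mu$ to the discretized measures $\mu^n:=\sum_{i=1}^n c_i^n\delta_{x_i^n}$, uniformly in $n$, and then use this estimate to derive the $L^2$ regularity \eqref{eq:Kcal_c} for the kernels $K^n(t)=\int_{\R_+}e^{-xt}\mu^n(\dd x)$.

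The first step is elementary. Let $C:=\sup_{n,i}\eta_{i+1}^n/\eta_i^n$. Since $x_i^n$ is the $\mu$-barycenter of $[\eta_{i-1}^n,\eta_i^n)$, it lies in the closure of this interval, so for every $x\in[\eta_{i-1}^n,\eta_i^n)$ both ratios $x/x_i^n$ and $x_i^n/x$ are bounded by $C$. This gives
\[
1\wedge(\eps x_i^n)^{-1/2}\le C^{1/2}\bigl(1\wedge(\eps x)^{-1/2}\bigr),\qquad x\in[\eta_{i-1}^n,\eta_i^n),\ \eps>0.
\]
Integrating against $\mu|_{[\eta_{i-1}^n,\eta_i^n)}$, summing over $i$, and invoking \eqref{eq:condmuapp} yields
\[
\int_{\R_+}(1\wedge(\eps x)^{-1/2})\mu^n(\dd x)\le C^{1/2}c(T)\,\eps^{(\gamma-1)/2},\qquad \eps\le T,\ n\ge 1.
\]

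For the second step, using the elementary inequality $e^{-u}\le c_0(1\wedge u^{-1/2})$ together with the measure estimate above (applied with $\eps=t$), one obtains the pointwise bound $K^n(t)\le C_1(T)\,t^{(\gamma-1)/2}$ for $t\in(0,T]$, from which $\int_0^\xi |K^n(t)|^2\,\dd t\le c_1(T)\xi^\gamma$. For the continuity term, Fubini provides the double-integral representation
\[
\int_0^{T-\xi}\!|K^n(t+\xi)-K^n(t)|^2\,\dd t=\iint_{\R_+^2}(1-e^{-x\xi})(1-e^{-y\xi})\frac{1-e^{-(x+y)(T-\xi)}}{x+y}\,\mu^n(\dd x)\mu^n(\dd y).
\]
Combining $(1-e^{-u})\le \sqrt u\wedge 1$ with the AM--GM bound $1/(x+y)\le 1/(2\sqrt{xy})$, the integrand is majorized by $\tfrac{\xi}{2}(1\wedge(\xi x)^{-1/2})(1\wedge(\xi y)^{-1/2})$, so the double integral factorizes and is controlled by $\tfrac{\xi}{2}\bigl(C^{1/2}c(T)\,\xi^{(\gamma-1)/2}\bigr)^2\le c_2(T)\xi^\gamma$ by Step~1.

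The main obstacle is the second step: passing from the measure condition to the $L^2$ regularity of the kernel. Although it is essentially the classical calculation underlying the examples treated in \cite[Section~4]{abi2019markovian}, executing it here requires the correct combination of the elementary inequalities for $(1-e^{-u})$ and $1/(x+y)$ so that the integrand factorizes into terms of the form $(1\wedge(\xi x)^{-1/2})(1\wedge(\xi y)^{-1/2})$ directly controlled by Step~1. Step~1 itself is transparent once one exploits the crucial observation that each barycenter $x_i^n$ lives in the same interval as $\supp(\mu|_{[\eta_{i-1}^n,\eta_i^n)})$, so that the bounded-ratios assumption \eqref{eq:condappetapartition} immediately yields the desired comparison between $\mu^n$ and $\mu$.
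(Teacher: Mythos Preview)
Your proof is correct and follows essentially the same two-step strategy as the paper: first transfer the measure estimate \eqref{eq:condmuapp} from $\mu$ to $\mu^n$ uniformly in $n$, then deduce the $\Lcal^2$ regularity of $K^n$ from this estimate (the content of Lemma~\ref{lem:appendix}). The only cosmetic differences are that the paper packages Step~2 as a separate lemma and proves it via Minkowski's integral inequality rather than your Fubini/factorization argument, and that the paper alludes to Jensen's inequality for Step~1 whereas your direct ratio comparison using \eqref{eq:condappetapartition} is arguably cleaner.
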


To prove Theorem \ref{thm:appendix} we use the following lemma.
\begin{lemma}\label{lem:appendix}
Suppose that $\mu$ is a nonnegative Borel measure $\mu$ on $\R_+$ such that \eqref{eq:condmuapp} holds. Let $K$ be the corresponding completely monotone kernel as in \eqref{eq:bernstein}. Then $K$ satisfies condition \ref{def:Kcal:1} in Definition \ref{def:Kcal}, with the locally bounded function $2c^2$ and the same constant $\gamma$ as in \eqref{eq:condmuapp}. 
\end{lemma}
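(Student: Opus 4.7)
My plan is to exploit Bernstein's representation $K(t) = \int_{\R_+} e^{-xt}\mu(\dd x)$: Fubini--Tonelli reduces both terms in condition \ref{def:Kcal:1} of Definition \ref{def:Kcal} to double integrals against $\mu\otimes\mu$, and the hypothesis \eqref{eq:condmuapp} can then be invoked after pointwise bounds that factor the integrand into a product of a function of $x$ alone and a function of $y$ alone. Explicitly,
\[
\int_0^\xi K(t)^2 \dd t = \int_{\R_+^2} \frac{1-e^{-(x+y)\xi}}{x+y}\,\mu(\dd x)\mu(\dd y),
\]
and, using $K(t+\xi)-K(t) = -\int e^{-xt}(1-e^{-x\xi})\mu(\dd x)$ and extending the range of integration to all of $[0,\infty)$,
\[
\int_0^{T-\xi}(K(t+\xi)-K(t))^2 \dd t \le \int_{\R_+^2} \frac{(1-e^{-x\xi})(1-e^{-y\xi})}{x+y}\,\mu(\dd x)\mu(\dd y).
\]

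The key technical step is to bound each integrand by a multiple of $\xi\,(1 \wedge (\xi x)^{-1/2})(1 \wedge (\xi y)^{-1/2})$. For the first integrand, $1-e^{-u} \le 1 \wedge u$ gives $\frac{1-e^{-(x+y)\xi}}{x+y} \le \xi\,(1 \wedge ((x+y)\xi)^{-1})$, and I would then establish the elementary pointwise inequality
\[
1 \wedge \frac{1}{(x+y)\xi} \le \left(1 \wedge \frac{1}{\sqrt{x\xi}}\right)\left(1 \wedge \frac{1}{\sqrt{y\xi}}\right)
\]
by a short case analysis on whether $x\xi$ and $y\xi$ exceed $1$ (the only nontrivial case uses AM--GM, $x+y \ge 2\sqrt{xy}$, to handle the regime where both exceed $1$). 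For the second integrand, AM--GM in the form $1/(x+y) \le 1/(2\sqrt{xy})$ combined with $1-e^{-u} \le \sqrt{u} \wedge 1$ yields $\frac{1-e^{-x\xi}}{\sqrt{x}} \le \sqrt{\xi}\,(1 \wedge (\xi x)^{-1/2})$, so the integrand is dominated by $\tfrac{\xi}{2}(1 \wedge (\xi x)^{-1/2})(1 \wedge (\xi y)^{-1/2})$.

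Once both integrands are bounded by such products, Tonelli factors the double integrals into the square of $\int (1 \wedge (\xi x)^{-1/2})\mu(\dd x)$, which by the hypothesis \eqref{eq:condmuapp} applied with $\varepsilon = \xi \le T$ is at most $c(T)\,\xi^{(\gamma-1)/2}$. Summing the two contributions gives a total bounded by $\tfrac{3}{2}c(T)^2\xi^{\gamma} \le 2c(T)^2 \xi^{\gamma}$, yielding condition \ref{def:Kcal:1} with locally bounded constant $2c^2$. The only non-routine point is the case-splitting pointwise bound for $1 \wedge ((x+y)\xi)^{-1}$; the remainder is direct application of Fubini--Tonelli and the assumed decay on $\mu$.
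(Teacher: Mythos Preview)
Your proof is correct and reaches the same constant $2c(T)^2$ as the paper, but the route is slightly different. The paper applies Minkowski's integral inequality directly:
\[
\|K\|_{\Lcal^2(0,\xi)}\le \int_{\R_+}\|{\rm e}^{-\cdot\,x}\|_{\Lcal^2(0,\xi)}\,\mu(\dd x)=\int_{\R_+}\sqrt{\tfrac{1-e^{-2x\xi}}{2x}}\,\mu(\dd x)\le \sqrt{\xi}\int_{\R_+}(1\wedge(\xi x)^{-1/2})\,\mu(\dd x),
\]
and similarly for $\|\Delta_\xi K-K\|_{\Lcal^2(0,T)}$, which yields a \emph{single} $\mu$--integral at once and then invokes \eqref{eq:condmuapp}. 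You instead expand the square via Fubini--Tonelli to a $\mu\otimes\mu$--integral and then need the extra pointwise factorization $1\wedge((x+y)\xi)^{-1}\le(1\wedge(x\xi)^{-1/2})(1\wedge(y\xi)^{-1/2})$ to separate variables. Both arguments are elementary and end at the same place; the paper's Minkowski route is shorter because the factorization is built into the triangle inequality, whereas your approach makes the separation explicit at the cost of a small case analysis. Your treatment of the second term, using $1/(x+y)\le 1/(2\sqrt{xy})$ and $1-e^{-u}\le\sqrt{u}\wedge1$, is in fact essentially identical to what the paper's ``similar argument'' amounts to once Minkowski is unpacked.
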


\begin{proof}
Note that
\begin{equation*}
\|K\|_{\Lcal^2(0,\eps)}\leq \int_0^\infty \|{\rm e}^{-\cdot x}\|_{\Lcal^2(0,\eps)}\mu(\dd x)=\int_0^\infty \sqrt{\frac{1-{\rm e}^{-2x\eps}}{2x}}\mu(\dd x)\leq \eps^{\frac12}\int_0^\infty (1\wedge (\eps x)^{-\frac12})\mu(\dd x).
\end{equation*}
This implies, by \eqref{eq:condmuapp}, that $\|K\|_{\Lcal^2(0,\eps)}\leq c(T)\epsilon^{\frac{\gamma}{2}}$, $\eps\le T$. A similar argument shows that $\|\Delta_{\eps}K-K\|_{\Lcal^2(0,T)}\leq c(T)\epsilon^{\frac{\gamma}{2}}$. The conclusion readily follows from these observations.
\end{proof}

\begin{proof}[Proof of Theorem \ref{thm:appendix}]
According to Lemma \ref{lem:appendix} it is enough to show that there is a locally bounded function $\tilde{c}:\R_+\to\R_+$ such that for all $n\ge 1$,
\[
\int_{\R_+}(1\wedge (\eps x)^{-\frac12})\mu^n(\dd x)\leq \tilde{c}(T) \eps^{\frac{\gamma-1}{2}},\quad T>0,\,\eps\leq T,
\]
where $\mu^n$ is a sum of Dirac measures as in \eqref{eq:sumdiract}. This is a routine verification, using the definition of $c_i^n,x_i^n$ in \eqref{eq:weightscentermass}, Jensen's inequality, and conditions \eqref{eq:condmuapp} and \eqref{eq:condappetapartition}. For the sake of brevity, we omit the details.
\end{proof}
\begin{remark}\label{rem:appendix}
Let $K$ be the fractional kernel \eqref{frackernel} and consider the geometric partition $\eta_i^n=r_n^{i-\frac{n}{2}}$, $i=0,\ldots, n$. It is easy to check that the hypotheses of Theorem \ref{thm:appendix} hold with $\gamma=2\alpha-1$ as long as $\sup_{n\ge 1} r_n<\infty.$
\end{remark}

\bibliographystyle{abbrv}
\bibliography{biblio}

\end{document}